\definecolor{mygray}{gray}{0.85}
\theoremstyle{plain}
\theoremstyle{definition}
\newcommand{\norm}[2][]{\ensuremath{\left\Vert #2 \right\Vert}}
    \newcommand{\BC}{{\mathbb {C}}}
     \newcommand{\BN}{{\mathbb {N}}}
     \newcommand{\BR}{{\mathbb {R}}}
     \newcommand{\BZ}{{\mathbb {Z}}}
     \newcommand{\CN}{{\mathcal {N}}}
    \newcommand{\CO}{{\mathcal {O}}} 
     \newcommand{\CT}{{\mathcal {T}}}
    \newcommand{\Dim}{{\mathrm{dim}}}
    \theoremstyle{plain}
    \newtheorem{thm}{Theorem}[section] \newtheorem{cor}[thm]{Corollary}
    \newtheorem{lem}[thm]{Lemma}  \newtheorem{prop}[thm]{Proposition}
     \newtheorem{defn}[thm]{Definition}
     \newtheorem {rem}[thm]{Remark}
\newcommand \footnoteONLYtext[1]
	\let \mybackup \thefootnote
	\let \thefootnote \relax
	\let \thefootnote \mybackup
	\let \mybackup \imareallyundefinedcommand
\title{On the Last-iterate Convergence in Time-varying Zero-sum Games: Extra Gradient Succeeds where Optimism Fails}
\author{%
  Yi Feng\\
  SUFE\thanks{Shanghai University of Finance and Economics}\\
  \texttt{2021310186@live.sufe.edu.cn} \\
  \And
  Hu Fu \\
  SUFE \\
  fuhu@mail.shufe.edu.cn \\
  \AND
  Qun Hu\\
  SUFE \\
  huqun29@163.com \\
  \And
  Ping Li \\
  SUFE \\
  lping0423@163.com \\
  \And
  Ioannis Panageas \\
  University of California, Irvine \\
  ipanagea@ics.uci.edu \\
  \And
  Bo Peng\\
  SUFE\\
  ahqspb@163.sufe.edu.cn\\
  \And
  Xiao Wang \\
  SUFE \\
  wangxiao@mail.shufe.edu.cn\\
}
\begin{document}
\footnoteONLYtext{Authors are listed according to the alphabetical order.}

\maketitle

\begin{abstract}
Last-iterate convergence has received extensive study in two player zero-sum games starting from bilinear, convex-concave up to settings that satisfy the MVI condition. Typical methods that exhibit last-iterate convergence for the aforementioned games include extra-gradient (EG) and optimistic gradient descent ascent (OGDA). However, all the established last-iterate convergence results hold for the restrictive setting where the underlying repeated game does not change over time.
Recently, a line of research has focused on regret analysis of OGDA  in time-varying games, i.e., games where payoffs evolve with time; the last-iterate behavior of OGDA and EG in time-varying environments remains unclear though. In this paper, we study the last-iterate behavior of various algorithms in two types of unconstrained, time-varying, bilinear \textcolor{black}{zero-sum} games: periodic and \textcolor{black}{convergent perturbed games}. These models expand upon the usual repeated game formulation and incorporate external environmental factors, such as the seasonal effects on species competition and vanishing external noise. In periodic games, we prove that EG will converge while OGDA and momentum method will diverge. This is quite surprising, as to the best of our knowledge, it is the first result that indicates EG and OGDA have qualitatively different last-iterate behaviors and do not exhibit similar behavior. In \textcolor{black}{convergent perturbed games}, we prove all these algorithms converge as long as the game itself stabilizes with a faster rate than $1/t$.



\end{abstract}

\section{Introduction}


A central problem in game theory and min-max optimization is to come up with a pair of vectors $(x,y)$ that solves
\begin{align}
\min_{x\in \mathcal{X}}\max_{y \in \mathcal{Y}} x^{\top} A y, \label{eq:min-max problem}
\end{align}
where $\mathcal{X} \subset \mathbb{R}^n$ and $\mathcal{Y} \subset \mathbb{R}^m$ are convex sets, and $A$ is a $n\times m$ payoff matrix. The above captures two-player zero-sum games in which $x^{\top} A y$ is interpreted as the payment of the ``min player'' $x$ to the ``max player'' $y$. 
If $\mathcal{X} = \mathbb{R}^n$ and $\mathcal{Y} = \mathbb{R}^m$ the setting is called \emph{unconstrained}, otherwise it is \emph{constrained}.  
Soon after the minimax theorem of Von Neumann was established (for compact $\mathcal{X},\mathcal{Y}$), learning dynamics such as fictitious play (\cite{brown1951iterative}) were proposed for solving min-max optimization problems. Blackwell's approachability theorem (~\cite{B56}) further propelled the field of online learning, which lead to the discovery of several learning algorithms; such learning methods include multiplicative-weights-update method, online gradient descent/ascent and their optimistic variants and extra-gradient methods.


\paragraph{Last Iterate Convergence.} There have been a vast literature on whether or not the aforementioned dynamics converge in an average sense or exhibit last-iterate convergence when applied to zero-sum games. Dating back to Nesterov (\cite{smooth05}), there have been quite a few results showing that online learning algorithms have last-iterate convergence to Nash equilibria in zero-sum games. Examples include optimistic multiplicative weights update (\cite{DaskalakisP19, WeiLZL21}), optimistic gradient descent ascent (OGDA) (\cite{DISZ17, LiangS18}) (applied even to GANs) for unconstrained zero-sum games, OGDA for constrained zero-sum games (\cite{WeiLZL21, cai22, guathier22}) and extra-gradient methods (\cite{MertikopoulosLZ19, cai22, GorbunovLG22}) using various techniques, including sum of squares.  

Nevertheless, all aforementioned results assume that the underlying repeated zero-sum game remains invariant throughout the learning process. In many learning environments that assumption is unrealistic, see (\cite{Duvocelle18:Multi, mai18, Cardoso19:Competing}) and references therein. One more realistic learning setting is where the underlying game is actually changing; this game is called time-varying. There have been quite a few works that deal with time-varying games, where they aim at analyzing the duality gap or dynamic regret (\cite{Zhang22:No}) and references therein for OGDA and variants. However, in all these prior works, last-iterate convergence has not been investigated; the main purpose of this paper is to fill in this gap. We aim at addressing the following question:

\begin{center}
  \emph{Will learning algorithms such as optimistic gradient descent ascent or extra-gradient exhibit last-iterate converge in time-varying zero-sum games?}
\end{center}

\paragraph{Our contributions.}

We consider unconstrained two-player zero-sum games with a time-varying payoff matrix (that is the payoff matrix $A_t$ depends on time $t$)
\begin{align}
\tag{Time-varying zero sum game}
\min_{x \in \BR^n} \max_{y \in \BR^m} x^{\top}A_ty,
\end{align}

in which the payoff matrix $A_t$ varies with time in the following two ways:

\begin{itemize}
    \item \textbf{Periodic games:} $A_t$ is a periodic function with period $T$, i.e., $A_{t+T} = A_t$.
    \item \textbf{Convergent perturbed games:} $A_t = A + B_t$, $\lim_{t \to \infty} B_t = 0$.
\end{itemize}

In a repeated time-varying zero-sum game, players choose their learning algorithms and repeatedly 
play the zero-sum game. In the $t$-th round, when the players use strategy $(x_t,y_t)$, they receive their payoff $ -  A_t^{\top} x_t$ and $A_t y_t$. 

In this paper we show the following results:
\begin{itemize}
\item For \textbf{periodic games:} We prove that when two players use extra-gradient, their strategies will converge to the common Nash equilibrium of the games within a period with an exponential rate, see Theorem \ref{thm:EG-Period} for details. Additionally, we provide an example where optimistic gradient descent ascent and negative momentum method diverge from the equilibrium with an exponential rate, see Theorem \ref{thm:OGDMM-Period}.

To the best of our knowledge, this is the first result that provides a clear separation between the behavior of extra-gradient methods and optimistic gradient descent ascent.

\item For \textbf{convergent perturbed games:} Assuming $\sum^{\infty}_{t=1}\lVert B_t \lVert_2 $ is bounded, we prove that the extra-gradient, optimistic gradient descent ascent, and negative momentum method all converge to the Nash equilibrium of the game defined by payoff matrix $A$ with a rate determined by $\{B_t\}_t$ and singular values of $A$, see Theorem \ref{thm:Condition-Pertub}. Furthermore, we prove that extra-gradient will asymptotically converge to equilibrium without any additional assumptions on perturbations besides $\lim_{t \to \infty}B_t = 0$, see Theorem \ref{thm:EG-Pertub}. 
\end{itemize}

\paragraph{Related work on time-varying games.}The closest work to ours that argues about stabilization of mirror descent type dynamics on convergent strongly monotone games is (\cite{ Duvocelle18:Multi}). Most of the literature has been focused on proving either recurrence/oscillating behavior of learning dynamics in time-varying periodic games (\cite{mai18, fiez2021online} ) and references therein or performing regret analysis (\cite{DBLP:journals/corr/abs-1912-01698, Cardoso19:Competing,Zhang22:No}). In particular, the latter work extends results on RVU (\cite{SyrgkanisALS15}) bounds to argue about dynamic regret (\cite{Zinkevich03}). 

\paragraph{Technical Comparison.}We investigate the last iterate behaviors of learning dynamics through their formulation of linear difference systems. This approach has also been used to establish last iterate convergence results for learning dynamics in time-independent games
(\cite{zhang2019convergence,liang2019interaction,gidel2019negative}). One common key point of these works is to prove that a certain matrix has no eigenvalues with modulus larger than $1$, then the last iterate convergence can be guaranteed by the general results of autonomous difference systems. However, this method cannot be generalized to the time-varying games where the corresponding difference systems are non-autonomous. In particular, the dynamical behavior of a  non-autonomous system is not determined by the eigenvalues of a single matrix. In fact, it is difficult to establish convergence/divergence results even for non-autonomous system with special structures, such as  periodic or perturbed systems. In this paper, to get such results, we employ both general results in linear difference systems, such as Floquet theorem and Gronwall inequality, and special structures of the difference systems associated to learning dynamics.

\paragraph{Organization.} In Section \ref{Preliminaries}, we present the necessary background for this work. The main results are stated in Section \ref{main_result}. In Section \ref{experiments}, we present numerical experiments and in Section \ref{Discussion}, we conclude with a discussion and propose some future research problems.

\section{Preliminaries}\label{Preliminaries}

\subsection{Definitions}

\textbf{Zero-sum blinear game.} An unconstrained two players zero-sum game consists of two agents $\CN = \{1,2\}$, and losses of both players are determined via payoff matrix $A \in \BR^{n \times m}$. Given that player $1$ selects strategy $x \in \BR^{n}$ and player $2$ selects strategy $y \in \BR^{m}$, player 1 receives loss $u_1(x,y) = \langle x, Ay \rangle$, and player 2 receives loss $u_2(x,y) = - \langle y, A^{\top}x \rangle$.  Naturally, players want to minimize their loss resulting the following min-max problem: 
\begin{equation}
\min_{x \in \BR^{n}} \max_{y \in \BR^{m}} x^{\top} Ay \tag{Zero-Sum Game}
\end{equation}
Note that the set $\{(x^*,y^*) | A^{\top}x^*=0, Ay^*=0\}$ represents the set of equilibrium of the game.

\textbf{Time-varying zero-sum bilinear game.} In this paper, we study games in which the payoff matrices vary over time and we define two kinds of such time-varying games.

\begin{defn}[Periodic games]\label{pg2} A periodic game with period $T$ is an infinite sequence of zero-sum bilinear games 
$\{A_t\}^{\infty}_{t = 0} \subset  \BR^{n \times m}$, and $A_{t+T} = A_t$ for all $t \ge 0$. 
\end{defn}
Note that the periodic game defined here is the same as Definition 1 in (\cite{fiez2021online}) except for the fact that we are considering a discrete time setting. Therefore, we do not make the assumption that payoff entries are smoothly dependent on $t$.

\begin{defn}[Convergent perturbed games]\label{cg} A convergent perturbed game is an infinite sequence of zero-sum bilinear games 
$\{A_t\}^{\infty}_{t = 0} \subset  \BR^{n \times m}$, and $\lim_{t \to \infty}A_t = A $ for some $A \in \BR^{n \times m}$. Equivalently, write $A_t = A + B_t$, then
$\lim_{t \to \infty} B_t = 0$. We will refer to the zero-sum bilinear game defined by $A$ as \textbf{stable game}.
\end{defn}


\textbf{Learning dynamics in games.} In this paper, we consider three kinds of learning dynamics : optimistic gradient descent ascent (OGDA), extra-gradient (EG) , and negative momentum method. All these methods possess the last-iterate convergence property in repeated game with a time-independent payoff matrix, as demonstrated in previous literature. However, here we state their forms within a time-varying context.

\textbf{Optimistic gradient descent-ascent.} 
We study the optimistic  descent ascent method (OGDA) defined as follows:
\begin{align*}\label{Optimistic Gradient}
\tag{OGDA}
& x_{t+1} = x_{t} - 2 \eta A_{t} y_{t} + \eta A_{t-1} y_{t-1}, \\
& y_{t+1} = y_{t} + 2 \eta A^{\top}_{t} x_{t} - \eta A^{\top}_{t-1} x_{t-1}.
\end{align*}

Optimistic gradient descent ascent method was proposed in (\cite{popov1980modification}), and here we choose the same parameters as  (\cite{DISZ17}). The last iterate convergence property of OGDA in unconstrained bilinear game with a time-independent payoff was proved in (\cite{DISZ17}). Recently, there are also works analyzing the regret behaviors of OGDA under a time varying setting (\cite{Zhang22:No,anagnostides2023convergence}).

\textbf{Extra gradient.}
We study the extra gradient descent ascent method (EG) defined as follows:

\begin{align*}\label{Extra Gradient}
\tag{EG}
&x_{t+\frac{1}{2}} = x_t - \gamma A_ty_t, \ \ y_{t+\frac{1}{2}} = y_t + \gamma A_t^{\top}x_t, \\
&x_{t+1} = x_t - \alpha A_ty_{t+\frac{1}{2}},\ \  y_{t+1} = y_t + \alpha A_t^{\top}x_{t+\frac{1}{2}}.
\end{align*}

Note that the extra-gradient method first calculates an intermediate state before proceeding to the next state. Extra-gradient was firstly proposed in (\cite{korpelevich1976extragradient}) with the restriction that $\alpha = \gamma$. Here we choose the parameters same as in (\cite{LiangS18}), where the linear convergence rate of extra-gradient in the bilinear zero-sum game with time-independent was also proven. Convergence of extra-gradient on convex-concave game was  analyzed in (\cite{nemirovski2004prox,monteiro2010complexity}), and convergence guarantees for special non-convex-non-concave game was provided in (\cite{MertikopoulosLZ19}).

\textbf{Negative momentum method.} We study the alternating negative momentum method (NM), defined as follows:
\begin{align*}\label{Negative Momentum}
\tag{NM}
& x_{t+1} = x_{t} -  \eta A_{t} y_{t} + \beta_1(x_{t} - x_{t-1}) , \\
& y_{t+1} = y_{t} +  \eta A_{t+1} ^{\top}x_{t+1}+ \beta_2(y_{t} - y_{t-1}),
\end{align*}
where $\beta_1,\beta_2 \le 0$ are the momentum parameters.

Applications of negative momentum method in game optimization was firstly proposed in  (\cite{gidel2019negative}). Note that the algorithm has an alternating implementation: the update rule of $y_{t+1}$ uses the payoff
$A_{t+1} ^{\top}x_{t+1}$, thus in each round, the second player chooses his strategy after the first player has chosen his. It was shown in (\cite{gidel2019negative}) that both the negative momentum parameters and alternating implementations are crucial for convergence in bilinear zero-sum games with time-independent payoff matrices. Analysis of the convergence rate of negative momentum method in strongly-convex strongly-concave games was provided in (\cite{zhang2021suboptimality}).

\subsection{Difference systems}

The analysis of the last-iterate behavior of learning algorithms can be reduced to analyzing the dynamical behavior of the associated linear difference systems  (\cite{zhang2019convergence,daskalakis2018limit}). When the payoff matrix is time-independent, the associated difference systems is autonomous. However, as we are studying games with time-varying payoff matrices, we have to deal with non-autonomous difference systems. In general, the convergence behavior of non-autonomous difference systems is much more complicated than that of autonomous ones (\cite{colonius2014dynamical}).

\textbf{Linear difference system.}
Given a sequence of iterate matrices $ \{\mathcal{A}_t\}^{\infty}_{t = 1} \subset \BR^{n \times n}$ and initial condition $X_0 \in \BR^n$, a linear difference system has form
\begin{align}\label{lds}
\tag{Linear difference system}
    X_{t+1} = \mathcal{A}_t X_t.
\end{align}

If $\mathcal{A}_t \equiv \mathcal{A}$ is a matrix independent of time, the system is called an \textbf{autonomous system}, otherwise, it is called a \textbf{non-autonomous system}. 
We care about the asymptotic behavior of $X_{t}$, that is, what can we say about $X_{t}$ as $t \to \infty$. 
\begin{defn}
    A point $X$ is called \textbf{asymptotically stable} under the linear difference system  if
\begin{align*}
    \exists \ \delta>0, \forall \ Y,\  s.t. \ \lVert Y- X \lVert_2 \le \delta \Rightarrow 
    \lim_{t \to \infty} \lVert (\prod^{\infty}_{t = 1} \mathcal{A}_t ) Y-  X \lVert_2 = 0.
\end{align*}
Moreover, $X$ is called \textbf{exponentially asymptotically stable} if the above limit has an exponentially convergence rate, i.e., $\exists \ \alpha \in (0,1)$, such that
\begin{align*}
      \lVert ( \prod^{s}_{t = 1} \mathcal{A}_t ) Y-  X \lVert_2 \le \alpha^s \lVert  Y-  X \lVert_2.
\end{align*}
\end{defn}

It is well known that for autonomous linear systems, being asymptotically stable is equivalent to being exponentially asymptotically stable, as shown in Thm 1.5.11 in (\cite{colonius2014dynamical}). However, this equivalence does not hold for non-autonomous systems. \footnote{To gain intuition of the differences between autonomous and non-autonomous system, consider the following  simple 1-dimension example:
$x_{t+1} = (1- \frac{1}{t+1}) x_{t}$. $0$ is a stationary point of this system, but every initial points converges to $0$ with a rate $\CO( \frac{1}{t})$, thus $0$ is asymptotically stable but not exponentially asymptotically stable . }

Linear difference system has a formal solution
$
    X_{T+1} = \prod^{T}_{t=0}\mathcal{A}_t X_0.
$
However, such a representation does not yield much information about the asymptotic behavior of solution as $t \to \infty$, except in the case of autonomous system. In this paper, we mainly consider two classes of non-autonomous linear difference systems: if the iterate matrix $\mathcal{A}_t$ is a periodic function of $t$, the system is called a periodic system; and if $\mathcal{A}_t$ has a limit as $t$ tends to infinity, the system is called a perturbed system. 

\textbf{Periodic linear system.}
If the iterate matrix $\mathcal{A}_t$ in a linear difference system satisfies $\mathcal{A}_t=\mathcal{A}_{t+\CT}$, $\forall t \in \BZ$, 
then the system is called a periodic linear system. Denote $\widetilde{\mathcal{A}} = \prod^\CT_{j=1}\mathcal{A}_{\CT-j}$. For a $T$-periodic equation, the eigenvalues $\alpha \in \BC$ of $\widetilde{\mathcal{A}} $ is called the Floquet multipliers, and the Floquet exponents are defined by $\lambda_j = \frac{1}{\CT} \ln(\lvert \alpha_j \lvert)$. The following Floquet theorem characterizes the stability of a periodic linear difference equation.

\begin{prop}[Floquet theorem, (\cite{colonius2014dynamical})]\label{Floquet} 
The zero solution of a periodic linear difference equation is asymptotically stable if and only if all Floquet exponents are negative.
\end{prop}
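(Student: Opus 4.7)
My plan is to exploit the $T$-periodicity of $\mathcal{A}_t$ to reduce the non-autonomous system to an autonomous one sampled every $T$ steps, and then invoke the classical spectral-radius criterion for autonomous linear systems.

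First, I would observe that by telescoping the recursion $X_{t+1}=\mathcal{A}_t X_t$ over one full period, the solution satisfies
\begin{equation*}
X_{(k+1)T} \;=\; \bigl(\mathcal{A}_{T-1}\mathcal{A}_{T-2}\cdots \mathcal{A}_0\bigr)\,X_{kT} \;=\; \widetilde{\mathcal{A}}\,X_{kT},
\end{equation*}
because the $T$-periodicity of $\{\mathcal{A}_t\}$ means the product of any $T$ consecutive iterate matrices equals $\widetilde{\mathcal{A}}=\prod_{j=1}^{T}\mathcal{A}_{T-j}$. Hence the subsequence $\{X_{kT}\}_{k\ge 0}$ is driven by the \emph{autonomous} linear system $Z_{k+1}=\widetilde{\mathcal{A}}Z_k$, and $X_{kT}=\widetilde{\mathcal{A}}^k X_0$.

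Next I would invoke the analogue of Thm.\ 1.5.11 of (\cite{colonius2014dynamical}) for autonomous systems: the zero solution of $Z_{k+1}=\widetilde{\mathcal{A}}Z_k$ is asymptotically stable if and only if the spectral radius $\rho(\widetilde{\mathcal{A}})<1$, i.e.\ all eigenvalues $\alpha_j$ of $\widetilde{\mathcal{A}}$ satisfy $|\alpha_j|<1$. By the definition $\lambda_j=\frac{1}{T}\ln|\alpha_j|$, this is equivalent to $\lambda_j<0$ for every Floquet exponent. This already proves the ``only if'' direction, since the sampled subsequence $\{X_{kT}\}$ must converge to zero whenever the full sequence does.

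For the ``if'' direction, I need to upgrade convergence along the subsequence $\{kT\}$ to convergence along all $t$. Writing $t = kT + r$ with $0 \le r < T$, periodicity gives
\begin{equation*}
X_{kT+r} \;=\; \mathcal{A}_{r-1}\mathcal{A}_{r-2}\cdots \mathcal{A}_0 \,\widetilde{\mathcal{A}}^k\,X_0,
\end{equation*}
and the prefactor ranges over only finitely many matrices (one for each $r\in\{0,\dots,T-1\}$), so its operator norm is bounded by some constant $M$. Combined with $\|\widetilde{\mathcal{A}}^k X_0\|_2\to 0$ (in fact with an exponential rate $\rho(\widetilde{\mathcal{A}})^k$ up to polynomial factors from the Jordan form), this yields $\|X_t\|_2\to 0$ for all $t$, completing the equivalence.

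The only subtlety I expect is the routine bookkeeping to make the Jordan-block argument quantitatively explicit if one wants an exponential rate; asymptotic stability alone only needs $\rho(\widetilde{\mathcal{A}})<1$ and the uniform boundedness of the finitely many prefactors, so this step is essentially automatic and should not be a real obstacle.
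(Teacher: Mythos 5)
Your proposal is correct: passing to the monodromy matrix $\widetilde{\mathcal{A}}=\prod_{j=1}^{T}\mathcal{A}_{T-j}$, invoking the autonomous spectral-radius criterion ($\rho(\widetilde{\mathcal{A}})<1$, equivalently all Floquet exponents negative), and absorbing the finitely many within-period prefactors $\mathcal{A}_{r-1}\cdots\mathcal{A}_0$ into a uniform constant is the standard proof of this criterion. The paper itself quotes the proposition from \cite{colonius2014dynamical} without proof, and your monodromy-matrix reduction is exactly the idea it redeploys in its own argument for Theorem \ref{thm:EG-Period}, so your route matches the paper's approach.
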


Although Floquet theorem provides a method to determine whether a periodic system  converges in general, considerable further work may be necessary in order to obtain explicit convergence criteria for specific equation. That is because even if we know the modulus of the largest eigenvalue of each $\mathcal{A}_t$, it is usually difficult to compute the the modulus of the largest eigenvalue of each $\prod^{\CT}_{t=0}\mathcal{A}_t$ due to the complex behavior of eigenvalues under matrix multiplication.

\textbf{Perturbed linear system.} If the iterative matrix $\mathcal{A}_t$ in a linear difference system satisfies $\mathcal{A}_t=\mathcal{A} +\mathcal{B}_t $ and $\lim_{t \to \infty}  \mathcal{B}_t =0$, then the system is called a perturbed linear system. The convergence behavior of a 
perturbed linear system is not clear in the literature. A general result in this direction is the following Perron's theorem:

\begin{thm}[Perron's theorem, (\cite{pituk2002more})]\label{thm:Perron} If $X_n$ is a solution of a perturbed linear system, then either $X_n = 0$ for all sufficient large $n$ or $\rho = \lim_{n \to \infty} \sqrt[n]{\lVert X_n \lVert_2}$ exists and is equal to the modulus of one of the eigenvalues of matrix $\mathcal{A}$.
\end{thm}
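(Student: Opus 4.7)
My strategy is to show, via induction on the number of distinct eigenvalue moduli, that every nonzero trajectory of the perturbed system is asymptotically governed by one of the invariant subspaces of $\mathcal{A}$ (grouped by modulus) and therefore inherits the corresponding modulus as its growth rate. First I would reduce to the case where $\mathcal{A}$ is in Jordan canonical form: write $\mathcal{A} = PJP^{-1}$ and set $Y_n := P^{-1} X_n$, so the equation becomes $Y_{n+1} = (J + C_n) Y_n$ with $C_n := P^{-1} \mathcal{B}_n P \to 0$. Since $\|Y_n\|^{1/n}$ and $\|X_n\|^{1/n}$ differ by a multiplicative factor tending to $1$, it suffices to analyse the $Y$-recursion. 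Let $\mu_1 > \mu_2 > \cdots > \mu_s \geq 0$ be the distinct moduli of the eigenvalues of $\mathcal{A}$, and group the Jordan blocks accordingly as $J = \mathrm{diag}(J_1, \ldots, J_s)$, where every eigenvalue of $J_k$ has modulus $\mu_k$. We may also assume $X_n \neq 0$ for all $n$, otherwise the first alternative of the theorem already holds.

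Next I would establish the uniform upper bound $\limsup_n \|Y_n\|^{1/n} \leq \mu_1$. For any $\varepsilon > 0$, a standard rescaling of basis vectors inside each Jordan cell yields an equivalent norm $\|\cdot\|_\varepsilon$ with $\|J\|_\varepsilon \leq \mu_1 + \varepsilon/2$; since $C_n \to 0$, for $n$ large we have $\|J + C_n\|_\varepsilon \leq \mu_1 + \varepsilon$, and iterating gives $\|Y_n\|_\varepsilon \leq K(\mu_1 + \varepsilon)^n$, whence the claim on letting $\varepsilon \downarrow 0$. When $\mathcal{A}$ is invertible ($\mu_s > 0$), the same argument applied to $Y_n = (J + C_n)^{-1} Y_{n+1}$ in a norm adapted to $J^{-1}$ yields $\liminf_n \|Y_n\|^{1/n} \geq \mu_s$, so already $\rho \in [\mu_s, \mu_1]$ whenever $\rho$ exists.

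To show that the limit exists and equals one of the $\mu_k$, I would induct on $s$. The base case $s = 1$ follows from the preceding paragraph (upper and lower bounds coincide, or both equal $0$ when $\mathcal{A}$ is nilpotent). For the inductive step, split $Y_n = (U_n, V_n)$ according to the fast block $J_1$ and the remaining slow block $\mathrm{diag}(J_2, \ldots, J_s)$. The key dichotomy I would prove is: either $U_n$ is exponentially dominant, forcing $\lim \|Y_n\|^{1/n} = \mu_1$, or $\|U_n\|/\|V_n\|$ decays exponentially, in which case $\|Y_n\|^{1/n}$ has the same limit as $\|V_n\|^{1/n}$, and the induction hypothesis applied to the slow subsystem identifies this limit with some $\mu_k$, $k \geq 2$.

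The main obstacle is making this dichotomy rigorous, because the off-diagonal blocks of $C_n$ couple the fast and slow coordinates, and a priori the perturbation could continually feed slow initial data into the fast direction. My plan is to overcome this by constructing an asymptotic invariant splitting via a Lyapunov--Perron-type argument: find a time-dependent linear map $h_n$ from the slow subspace to the fast subspace with $\|h_n\| \to 0$, realised as the unique fixed point of a contraction on an appropriately weighted sequence space whose contractivity is driven by the spectral gap $\mu_1 > \mu_2$ together with the smallness of $C_n$ for large $n$. In the decoupled coordinates $\tilde{U}_n := U_n - h_n V_n$ the two blocks interact only through exponentially smaller errors, so each component is asymptotically governed by single-modulus dynamics to which the bound of the second paragraph applies on both sides. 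This closes the induction and identifies $\rho$ with the modulus of some eigenvalue of $\mathcal{A}$.
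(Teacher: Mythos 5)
The paper does not prove this statement at all: it is quoted verbatim as an external result (Perron's theorem, cited from Pituk 2002) and used as a black box, so there is no internal proof to compare against. Judged on its own terms, your plan follows the classical route to Perron-type theorems: reduce to Jordan form, group blocks by eigenvalue modulus, get the uniform bound $\limsup_n\|Y_n\|^{1/n}\le\mu_1$ from an $\varepsilon$-rescaled norm, and induct on the number of distinct moduli using a spectral-gap splitting into fast and slow blocks. That outline is sound; in particular the dichotomy you need is cleanest when phrased for the decoupled fast coordinate $\tilde U_n=U_n-h_nV_n$, which satisfies $\tilde U_{n+1}=(J_1+D_n)\tilde U_n$ with $D_n\to0$: since $\mu_1>0$ whenever $s\ge2$, the matrices $J_1+D_n$ are eventually invertible, so $\tilde U_n$ is either eventually zero (and then $U_n=h_nV_n$, $\|Y_n\|\asymp\|V_n\|$, and the induction hypothesis applies to the now genuinely unforced slow system) or never zero from some point on, in which case the two-sided single-modulus estimates force $\lim\|Y_n\|^{1/n}=\mu_1$.

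The genuine gap is that the crux of the theorem — the existence, for all large $n$, of the decoupling maps $h_n$ with $\|h_n\|\to0$ solving the relevant Riccati-type difference relation, and the verification that after the change of variables the residual coupling (the term $C_n^{21}\tilde U_n$ feeding the slow equation in your one-sided conjugation) does not spoil the induction — is asserted rather than carried out. This Lyapunov--Perron/kinematic-similarity step is exactly where all the analytic work of Perron's theorem lives: you must fix a tail $[N,\infty)$ on which $\sup_{n\ge N}\|C_n\|$ is small relative to the gap $\mu_1-\mu_2$, set up the weighted sequence space, prove contractivity, and show the fixed point tends to zero as $n\to\infty$ (not merely that it is small), since otherwise $\|Y_n\|\asymp\|V_n\|$ and the reduction of the slow block to the same hypothesis class both fail. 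Until that lemma is proved, the argument is a plan rather than a proof; with it, your induction closes and reproduces the cited result, albeit by a self-contained dynamical argument rather than by invoking Pituk's Poincar\'e--Perron machinery as the paper does.
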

This result can only guarantee the convergence of $X_n$ when all eigenvalues of $\mathcal{A}$ have modulus smaller than 1. In this case, $\lim_{n \to \infty} X_n = 0$. However, for analyzing the non-autonomous linear systems associated to learning dynamics, it is not sufficient as we will show that the stablized matrix of these systems generally has eigenvalues equal to $1$. 


\section{Main results}\label{main_result}

In this section, we present our main results. We present the relationship between learning dynamics and linear difference equation in Section \ref{ldsld}, investigate the last-iterate behaviors of learning dynamics in a periodic game in Section \ref{pg}, and investigate the last-iterate behaviors of learning dynamics in a convergent perturbed game in Section \ref{peg}. Proofs are deferred to the appendix.

\subsection{Learning dynamics as linear difference systems}\label{ldsld}
Formalizing learning dynamics as linear difference systems is useful for studying
their dynamical behaviors. In the following, we present the formulation of optimistic gradient descent ascent, extra-gradient, and negative momentum method
as linear difference systems.

\begin{prop}
Optimistic gradient descent ascent can be written as the following linear difference system: 
\begin{align}\label{eq:OGDA}
\begin{bmatrix}
x_{t+1} \\
y_{t+1} \\
x_{t} \\
y_{t}
\end{bmatrix} 
=
\begin{bmatrix}
I & -2\eta A_{t} & 0 & \eta A_{t-1} \\
2\eta A_{t}^{\top} & I & -\eta A_{t-1}^{\top} & 0\\
I & 0& 0&0 \\
0& I& 0&0 
\end{bmatrix} 
\begin{bmatrix}
x_{t} \\
y_{t} \\
x_{t-1} \\
y_{t-1}
\end{bmatrix} .
\end{align}

Extra-gradient can be written as the following linear difference system : 
\begin{equation}\label{eq:EG}
\left[
\begin{array}{c}
x_{t+1}\\
\\
y_{t+1}
\end{array}
\right]
=
\left[
\begin{array}{cc}
 I - \alpha\gamma A_tA_t^{\top}  & -\alpha A_t\\
 \\
\alpha A_t^{\top} & I - \gamma\alpha A_t^{\top}A_t
\end{array}
\right]
\left[
\begin{array}{c}
x_{t}\\
\\
y_{t}
\end{array}
\right].
\end{equation}

Negative momentum method can be written as the following linear difference system:
\begin{align}\label{eq:Negative Momentum}
\begin{bmatrix}
x_{t+1} \\
\\
y_{t+1} \\
\\
x_{t} \\
\\
y_{t}
\end{bmatrix} 
=
\begin{bmatrix}
(1+\beta_1) I& -\eta A_{t} & -\beta_1 I & 0 \\
\\
\eta(1+\beta_1) A_{t+1}^{\top} & (1+\beta_2) I-\eta^2 A_{t+1}^{\top}A_{t}  & -\eta\beta_1 A_{t+1}^{\top} & -\beta_2 I\\
\\
I & 0& 0&0 \\
\\
0& I& 0&0 
\end{bmatrix} 
\begin{bmatrix}
x_{t} \\
\\
y_{t} \\
\\
x_{t-1} \\
\\
y_{t-1}
\end{bmatrix} .
\end{align}
\end{prop}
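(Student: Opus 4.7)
The proposition is a direct reformulation claim, so the plan is to verify each of the three matrix identities by expanding the matrix--vector product on the right-hand side and matching the result against the update rules defined earlier in the excerpt. I will treat the three dynamics one at a time, in increasing order of algebraic work.

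For \ref{eq:OGDA}, the first two block rows of the matrix product give exactly the defining recursions $x_{t+1} = x_t - 2\eta A_t y_t + \eta A_{t-1} y_{t-1}$ and $y_{t+1} = y_t + 2\eta A_t^{\top} x_t - \eta A_{t-1}^{\top} x_{t-1}$; the remaining two block rows just produce the tautologies $x_t = x_t$ and $y_t = y_t$, which is the standard device for turning a second-order recursion into a first-order linear system by augmenting the state with its previous value.

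For \ref{eq:EG}, the nontrivial step is eliminating the half-step variables $x_{t+1/2}$ and $y_{t+1/2}$. I would substitute $x_{t+1/2} = x_t - \gamma A_t y_t$ and $y_{t+1/2} = y_t + \gamma A_t^{\top} x_t$ into the full-step updates to get
\begin{align*}
x_{t+1} &= x_t - \alpha A_t\bigl(y_t + \gamma A_t^{\top} x_t\bigr) = (I - \alpha\gamma A_t A_t^{\top})\,x_t - \alpha A_t\, y_t,\\
y_{t+1} &= y_t + \alpha A_t^{\top}\bigl(x_t - \gamma A_t y_t\bigr) = \alpha A_t^{\top}\, x_t + (I - \alpha\gamma A_t^{\top} A_t)\,y_t,
\end{align*}
which is exactly the claimed system. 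No augmentation is needed since EG is already first order in $(x_t,y_t)$.

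For \ref{eq:Negative Momentum}, the main piece of bookkeeping, and the only place that requires a little care, is the alternating feature of NM: the $y$-update uses $A_{t+1}^{\top} x_{t+1}$ rather than $A_t^{\top} x_t$, so I first rewrite $x_{t+1} = (1+\beta_1) x_t - \eta A_t y_t - \beta_1 x_{t-1}$ from the $x$-update, and then substitute this expression for $x_{t+1}$ into $y_{t+1} = (1+\beta_2) y_t - \beta_2 y_{t-1} + \eta A_{t+1}^{\top} x_{t+1}$, which yields
\begin{align*}
y_{t+1} = \eta(1+\beta_1) A_{t+1}^{\top} x_t + \bigl[(1+\beta_2) I - \eta^2 A_{t+1}^{\top} A_t\bigr] y_t - \eta\beta_1 A_{t+1}^{\top} x_{t-1} - \beta_2\, y_{t-1}.
\end{align*}
Combined with the rewritten $x$-update and the two trivial rows $x_t=x_t$, $y_t=y_t$ that carry the one-step memory, this matches the stated $4\times 4$ block matrix row by row. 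The ``hard'' part is purely notational — ensuring that the $A_{t+1}^{\top}$ (rather than $A_t^{\top}$) factors end up in the correct block positions once $x_{t+1}$ is eliminated; I would double-check this by comparing coefficients of each of $x_t, y_t, x_{t-1}, y_{t-1}$ separately against the second block row of the proposed matrix.
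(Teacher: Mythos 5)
Your verification is correct and is exactly the route the paper takes: the paper simply notes that the equivalence follows by writing out the matrix--vector products, which is what you do explicitly, including the elimination of the half-step iterates for EG and the substitution of $x_{t+1}$ into the $y$-update for the alternating negative momentum method. All block rows match, so the proposal is complete.
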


It is easy to verify that these linear difference systems are equivalent to their corresponding learning dynamics by directly writing down the matrix-vector product. 

We will also refer to the iterative matrix of these linear difference systems as the \textbf{iterative matrix of their corresponding learning dynamics}. In the following, we will study the convergence/divergence behaviors of these linear difference systems under the condition that $\{A_t\}_t$ is a periodic game or convergent perturbed game. Note that although an intermediate step $(x_{t+\frac{1}{2}},y_{t+\frac{1}{2}} )$ is required in extra-gradient, it is eliminated in \eqref{eq:EG}.

\subsection{Periodic games}\label{pg}

Recall that in a periodic game with period $\CT$, the payoff matrices $\{A_s\}^{\infty}_{s=1}$ satisfy $A_{s+\CT}=A_s$ for any $s>0$. Define 
\begin{align}
    \Delta_{i,t} = \lVert A_i^{\top}x_t \lVert_2 +  \lVert A_i y_t  \lVert_2
\end{align}
 for $i \in [\CT]$. As in (\cite{daskalakis2018limit}), we use $\Delta_{i,t}$ as a measurement of the distance between the current strategy $(x_t,y_t)$ and a Nash equilibrium of the zero-sum bilinear game defined by the kernel space of payoff matrix $A_i$. Note that if $(x^*,y^*)$ is a Nash equilibrium of the game defined by $A_i$, then $(A_i^{\top} x^*, A_i y^*) = (0,0)$, and
 \begin{align*}
     \Delta_{i,t} & =  \lVert A_i^{\top} (x_t -  x^*) \lVert_2 +  \lVert A_i (y_t - y^*) \lVert_2 
 \end{align*}
thus when strategy is close to equilibrium,  $\Delta_{i,t}$ will be small. Moreover, $\Delta_{i,t} = 0$ if and only if $(x_t,y_t)$ is an equilibrium. In this section, we will consider the convergence/growth rate of $\Delta_{i,t}$ at a function of $t$.

We firstly consider Extra-gradient method. Denote the iterative matrix in the linear difference form of Extra-gradient \eqref{eq:EG} as $\mathcal{A}_t$. In the following theorem, we prove that if two players use the Extra-gradient method, their strategies will converge to the common Nash equilibrium of games in the period with an exponential rate.



\begin{restatable}{theorem}{EGPeriod}
\label{thm:EG-Period}
    When two players use extra-gradient in a periodic games with period $T$, with step size $\alpha = \gamma < \frac{1}{\sigma}$ where $
        \sigma = \max \{ \sigma' | \sigma' \textnormal{\ is a singular value of\ } 
        A_i \textnormal{\ for some\ } i \in [\CT] \}.
    $
 Then 
    \begin{align*}
        \Delta_{i,t} \in \CO \left((\lambda_*)^{t/\CT} \cdot \textnormal{Poly}(t) \right), \ \forall i \in [\CT]
    \end{align*}
where $\lambda_* = \max \{ \ \lvert \lambda \lvert \ | \ \lambda  \textnormal{\ is an eigenvalue of \ } 
\left( \prod^\CT_{t=1}  \mathcal{A}_t \right), \lambda \ne 1   \}$, and $\lambda_* < 1$.
\end{restatable}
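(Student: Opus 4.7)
My plan is to recast EG on the periodic game as a periodic linear difference system and apply a Floquet-style analysis to the monodromy matrix $\widetilde{\mathcal{A}} := \prod_{t=1}^{\CT} \mathcal{A}_{\CT-t}$, whose eigenvalues and Jordan structure dictate the long-run behaviour. After $k$ full periods the iterate equals $\widetilde{\mathcal{A}}^{k} (x_0,y_0)^\top$, so everything reduces to two spectral claims: (i) every eigenvalue of $\widetilde{\mathcal{A}}$ has modulus at most $1$; (ii) the only modulus-$1$ eigenvalue is $\lambda = 1$, it is semisimple, and its eigenspace is exactly the common-equilibrium subspace $\bigcap_i \ker A_i^\top \times \bigcap_i \ker A_i$. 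Granting (i)--(ii), $\widetilde{\mathcal{A}}$ is similar to $I \oplus M$ with $\rho(M)=\lambda_* < 1$, and the target bound follows from the Jordan form of $M$.

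For (i) I show that each single-step matrix $\mathcal{A}_t$ is non-expansive under $\alpha = \gamma < 1/\sigma$. A direct expansion, standard in the EG literature, yields an identity of the form
\[
\|\mathcal{A}_t z\|_2^2 \;=\; \|z\|_2^2 \;-\; \alpha^2\bigl(1 - \alpha^2 \|A_t^\top A_t\|\bigr)\bigl(\|A_t^\top x\|_2^2 + \|A_t y\|_2^2\bigr),
\]
whose coefficient is strictly positive by the step-size assumption. Hence $\|\mathcal{A}_t\|_2 \le 1$ and $\|\widetilde{\mathcal{A}}\|_2 \le 1$; the inequality is moreover strict whenever $A_t^\top x \ne 0$ or $A_t y \ne 0$.

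For (ii) I exploit the periodic structure. If $\widetilde{\mathcal{A}} v = \lambda v$ with $|\lambda|=1$, then $\|\widetilde{\mathcal{A}}^{k} v\|_2 = \|v\|_2$ for all $k$; telescoping the strict-contraction refinement above forces the intermediate states $\mathcal{A}_{t-1}\cdots \mathcal{A}_0 v$ inside one period to satisfy $A_t^\top x_t = 0$ and $A_t y_t = 0$ at every $t$. Consequently each $\mathcal{A}_t$ acts as the identity on the $\widetilde{\mathcal{A}}$-orbit of $v$, forcing $\widetilde{\mathcal{A}} v = v$ (so $\lambda = 1$) and $v \in \bigcap_i \ker A_i^\top \times \bigcap_i \ker A_i$. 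Applying the same argument to any generalized eigenvector for $\lambda=1$ rules out nontrivial Jordan blocks, giving semisimplicity. With (i)--(ii) in hand, I decompose $(x_0,y_0)^\top = z_{\mathrm{eq}} + z^\perp$ along the spectral splitting, note $(x_{k\CT},y_{k\CT})^\top = z_{\mathrm{eq}} + \widetilde{\mathcal{A}}^{k} z^\perp$, and use Jordan form on the complement to get $\|\widetilde{\mathcal{A}}^{k} z^\perp\|_2 \in \CO(\lambda_*^{k}\cdot \mathrm{Poly}(k))$; non-expansiveness of the partial product over any residual $r < \CT$ steps absorbs the slack. Since $\Delta_{i,t} \le \sigma\,\|z_t - z_{\mathrm{eq}}\|_2$, the claim follows.

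The main obstacle is step (ii): pure non-expansiveness only yields $|\lambda|\le 1$, and a single $\mathcal{A}_t$ genuinely admits complex unit-modulus eigenvalues, so the periodic/product structure must be used essentially --- the only way for an eigenvector to avoid strict contraction over a full period is to lie in the common kernel at every step of the orbit, which collapses the unit-circle spectrum of $\widetilde{\mathcal{A}}$ down to $\lambda = 1$. A secondary subtlety is promoting this from geometric to semisimple multiplicity, since generalized eigenvectors must also be shown to land in the common-kernel subspace by re-running the same strict-contraction clause one algebraic level up.
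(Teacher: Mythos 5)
Your proposal is correct and follows the same overall Floquet/monodromy strategy as the paper (reduce to $\widetilde{\mathcal{A}}=\prod_t\mathcal{A}_t$, show its only unit-modulus eigenvalue is a semisimple $1$ whose eigenspace is the common fixed set, then get the rate from the Jordan form on the complement), but the technical core is genuinely different. The paper leans on normality of each $\mathcal{A}_i$: combined with their eigenvalue lemma it gives $\lVert\mathcal{A}_i\rVert_2=\rho(\mathcal{A}_i)\le 1$, and an orthonormal basis adapted to $\ker(\widetilde{\mathcal{A}}-I)$ forces a block-diagonal form $I\oplus C$, which is how they rule out nontrivial Jordan blocks at $1$; their endgame bounds $\lVert(\mathcal{A}_i-I)X_t\rVert_2$ and converts it to $\Delta_{i,t}$ by a separate lemma. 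You instead prove non-expansiveness by a direct energy computation, obtain strict contraction off $\ker A_t^\top\times\ker A_t$, deduce from telescoping over one period that unit-modulus eigenvectors of $\widetilde{\mathcal{A}}$ lie in $\bigcap_i\ker(\mathcal{A}_i-I)=\bigcap_i\ker A_i^\top\times\bigcap_i\ker A_i$ (the same key step as the paper's Lemma on $\ker(\widetilde{\mathcal{A}}-I)$, reached without invoking normality), and then bound $\Delta_{i,t}\lesssim\sigma\lVert z_t-z_{\mathrm{eq}}\rVert_2$ directly, absorbing the within-period residual by non-expansiveness. Your route is more elementary and self-contained (it also identifies the limit set explicitly as the common Nash set), while the paper's normality route hands you an orthogonal invariant splitting and recycles its characteristic-polynomial computations. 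Two small write-up points: your displayed relation is an inequality, not an identity (the exact identity carries the extra terms $+\alpha^4(\lVert A_tA_t^\top x\rVert_2^2+\lVert A_t^\top A_ty\rVert_2^2)$, which you then dominate using $\alpha\sigma<1$), and semisimplicity at $\lambda=1$ is cleanest via power-boundedness ($\widetilde{\mathcal{A}}^k w=w+ku$ would grow linearly, contradicting $\lVert\widetilde{\mathcal{A}}\rVert_2\le1$) rather than ``re-running the strict-contraction clause''; neither affects correctness.
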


Note that in a periodic game, the iterative matrices of learning dynamics are also periodic, which means that the learning difference systems of these learning dynamics are periodic systems. According to Floquet theorem, see proposition \ref{Floquet}, the study of dynamical behaviors of a periodic system can be reduced to the autonomous system whose asymptotic behavior is determined by  $\prod^\CT_{t=1}  \mathcal{A}_t $. 

The key point on the proof of Theorem \ref{thm:EG-Period} is an observation that the iterative matrix of extra-gradient is a normal matrix, which makes it possible to calculate the Jordan normal form of $\prod^\CT_{t=1}  \mathcal{A}_t $ for arbitrary large $\CT$. The details of proof are left to Appendix \ref{egpp}.


In the following theorem, we provide an example demonstrating that when two players use the optimistic gradient descent ascent or negative momentum method, their strategy will diverge at an exponential rate, regardless of how they choose their step-sizes and momentum parameters.

\begin{restatable}{theorem}{OGDMMPeriod}
\label{thm:OGDMM-Period}
Consider a periodic game with period $\CT = 2$, and described by the following payoff matrix
\begin{align}\label{p3mg}
 A_t=
\begin{cases}
\left[1,-1\right], & t \textnormal{\ \ is \ odd} \\ 
\left[-1,1\right], & t\textnormal{\ \ is \ even}
\end{cases}
\end{align}
with $x_t \in \BR,\ y_t \in \BR^2$. If two players use optimistic gradient descent ascent or negative momentum method, then regardless of how they choose step sizes and momentum parameters, we have
\begin{align*}
    \sup_{s \in [t]}\Delta_{i,s} \in \Omega(\lambda^t), \ \textnormal{where\ } \lambda > 1,\ i \in \{1,2\}.
\end{align*}
 Here $\lambda$ is determined by the largest modulus of the eigenvalues of the iterative matrix of optimistic gradient descent ascent or negative momentum method.
\end{restatable}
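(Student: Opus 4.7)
The plan is to combine the Floquet theorem (Proposition~\ref{Floquet}) with a direct spectral computation on the per-period product matrix. Because the game has period $\CT = 2$, each of the algorithms produces a $2$-periodic linear difference system $z_{t+1} = \mathcal{A}_t z_t$ on $\BR^6$, with $\mathcal{A}_1,\mathcal{A}_2$ read off from \eqref{eq:OGDA} or \eqref{eq:Negative Momentum}. By Floquet's theorem, the monodromy $\widetilde{\mathcal{A}} := \mathcal{A}_2 \mathcal{A}_1$ controls the long-run dynamics: whenever it has an eigenvalue $\mu$ with $|\mu| > 1$ whose (generalized) eigenvector lies outside the common equilibrium subspace of $A_1$ and $A_2$ (identical, since $A_2 = -A_1$ has the same kernels), the subsequence $\{z_{2k}\}$ grows like $\Omega(|\mu|^k)$ for a generic initial condition, and $\Delta_{i,2k}$ inherits the same exponential lower bound because it vanishes exactly on the equilibrium subspace. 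Taking $\lambda = |\mu|^{1/2}$ then yields the claimed $\sup_{s \in [t]}\Delta_{i,s} \in \Omega(\lambda^t)$.

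The first concrete step is to substitute $A_1 = [1,-1]$, $A_2 = -A_1$ into \eqref{eq:OGDA} and \eqref{eq:Negative Momentum}, using the periodic extensions $A_0 = A_2$ and $A_3 = A_1$, and form the $6 \times 6$ product $\widetilde{\mathcal{A}}$. In the coordinate change replacing $y = (y_1, y_2) \in \BR^2$ by $(y_1 + y_2,\, y_1 - y_2)$, the equilibrium direction $y_1 + y_2$ decouples and $\widetilde{\mathcal{A}}$ block-triangularizes into a trivial block (with eigenvalues in $\{0, 1\}$ that do not influence $\Delta_{i,\cdot}$) and a nontrivial block $M$ of dimension $4$ whose entries are low-degree polynomials in $\eta$ (for OGDA) or in $(\eta,\beta_1,\beta_2)$ with $\beta_1,\beta_2 \le 0$ (for NM).

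The core step is the parameter-free claim: for every admissible choice of hyperparameters, $M$ has spectral radius strictly greater than $1$. The quickest route is through $\det M$; since the nontrivial Floquet multipliers multiply to $\det M$, a bound $|\det M| > 1$ immediately produces an eigenvalue outside the closed unit disk. Where this estimate is not sharp, I would fall back on the Schur--Cohn / Jury stability test applied to the characteristic polynomial of $M$, which recasts ``all roots lie in the closed unit disk'' as a finite system of polynomial inequalities in the hyperparameters; by inspecting the leading and constant coefficients one checks that at least one of these inequalities fails throughout the admissible region. The sign-flip structure $A_2 = -A_1$ is what drives the phenomenon: the cross-term of the form $A_t A_{t-1}^\top$ that enters $\widetilde{\mathcal{A}}$ changes sign across a period in a way that is absent in the time-independent setting where OGDA and NM are known to converge.

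The main obstacle is the parameter-free lower bound in the NM case, where $(\eta,\beta_1,\beta_2)$ ranges over a three-dimensional region and the Jury inequalities quickly become unwieldy. I would mitigate this by observing that the momentum terms enter $\mathcal{A}_t$ in \eqref{eq:Negative Momentum} only through the $(x_{t-1}, y_{t-1})$ block rows, so that $\det M$ factors in a way that reduces the main inequality essentially to one in $\eta$; the degenerate boundaries ($\beta_i \to 0$ or $\eta \to 0$), where additional eigenvalues may collide on the unit circle, can be handled separately by a continuity / perturbation argument. Once an eigenvalue with $|\mu| > 1$ of $M$ is secured, the transfer to the claimed lower bound on $\sup_{s \in [t]}\Delta_{i,s}$ follows from the Jordan form of $\widetilde{\mathcal{A}}$ together with the projection onto the $\Delta_{i,\cdot}$-nontrivial subspace identified above.
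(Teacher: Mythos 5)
Your overall skeleton is the same as the paper's: pass to the period-two monodromy matrix, show its spectral radius exceeds $1$ for every admissible choice of parameters via a Schur--Cohn type criterion, and then transfer the expansion to $\Delta_{i,\cdot}$ by checking that the expanding eigendirection does not lie in the subspace where $\Delta$ vanishes (your quotient/block-triangularization plays the role of the paper's lemma that vectors of the form $[0,a,a,0,b,b]^{\top}$ can only be eigenvectors for the eigenvalues $1$ or $\beta_2^2$). However, your primary quantitative step fails concretely. For OGDA the nontrivial eigenvalues of the monodromy are $\mu_{\pm}=4\eta^2+\tfrac12\pm\tfrac12\sqrt{64\eta^4+8\eta^2+1}$, each with multiplicity two, and $\mu_+\mu_-=2\eta^2$, so the determinant of your $4\times 4$ block is $4\eta^4$, which is \emph{below} $1$ for every $\eta<1/\sqrt2$ and tends to $0$ as $\eta\to 0$, exactly the regime of interest; for NM the corresponding constant term is $\beta_1^2\beta_2^2\le 1$ (and $=0$ when $\beta_2=0$). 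So $|\det M|>1$ never produces the needed eigenvalue, and the ``fallback'' is not a fallback but the entire argument. Moreover, the violated Jury/Schur condition is not visible from ``the leading and constant coefficients'': in the paper it is the middle condition $|a+c|<b+d+1$ of the quartic factor, which fails by exactly $4\eta^4>0$ after a cancellation that uses the specific structure of the coefficients; this computation has to be carried out, it cannot be read off from $\det M$ or the endpoints of the polynomial.

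There is a second gap: failing the Schur criterion only yields spectral radius $\ge 1$, whereas the theorem needs a multiplier strictly outside the unit circle (otherwise $\lambda>1$ and the $\Omega(\lambda^t)$ bound do not follow). For OGDA this is easy because the characteristic polynomial factors and $\mu_+>1$ is explicit, but for NM the paper needs a separate argument (rescaling the polynomial by $r>1$ and letting $r\to 1^{+}$ to show that radius exactly $1$ would force $\eta^4\le 0$); your proposal does not supply this step, and a continuity argument near the boundary of the parameter region does not by itself rule out multipliers sitting exactly on the unit circle for some interior $(\eta,\beta_1,\beta_2)$. Two smaller points: for NM the ``trivial'' block has eigenvalues $\{1,\beta_2^2\}$ rather than $\{0,1\}$, so if $\beta_2^2$ happens to coincide with (or exceed) the expanding multiplier you must, as the paper does, explicitly exclude it and argue the chosen eigenvector has a nonzero component outside the equilibrium subspace; and when the expanding multiplier is complex, $\Delta_{i,t}$ oscillates like $|\cos(t\theta)|\,|\mu|^{t/2}$, so the lower bound only holds along a subsequence and one must invoke the $\sup_{s\in[t]}$ in the statement, which the paper handles with an explicit real initial condition $v+\bar v$ and a subsequence argument that your sketch omits.
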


To prove theorem \ref{thm:OGDMM-Period}, we directly calculate the characteristic polynomials of the iterative matrices products in one period for optimistic gradient descent ascent and negative momentum method under the game defined by \eqref{p3mg}. To show that these systems have an exponential divergence rate, it is sufficient to demonstrate that their characteristic polynomials have a root with modulus larger than $1$. We achieve this by using the Schur stable theorem, which is also employed to demonstrate the last iterate convergence of several learning dynamics in time-independent game (\cite{zhang2019convergence}). The proof is deferred to Appendix 
\ref{onpg}.

In Figure \eqref{period2}, we present the function curves of $\Delta_{1,t}$ for these three types of game dynamics under the periodic game defined by  \eqref{p3mg}. From the experimental results, extra-gradient converges, while both optimistic gradient descent ascent and negative momentum method diverge.

\begin{figure}[h]
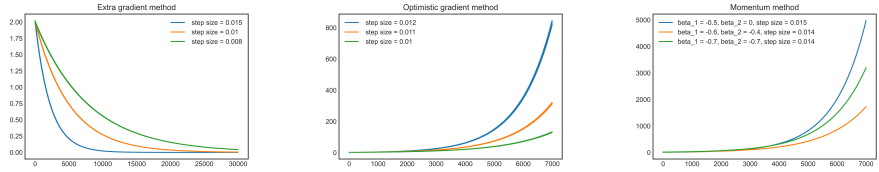

\centering
\subfigure
{
    \begin{minipage}[b]{.28\linewidth}
        \centering
        \includegraphics[scale=0.25]{extra.pdf}
    \end{minipage}
}
\subfigure
{
 	\begin{minipage}[b]{.28\linewidth}
        \centering
        \includegraphics[scale=0.25]{optimistic.pdf}
    \end{minipage}
}
\subfigure
{
 	\begin{minipage}[b]{.28\linewidth}
        \centering
        \includegraphics[scale=0.25]{negative_momentum.pdf}
    \end{minipage}
}
\caption{Function curves of $\Delta_{1,t}$ of the game presented in Theorem \ref{thm:OGDMM-Period}. Extra-gradient converges, while the other two methods diverge.}
\label{period2}
\end{figure}

\subsection{Convergent perturbed game}\label{peg}

Recall that the payoff matrix of a convergent perturbed game has form  $A_t = A + B_t$, where $\lim_{t \to \infty} B_t = 0$, and we refer to the zero-sum game defined by payoff matrix $A$ as the stable game. We denote
\begin{align}
    \Delta_t = \lVert A^{\top} x_t \lVert_2 + \lVert A y_t \lVert_2,
\end{align}
thus $\Delta_t$ measures the distance between the strategy $(x_t,y_t)$ in the $t$-th round and the Nash equilibrium of the stable game defined by $A$. Moreover,
$\Delta_t = 0$ if and only if $(x_t,y_t)$ is an equilibrium of the stable game.

In the literature on linear difference systems, a common assumption that needs to be added for convergence guarantee is the following bounded accumulated perturbations (BAP) assumption (\cite{benzaid1987asymptotic,elaydi1999asymptotic,elaydi1995asymptotic}):
\begin{align*}\label{bap}
\tag{BAP assumption}
    \sum_{t=0}^{\infty} \lVert B_t \lVert_2 \textnormal{\ is \ bounded.}
\end{align*}

In the following theorem, we prove that under BAP assumption, all three learning dynamics considered in this paper will make $\Delta_t$ converge to $0$, with a rate dependent on the vanishing rate of $B_t$. 

\begin{restatable}{theorem}{ConditionPertub}
\label{thm:Condition-Pertub} 
 Assume that the \eqref{bap}  holds, i.e., $\sum_{t=0}^{\infty} \lVert B_t \lVert_2$ is bounded, and let $\sigma$ be the maximum modulus of the singular value of payoff matrix $A$, then with parameters choice: 
\begin{itemize}
    \item for extra-gradient with step size $\alpha = \eta < \frac{1}{2 \sigma}$, 
    \item for optimistic gradient descent ascent with step size $\eta < \frac{1}{2 \sigma}$,
    \item for negative momentum method with step size $\eta < \frac{1}{\sigma}$ and momentum parameters $\beta_1 = -\frac{1}{2}$ and $\beta_2 = 0$,
\end{itemize}
we have $\Delta_t$ converge to $0$ with rate $\CO (f(t))$. Here 
\begin{align*}
    f(t) = \max\{ \lambda^t, \sum^{\infty}_{i =t/2} \lVert B_i \lVert_2 \},
\end{align*}
 and $\lambda \in (0,1)$ is determined by the eigenvalues of the iterative matrix of corresponding learning dynamics and the payoff matrix $A$ of the stable game.
\end{restatable}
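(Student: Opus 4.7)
The plan is to treat each of the three algorithms uniformly by writing its update as a linear difference system $X_{t+1}=\mathcal{A}_t X_t$ where $\mathcal{A}_t = \mathcal{A}+\mathcal{B}_t$, $\mathcal{A}$ is the iterative matrix of the \emph{stable} game (with payoff $A$), and $\mathcal{B}_t$ is a perturbation term built from $B_t$ (and $B_{t-1}$ for the two-step methods). A direct inspection of the formulas in \eqref{eq:OGDA}, \eqref{eq:EG}, \eqref{eq:Negative Momentum} shows $\lVert\mathcal{B}_t\rVert_2 \le C\,(\lVert B_t\rVert_2+\lVert B_{t-1}\rVert_2)$, so the \eqref{bap} assumption transfers to $\sum_t\lVert\mathcal{B}_t\rVert_2<\infty$. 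The central observation is that under the step-size restrictions stated in the theorem, the unperturbed matrix $\mathcal{A}$ admits an $\mathcal{A}$-invariant splitting $\mathbb{R}^{N}=E_{1}\oplus E_{s}$, where $E_1$ is the eigenspace for eigenvalue $1$ (which coincides with the lifted equilibrium set $\ker A^\top\times\ker A$ together with its "previous-iterate" copies) and $E_s$ is the complementary invariant subspace on which $\mathcal{A}$ has spectral radius $\lambda<1$. Existence of this splitting for EG with $\alpha=\eta<1/(2\sigma)$ is immediate from the block-normal form; for OGDA the splitting is exactly the one used in \cite{DISZ17,liang2019interaction}; and for NM with $(\beta_1,\beta_2)=(-\tfrac12,0)$ and $\eta<1/\sigma$ it follows from the eigenvalue computation of \cite{gidel2019negative}.

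Given the splitting, let $P,Q$ be the spectral projectors onto $E_s,E_1$ and decompose $X_t=U_t+V_t$ with $U_t=PX_t$, $V_t=QX_t$. Using $\mathcal{A}P=P\mathcal{A}$ and $\mathcal{A}|_{E_1}=I$, the perturbed dynamics become
\begin{align*}
U_{t+1} &= \mathcal{A}\,U_t + P\mathcal{B}_t\,(U_t+V_t),\\
V_{t+1} &= V_t + Q\mathcal{B}_t\,(U_t+V_t).
\end{align*}
Choose any operator norm $\lVert\cdot\rVert_*$ in which $\lVert\mathcal{A}|_{E_s}\rVert_*\le\lambda'$ for some $\lambda'\in(\lambda,1)$ (possible by Gelfand's formula). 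The first step is to show that $\lVert V_t\rVert_*$ stays uniformly bounded: from the second line, $\lVert V_{t+1}\rVert_*\le(1+c\lVert\mathcal{B}_t\rVert_2)\lVert V_t\rVert_* + c\lVert\mathcal{B}_t\rVert_2\lVert U_t\rVert_*$, and iterating gives a bound of the form $M\prod_{s}(1+c\lVert\mathcal{B}_s\rVert_2)$, which is finite by \eqref{bap}. In particular, once $\lVert U_t\rVert_*$ is known to stay bounded this yields $\sup_t\lVert V_t\rVert_*\le M'$ for some constant $M'$.

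Next, feed this bound into the $U_t$-equation. Choose $t_0$ so that $c\lVert\mathcal{B}_t\rVert_2<(1-\lambda')/2=:\rho$ for all $t\ge t_0$; this is where convergence of $\lVert B_t\rVert_2$ (implied by \eqref{bap}) is used. Then for $t\ge t_0$,
\begin{align*}
\lVert U_{t+1}\rVert_* \le (\lambda'+\rho)\lVert U_t\rVert_* + c\,M'\,\lVert\mathcal{B}_t\rVert_2,
\end{align*}
and a standard discrete Gronwall iteration (with contraction factor $\lambda'':=\lambda'+\rho<1$) gives
\begin{align*}
\lVert U_t\rVert_* \le (\lambda'')^{\,t-t_0}\lVert U_{t_0}\rVert_* + cM'\sum_{s=t_0}^{t-1}(\lambda'')^{t-1-s}\lVert\mathcal{B}_s\rVert_2.
\end{align*}
Splitting the convolution sum at $t/2$ yields $\sum_{s}(\lambda'')^{t-1-s}\lVert\mathcal{B}_s\rVert_2 \le (\lambda'')^{t/2}\!\sum_{s<t/2}\lVert\mathcal{B}_s\rVert_2 + \sum_{s\ge t/2}\lVert\mathcal{B}_s\rVert_2 = O\bigl(\lambda^{t}\bigr)+O\bigl(\sum_{s\ge t/2}\lVert B_s\rVert_2\bigr)$, after absorbing constants into the rate. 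Finally, since $A^\top x_t$ and $Ay_t$ vanish on $E_1$, $\Delta_t\le\lVert A\rVert_2\,\lVert U_t\rVert_*\cdot O(1)$, delivering the advertised bound $\Delta_t=O(f(t))$.

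The main obstacle is the structural step: verifying, for each of the three algorithms, that the spectrum of $\mathcal{A}$ on the unit circle consists only of the eigenvalue $1$ and that this eigenvalue is \emph{semisimple} (no nontrivial Jordan blocks), so that the invariant splitting $E_1\oplus E_s$ really exists and $\mathcal{A}|_{E_1}=I$. For EG this is a short computation exploiting normality of $\mathcal{A}_t$ (already used in Theorem~\ref{thm:EG-Period}); for OGDA it requires the eigenvalue analysis of the $4\times4$ block matrix at a singular triple of $A$ and matching the multiplicities of $\ker A^\top,\ker A$; for NM with the specific parameter choice $\beta_1=-1/2,\beta_2=0$ it reduces to showing that the characteristic factor corresponding to the zero singular values of $A$ is $(z-1)\cdot q(z)$ with $q$ Schur-stable. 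A secondary technical nuisance is that the $V_t$-bound and the $U_t$-bound are mutually coupled; the clean way to close the loop is a bootstrap argument, first establishing boundedness of the joint iterate $\lVert X_t\rVert$ from the bounded product $\prod_{t}\lVert\mathcal{A}_t\rVert_*$ on a suitably chosen norm, and only then invoking the sharper Gronwall estimate above to get the rate $f(t)$.
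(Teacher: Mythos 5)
Your proposal is correct in outline and follows the same skeleton as the paper's proof, but the bookkeeping is genuinely different, and it is worth comparing. The paper first passes to SVD coordinates, proves full diagonalizability of the stable iterative matrix $\Bar{\mathcal{A}}$ by explicitly constructing $2(n+m)$ eigenvectors (Lemma \ref{prop:diagonal}), locates the spectrum via the Schur criterion (Lemma \ref{lm:eig-static}), transfers \eqref{bap} to the lifted perturbations (Lemma \ref{lm:norm bounded equivalence}), and then runs a Duhamel expansion plus the Gronwall inequality, multiplying by $(D-I)$ and splitting the convolution sum at $t/2$ (Lemma \ref{lm:SVD convergence}), before converting back to $\Delta_t$ by hand (Lemma \ref{lm:convergence equivalent}). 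You instead work with the spectral splitting $E_1\oplus E_s$, get boundedness of the whole iterate from power-boundedness of $\mathcal{A}$ and the product $\prod_t(1+c\lVert\mathcal{B}_t\rVert_2)$ (which plays the role of the paper's Gronwall step), run a direct contraction recursion on the stable component with the same split at $t/2$, and convert to $\Delta_t$ by observing that the eigenvalue-$1$ eigenspace sits inside $\{A^{\top}x=0,\ Ay=0\}$ --- a cleaner conversion than Lemma \ref{lm:convergence equivalent}. A further small advantage: you only need the eigenvalue $1$ to be semisimple and all other eigenvalues to lie strictly inside the unit disk, not full diagonalizability, so you avoid the paper's caveat of excluding the finitely many step sizes for which the quartic characteristic factors acquire repeated roots. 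The one caveat is that the ``main obstacle'' you defer --- semisimplicity of the eigenvalue $1$ and strict Schur stability of the remaining spectrum for OGDA and NM at the stated parameters --- is exactly where the paper invests most of its appendix (the Schur analysis of the quartic factors and the explicit eigenvector construction at the zero singular values); your sketch of how to verify it (factor the characteristic polynomial per singular triple, check the $\sigma_p=0$ blocks give a semisimple $1$ together with $0$ and $\beta_i$, and apply the Schur criterion for $\sigma_p>0$) is the right computation and matches what the paper does, but the citations to \cite{DISZ17,liang2019interaction,gidel2019negative} alone do not cover the singular-$A$ case, so that verification must actually be carried out for the argument to be complete.
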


There are two main ingredients in the proof of Theorem \ref{thm:Condition-Pertub}: firstly, we show that the iterative matrices associated with these learning dynamics are diagonalizable; secondly, these matrices do not have eigenvalues with modulus larger than $1$. Moreover, we prove a general results which states any linear difference system satisfying these two conditions will converge. The details of proof are left to Appendix \ref{appx:Condition-Pertub}.


\begin{rem}
The \eqref{bap} can be converted into a constraint on the vanishing rate of $B_t$: if $B_t$ has a vanishing rate like $\CO(\frac{1}{t^{1+\epsilon}})$, for some arbitrary small $\epsilon > 0$, then $\sum_{t=0}^{\infty} \lVert B_t \lVert_2$ is bounded. We also note that the condition for $B_t$ to vanish at a rate $\CO(\frac{1}{t^{1+\epsilon}}),\ \forall \epsilon > 0$ is necessary to ensure convergence in general linear difference system. For example, consider the 1-dimension system $x_{t} = (1+\frac{1}{t}) x_{t-1}$ where a $\CO(\frac{1}{t})$ convergence rate of perturbations leads to $x_t$ diverging with a $\Omega(t)$ rate.
\end{rem}


Surprisingly, in the next theorem, we prove that Extra-gradient makes $\Delta_t$ asymptotically converge to $0$, without making any further assumptions about the converge rate of $B_t$.

\begin{restatable}{theorem}{EGPertub}
\label{thm:EG-Pertub}
In a convergent perturbed game, if two players use Extra-gradient, there holds $\lim_{t \to \infty} \Delta_t = 0$ with step size $\alpha = \eta < \frac{1}{2\sigma}$ where $\sigma$ is the maximum modulus of the singular value of payoff matrix $A$.
\end{restatable}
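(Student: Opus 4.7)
The plan is to cast the extra-gradient iteration as a small perturbation of the autonomous stable-game dynamics and exploit the orthogonal geometry of the stable iteration matrix. Writing $A_t = A + B_t$ and letting $z_t$ denote the concatenation of $x_t$ and $y_t$, I would express \eqref{eq:EG} as $z_{t+1} = (M + E_t)z_t$, where $M$ is the EG iteration matrix built from $A$ and $E_t$ is formed from $B_t$, $AB_t^\top$, $B_t A^\top$, and $B_t B_t^\top$. A direct block estimate yields $\lVert E_t\rVert_2 \leq c_A \lVert B_t\rVert_2$ for a constant $c_A$ depending only on $\eta$ and $\lVert A\rVert_2$, so in particular $\lVert E_t\rVert_2 \to 0$.

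The key structural observation is that $M$ is a normal matrix: a direct block computation gives $MM^\top = M^\top M$, with common diagonal blocks $(I-\eta^2 AA^\top)^2 + \eta^2 AA^\top$ and $(I-\eta^2 A^\top A)^2 + \eta^2 A^\top A$ and vanishing off-diagonals. Setting $Z^* = \ker A^\top \times \ker A$, one checks that $M$ acts as the identity on $Z^*$ and preserves $(Z^*)^\perp = \mathrm{range}(A) \times \mathrm{range}(A^\top)$. On $(Z^*)^\perp$ the eigenvalues of $M$ are $1-\eta^2\sigma_i^2 \pm i\eta\sigma_i$ indexed by the positive singular values $\sigma_i$ of $A$, each of modulus $\sqrt{1-\eta^2\sigma_i^2+\eta^4\sigma_i^4}$; under $\eta < 1/(2\sigma)$ this modulus is strictly less than $1$. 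Since normality is inherited by orthogonally invariant summands, the operator norm of $M$ restricted to $(Z^*)^\perp$ coincides with its spectral radius $\lambda_* < 1$.

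Granting these facts, the proof splits into two steps. Step 1 bounds $\lVert z_t\rVert_2$: the same block computation shows each $M_t$ is normal, so $\lVert M_t\rVert_2 = \rho(M_t)$. Since $\lVert A_t\rVert_2 \leq \sigma + \lVert B_t\rVert_2$ and $\lVert B_t\rVert_2 \to 0$, for all $t$ larger than some $T_0$ we have $\eta\lVert A_t\rVert_2 \leq 1$, whence $\rho(M_t) \leq 1$ and thus $\lVert z_{t+1}\rVert_2 \leq \lVert z_t\rVert_2$. Consequently $\lVert z_t\rVert_2$ is eventually non-increasing and uniformly bounded by some $C_z$. Step 2 establishes contraction on $(Z^*)^\perp$: decomposing $z_t = p_t + q_t$ along $Z^* \oplus (Z^*)^\perp$ and using $Mp_t = p_t$ together with $M$-invariance of $(Z^*)^\perp$, one obtains
\begin{align*}
q_{t+1} \;=\; M q_t + P_{(Z^*)^\perp} E_t z_t,
\end{align*}
and hence $\lVert q_{t+1}\rVert_2 \leq \lambda_* \lVert q_t\rVert_2 + c_A C_z \lVert B_t\rVert_2$. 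The standard lemma for recursions $a_{t+1} \leq \lambda a_t + b_t$ with $\lambda \in [0,1)$ and $b_t \to 0$ then yields $\lVert q_t\rVert_2 \to 0$; since $A^\top$ and $A$ annihilate the $Z^*$ component, $\Delta_t \leq \sqrt{2}\,\sigma\,\lVert q_t\rVert_2 \to 0$.

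The principal obstacle I anticipate is Step 1: unlike in Theorem~\ref{thm:Condition-Pertub}, the \eqref{bap} assumption is not available, so one cannot control $\prod_t \lVert M_t\rVert_2$ by an absolutely convergent product. The argument succeeds precisely because of the normality of $M_t$, a property special to extra-gradient that is not shared by OGDA or the negative-momentum method; this is what converts the spectral inequality $\rho(M_t) \leq 1$ into the operator-norm inequality $\lVert M_t\rVert_2 \leq 1$, and ultimately drives the asymptotic vanishing of $\Delta_t$ under only the minimal assumption $B_t \to 0$.
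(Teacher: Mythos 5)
Your proposal is correct, and it rests on the same structural backbone as the paper's proof: the EG iterate matrix built from $A$ is normal, it acts as the identity on $\ker A^{\top}\times\ker A$, its restriction to the orthogonal complement has spectral radius $\lambda_*<1$ (so, by normality, operator norm $\lambda_*$), and since $\|\mathcal{B}_t\|_2\le c\|B_t\|_2\to 0$ the time-varying matrices eventually have operator norm at most $1$, making $\|z_t\|_2$ eventually non-increasing and hence bounded. Where you genuinely diverge from the paper is the concluding step. The paper argues by contradiction: it uses the convergence of the monotone bounded sequence $\|X_t\|_2$ and shows that if the $V_2$-component did not vanish, then $\|X_t\|_2^2$ would drop by a fixed $\epsilon>0$ infinitely often, which is impossible. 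You instead project the perturbed recursion onto $(Z^*)^{\perp}$ to get the exact identity $q_{t+1}=Mq_t+P_{(Z^*)^{\perp}}E_tz_t$, hence $\|q_{t+1}\|_2\le\lambda_*\|q_t\|_2+c_AC_z\|B_t\|_2$, and invoke the standard lemma that $a_{t+1}\le\lambda a_t+b_t$ with $\lambda<1$ and $b_t\to 0$ forces $a_t\to 0$; the step $\Delta_t\le\sqrt{2}\,\sigma\|q_t\|_2$ then finishes. Your route is more direct, avoids the contradiction bookkeeping (the $\delta$, $\epsilon$, $t_j$, $t_k$ case analysis), and has the added benefit of being quantitative: the recursion immediately yields $\limsup$-type bounds of order $\lambda_*^{t/2}+\sup_{s\ge t/2}\|B_s\|_2$, whereas the paper's argument is purely asymptotic. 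The paper's energy-drop argument, for its part, uses only orthogonality and monotonicity of $\|X_t\|_2$ and never needs the explicit projected recursion, but both proofs hinge on the same special feature of extra-gradient—normality of the iterate matrix—which you correctly identify as the reason the argument fails for OGDA and negative momentum.
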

To prove Theorem \ref{thm:EG-Pertub}, we first observe that $\lVert x_t \lVert_2 +\lVert y_t \lVert_2 $ a non-increasing function of $t$ since the iterative matrix of extra-gradient is a normal matrix. Next, we prove that if Theorem \ref{thm:EG-Pertub} doesn't hold, then $\lVert x_t \lVert_2 +\lVert y_t \lVert_2 $ will decrease by a constant for infinite number of times, thus leading to a contradiction with the non-increasing property and the non-negativity of $\lVert x_t \lVert_2 +\lVert y_t \lVert_2 $ . The proof is deferred to Appendix \ref{extra_convergent}.




\section{Experiments}\label{experiments}

In this section we present numerical results for our theoretical results in Section \ref{main_result}.

\textbf{Experiments on Theorem \ref{thm:EG-Period}} We verify Theorem \ref{thm:EG-Period} through a period-$3$ game, the payoff matrices are chosen to be
\begin{align*}
    A_1 = \begin{bmatrix}
        1 & 2 \\
        2 & 4
    \end{bmatrix},
    A_2 = \begin{bmatrix}
        3 & 7 \\
        7 & 1
    \end{bmatrix},
    A_3 = \begin{bmatrix}
        4 & 2 \\
        4 & 2
    \end{bmatrix}.
\end{align*}
We run extra-gradient and optimistic gradient descent ascent on this example, both with step size = 0.01, the experimental results are presented in Figure \eqref{pepo1}. We can see extra-gradient (left) makes $\Delta_{i,t}$ converge, while optimistic gradient descent ascent (right) makes $\Delta_{i,t}$ diverge. This result supports Theorem \ref{thm:EG-Period} and provides a numerical example of the separation of extra-gradient and optimistic gradient descent ascent in periodic games.

\begin{figure}[ht]
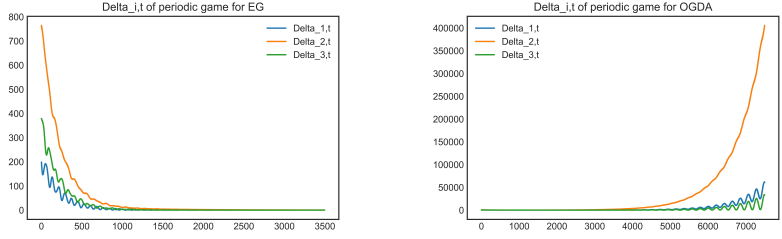

\centering
\subfigure
{
    \begin{minipage}[b]{.4\linewidth}
        \centering
        \includegraphics[scale=0.35]{pe.pdf}
    \end{minipage}
}
\subfigure
{
    \begin{minipage}[b]{.4\linewidth}
        \centering
        \includegraphics[scale=0.35]{po.pdf}
    \end{minipage}
}
\caption{Function curves of $\Delta_t$ for extra-gradient (left), and optimistic gradient descent ascent (right).}
\label{pepo1}
\end{figure}

\textbf{Experiments on Theorem \ref{thm:Condition-Pertub}} We verify Theorem \ref{thm:Condition-Pertub} by examples:
\begin{align*}
    A = \begin{bmatrix}
        2 & 3 \\
        4 & 6
    \end{bmatrix},
    B_{1,t} = B \cdot t^{-1.1}, 
    B_{2,t} = B \cdot t^{-4}, \textnormal{\ and\ },
    B_{3,t} = B \cdot t^{-8}
\end{align*}
where $B = [[-15,70],[-90,90]]$. The step size is chosen to be $0.005$. The initial points are chosen to be $x_0 = [15,13], x_{-1} = [11,3]$ and
$y_0 = [35,1],  y_{-1} = [35,1]$. The experimental results are presented in Figure \ref{ec1}, all of the three dynamics make $\Delta_t$ converge to $0$, and slower convergence rate of perturbations can decelerate the convergence rate of learning dynamics, thus support the convergence result in Theorem  \ref{thm:Condition-Pertub}. We also observe that the convergence rate is faster than the upper bound provided in the theorem. Therefore, we conjecture the bound of convergence rate can be improved.

\begin{figure}[h]
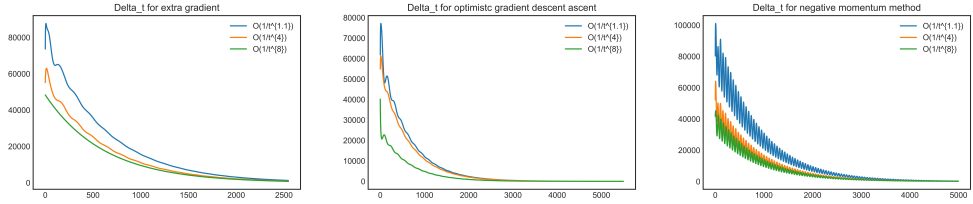

\centering
\subfigure
{
    \begin{minipage}[b]{.3\linewidth}
        \centering
        \includegraphics[scale=0.3]{per_et.pdf}
    \end{minipage}
}
\subfigure
{
    \begin{minipage}[b]{.3\linewidth}
        \centering
        \includegraphics[scale=0.3]{per_ogd.pdf}
    \end{minipage}
}
\subfigure
{
    \begin{minipage}[b]{.3\linewidth}
        \centering
        \includegraphics[scale=0.3]{per_nm.pdf}
    \end{minipage}
}
\caption{Values of $\Delta_t$ for extra-gradient (left), optimistic gradient descent ascent (middle), negative momentum method (right).}
\label{ec1}
\end{figure}

\textbf{Experiments on Theorem \ref{thm:EG-Pertub}} We verify Theorem \ref{thm:EG-Pertub} by two group of examples. The perturbations are :
 \begin{align*}
    B_{1,t} = B \cdot t^{-0.4},\ 
    B_{2,t} = B \cdot t^{-0.3},\ 
    B_{3,t} = B \cdot t^{-0.2},
\end{align*}
and
\begin{align*}
    B_{4,t} =B \cdot \log(t)^{-1.8},\ 
    B_{5,t} =B \cdot \log(t)^{-1.5},\ 
    B_{6,t} =B \cdot \log(t)^{-1.3}.
\end{align*}
where $B = [[-10,10],[-10,10]]$. The payoff matrix of stable game is chosen to be
$A = [[2,3],[4,6]]$. Note that the perturbations do not satisfy \eqref{bap}. The experimental results are shown in Figure \eqref{ec2}. We can see all these curves converge to $0$, thus support the result in Theorem \ref{thm:EG-Pertub}. Furthermore, we can observe that large perturbations may lead to more oscillations during the convergence processes, which in turn slows down the rate of convergence. We present more experiments in Appendix \ref{meme}.

\begin{figure}[h]
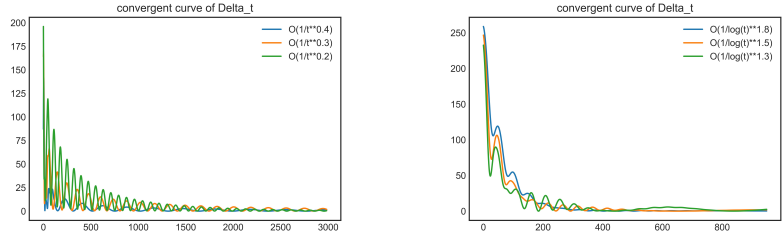

\centering
\subfigure
{
    \begin{minipage}[b]{.4\linewidth}
        \centering
        \includegraphics[scale=0.35]{thmextconvergent1.pdf}
    \end{minipage}
}
\subfigure
{
    \begin{minipage}[b]{.4\linewidth}
        \centering
        \includegraphics[scale=0.35]{thmextconvergent2.pdf}
    \end{minipage}
}
\caption{Values of $\Delta_t$ for extra-gradient with $A+B_{i,t}, i = 1,2,3$ (left), and $A_2+B_{i,t}, i =4,5,6$ (right).}
\label{ec2}
\end{figure}


\section{Discussion}\label{Discussion}
In this paper, we study the last-iterate behavior of extra-gradient, optimistic gradient descent ascent, and negative momentum method in two types of time-varying games : periodic game and convergent perturbed game. In the case of periodic game, we prove that extra-gradient will converge to a Nash equilibrium while other two methods diverge. To the best of our knowledge, this is the first result that provides a clear separation between the behavior of extra-gradient methods and optimistic gradient descent ascent. In the case of convergent perturbed game, we prove all three learning dynamics converge to Nash equilibrium under the BAP assumption, which is commonly used in the literature on dynamical systems. 

Our results leave many interesting open questions. Firstly, is the BAP assumption necessary for ensuring convergence in optimistic gradient descent ascent and negative momentum method? Secondly, the bound of convergence rate in Theorem \ref{thm:Condition-Pertub} may rather slight as we shown in experiments section. Obtaining a tighter bound on the convergence rate is an important future research problem. Thirdly, can the results here be generalized to other settings, such as constrained zero-sum game?

\section*{Acknowledgement}
Yi Feng is supported by the Fundamental Research Funds for the Central Universities. Ioannis
Panageas wants to thank a startup grant. Xiao Wang acknowledges Grant 202110458 from SUFE and support from the Shanghai Research Center for Data Science and Decision Technology.

\bibliographystyle{plainnat}
\bibliography{Bibli}

\newpage
\appendix
\section{Step sizes and eigenvalues of the iterative matrix}

The eigenvalues of the iterative matrices in the linear differences systems in    \eqref{eq:OGDA} \eqref{eq:EG} and \eqref{eq:Negative Momentum} play a crucial role in analyzing the dynamic behavior of learning dynamics. In this section, we study how the choice of step sizes in learning dynamics affects the eigenvalues of these iterative matrices.

We firstly present the following corollary of Schur’s theorem, which was also used in \cite{zhang2019convergence} to demonstrate the convergence of learning dynamics in time-independent games.
\begin{lem}\label{schur stable}
(Corollary 2.1 in \cite{zhang2019convergence}). The roots of a real quartic polynomial $\lambda^4+a \lambda^3+b \lambda^2+c \lambda+d$ are within the (open) unit disk of the complex plane if and only if
$|c-a d|<1-d^2$, $|a+c|<b+d+1$ and $b<(1+d)+(c-a d)(a-c) /(d-1)^2$.  
\end{lem}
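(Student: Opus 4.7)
The result is a Schur stability criterion for real quartic polynomials, which I would prove via the Jury stability test (the discrete-time analog of Routh--Hurwitz). For a monic real polynomial of degree $n$, Jury's test produces $n+1$ necessary and sufficient inequalities for all roots to lie in the open unit disk: the two sign conditions $p(1)>0$ and $(-1)^{n}p(-1)>0$, the constant-term bound $|c_0|<1$, and $n-2$ ``inner'' inequalities read off from the Jury table, which is built by repeatedly applying the Schur transform $T[q](\lambda)=[c_n\,q(\lambda)-c_0\,q^{*}(\lambda)]/\lambda$, where $q^{*}$ is the reciprocal polynomial.

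\textbf{Specialization to the quartic.} Write $p(\lambda)=\lambda^4+a\lambda^3+b\lambda^2+c\lambda+d$. The sign conditions $p(1)>0$ and $p(-1)>0$ rearrange to $|a+c|<1+b+d$, matching the second stated inequality. One Schur transform produces a cubic with leading coefficient $1-d^2$ and constant term $c-ad$, so the first inner Jury inequality reads $|c-ad|<1-d^2$---the first stated inequality, which already subsumes $|d|<1$. A second Schur transform produces a quadratic with leading coefficient $(1-d^2)^2-(c-ad)^2$ and constant term $b(1-d)^2(1+d)-(c-ad)(a-cd)$; the second inner Jury inequality is therefore two-sided. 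Dividing its upper side through by the positive factor $(1-d)^2(1+d)$ and using the identity $(a-cd)-(c-ad)=(1+d)(a-c)$ yields precisely the third stated inequality $b<(1+d)+(c-ad)(a-c)/(d-1)^2$. The lower side simplifies via the dual identity $(a-cd)+(c-ad)=(1-d)(a+c)$ to $(c-ad)(a+c)<(1-d^2)(1+b+d)$, which is automatic from the first two stated inequalities since its left-hand side is bounded in absolute value by the product of their right-hand sides.

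\textbf{Main obstacle.} The only nontrivial part is the algebraic simplification at the second Schur reduction: verifying that the compact closed form $(c-ad)(a-c)/(d-1)^2$ really emerges from the raw Schur-transformed coefficients rests on the two factorization identities $(a-cd)\pm(c-ad)\in\{(1+d)(a-c),\,(1-d)(a+c)\}$, which are the key algebraic simplifications. Everything else---checking that $|d|<1$ is redundant, that the lower side of the second inner Jury inequality is implied, and that the three stated inequalities conversely yield all Jury inequalities---follows because the first stated inequality guarantees positivity of every reduced leading coefficient, so the remaining absolute-value conditions unwind cleanly.
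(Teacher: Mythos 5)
The paper never proves this lemma: it is imported verbatim as Corollary 2.1 of \cite{zhang2019convergence}, so there is no internal proof to compare against. Your Jury/Schur--Cohn derivation is correct and is essentially the standard route by which such quartic criteria are obtained. I checked the key steps: the first Schur transform of $\lambda^4+a\lambda^3+b\lambda^2+c\lambda+d$ is indeed $(1-d^2)\lambda^3+(a-cd)\lambda^2+b(1-d)\lambda+(c-ad)$, so the first inner condition is $|c-ad|<1-d^2$ (which subsumes $|d|<1$); the second transform has leading coefficient $(1-d^2)^2-(c-ad)^2$ and constant term $b(1-d)^2(1+d)-(c-ad)(a-cd)$; the identities $(a-cd)-(c-ad)=(1+d)(a-c)$ and $(a-cd)+(c-ad)=(1-d)(a+c)$ hold, and with them the upper side of the two-sided comparison rearranges exactly to $b<(1+d)+(c-ad)(a-c)/(d-1)^2$ after dividing by $(1-d)^2(1+d)>0$, while the lower side becomes $(c-ad)(a+c)<(1-d^2)(1+b+d)$, which is indeed implied by the first two inequalities since both right-hand sides are positive. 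The only point worth making explicit if this were written out in full is why the quadratic's remaining Jury condition (the one on its middle coefficient, $|R_1|<R_2+R_0$) is accounted for: it is equivalent to positivity of the reduced quadratic at $\pm1$, and since the Schur transform satisfies $Tf(1)=(\text{lead}-\text{const})f(1)$ and $Tf(-1)=-(\text{lead}+(-1)^{\deg}\text{const})f(-1)$ with positive multipliers under the earlier conditions, it is exactly encoded by $p(1)>0$ and $p(-1)>0$, i.e.\ by $|a+c|<1+b+d$; your appeal to the $(n+1)$-condition form of Jury's test builds this in, but a referee would want that one sentence spelled out.
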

\begin{lem}\label{lm:eig-static}
Let $\sigma$ be the maximum modulus of the singular value of payoff matrix $A$. Then if for extra-gradient method with step size $\alpha = \gamma < \frac{1}{2 \sigma}$, optimistic gradient descent ascent with step size $\eta < \frac{1}{2\sigma}$, and negative momentum method with step size $\eta < \frac{1}{\sigma}$ and momentum parameters $\beta_1 = -\frac{1}{2}$ and $\beta_2 = 0$, then for the iterative matrices $\mathcal{A}$ in \eqref{eq:OGDA} \eqref{eq:EG} and \eqref{eq:Negative Momentum}, we have the following conclusion:
\begin{itemize}
    \item If payoff matrix $A$ is non-singular, then the modulus of eigenvalues of these iterative matrices $\mathcal{A}$  are strictly less than $1$. 
    \item  If payoff matrix $A$ is singular, then $1$ is an eigenvalue of the iterative matrix $\mathcal{A}$,  and other eigenvalues of $\mathcal{A}$ have  modulus strictly less than $1$.
\end{itemize}
\end{lem}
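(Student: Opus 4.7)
The plan is to reduce the spectral analysis of the (high-dimensional) iterative matrix $\mathcal{A}$ to that of a collection of small blocks indexed by the singular values of $A$, and then to invoke Lemma \ref{schur stable} for each block. Writing $A = U\Sigma V^{\top}$, I would apply the change of basis $x \mapsto U^{\top} x$, $y \mapsto V^{\top} y$ (simultaneously on the current and previous iterates in the OGDA and NM cases). Because every occurrence of $A$ in the linear systems \eqref{eq:OGDA}, \eqref{eq:EG}, \eqref{eq:Negative Momentum} sandwiches an $x$-coordinate together with a $y$-coordinate, this conjugation replaces $A$ by $\Sigma$ everywhere, and the iterative matrix becomes similar to a direct sum of small blocks, one per singular value $\sigma_i$. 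Indices lying outside the diagonal support of $\Sigma$ (which can only occur when $n \ne m$) decouple into trivial blocks whose only eigenvalue is $1$. Thus the spectrum of $\mathcal{A}$ is the union of the spectra of the small blocks, reducing everything to a parametric problem in the single real variable $\sigma_i \ge 0$.

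For extra-gradient with $\alpha = \gamma$ the per-singular-value block collapses to the $2 \times 2$ matrix
$$
B^{\mathrm{EG}}_i \;=\;
\begin{bmatrix}
1 - \alpha^{2}\sigma_i^{2} & -\alpha\sigma_i \\
\alpha\sigma_i & 1 - \alpha^{2}\sigma_i^{2}
\end{bmatrix},
$$
whose eigenvalues are $(1-\alpha^{2}\sigma_i^{2}) \pm i\,\alpha\sigma_i$ with squared modulus $1 - \alpha^{2}\sigma_i^{2}\bigl(1 - \alpha^{2}\sigma_i^{2}\bigr)$; under $\alpha < 1/(2\sigma)$ this is strictly less than $1$ whenever $\sigma_i > 0$ and equals $1$ precisely when $\sigma_i = 0$. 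For OGDA and NM the block has size $4 \times 4$ because the previous iterates are tracked as well; I would expand its characteristic polynomial as a real quartic in the scalar parameter $s := \eta\sigma_i$, treat the case $s = 0$ directly (it yields the eigenvalue $1$ together with eigenvalues of modulus strictly less than $1$), and for $s > 0$ verify that all four roots lie in the open unit disk by checking the three Schur inequalities of Lemma \ref{schur stable}.

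Combining these two ingredients gives both bullets of the lemma: when $A$ is non-singular every $\sigma_i$ is positive and every block contributes eigenvalues of modulus strictly less than one; when $A$ is singular at least one block carries the eigenvalue $1$ and all remaining eigenvalues still sit strictly inside the unit disk.

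The main obstacle I anticipate is the algebraic verification of the three Schur inequalities for OGDA and NM. There the coefficients $a, b, c, d$ of the quartic are polynomials in $s$ of degree up to four (and, for NM, also depend on $\beta_1, \beta_2$), so checking strictness uniformly for $s \in (0,\,\eta\sigma]$ reduces to sign estimates on low-degree polynomials; the thresholds $\eta < 1/(2\sigma)$ for OGDA and $\eta < 1/\sigma$ together with $\beta_1 = -1/2$, $\beta_2 = 0$ for NM should be precisely calibrated to make each of the three inequalities strict on the entire range, and I would separately verify that no root can lie exactly on the unit circle at the endpoint $s = \eta\sigma$.
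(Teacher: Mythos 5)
Your proposal is correct and follows essentially the same route as the paper: reduce the spectrum of $\mathcal{A}$ to per-singular-value blocks (via the SVD change of basis, which the paper makes explicit later in Lemma~\ref{prop:diagonal}), write the resulting quadratic/quartic characteristic factors in $\eta\sigma_i$ (resp.\ $\alpha\sigma_i$), and check the Schur conditions of Lemma~\ref{schur stable}, handling $\sigma_i=0$ separately to produce the eigenvalue $1$. The only minor imprecision is that for OGDA and NM the decoupled coordinates (unpaired indices and $\sigma_i=0$) contribute not only the eigenvalue $1$ but also $0$ (and $\beta_1,\beta_2$ for NM), all of modulus strictly less than $1$, so the conclusion is unaffected.
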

\begin{proof}

\textbf{OGDA.}
 We first write the characteristic polynomials of the iterative matrix $\mathcal{A}$ in \eqref{eq:OGDA} when payoff matrix is equal to $A$. Recall in this case, we have
\begin{align}\label{STATIC-OGDA}
\mathcal{A} = 
\begin{bmatrix}
 I & -2\eta A^{\top} & 0 & \eta A^{\top} \\
\\
2\eta A & I & -\eta A & 0\\
\\
I & 0& 0&0 \\
\\
0& I& 0& 0
\end{bmatrix}.
\end{align}
The characteristic polynomial equations are:
\begin{align}\label{eigpolyOGDA}
\lambda^2 (\lambda-1)^2 + \eta^2 \sigma_i^2 (1 - 2\lambda)^2 =0, i \in [m]
\end{align}

where $\sigma_i$ is a singular value of $A$. 
And then according to Lemma \ref{schur stable}, it is easy to verify if $0< \eta \sigma < \frac{1}{2}$, then the norm of roots of the above polynomial is always less than 1.
When $\sigma_i=0$, we have the eigenvalues of $\mathcal{A}$ come from \eqref{eigpolyOGDA} are equal to 1. 

In all, if the payoff matrix $A$ is non-singular, we have the modulus of eigenvalue of $\mathcal{A}$ is strictly smaller than 1. And if there exists some singular value of $A$ equals to 0, we can obtain that if $\sigma_i=0$, then $\mathcal{A}$ has eigenvalue equal to 1,  otherwise, $\mathcal{A}$ only has eigenvalues whose norm is less than 1.

\textbf{EG.}
 We first write characteristic polynomial of iterative matrix $\mathcal{A}$ in \eqref{eq:EG}, with payoff matrix equals to $A$. We have
\begin{align*}
    \mathcal{A}
=
\begin{bmatrix}
 I - \alpha\gamma A A^{\top}  & -\alpha A\\
\\
 \alpha A^{\top} &  I - \gamma \alpha A^{\top}A
\end{bmatrix}.
\end{align*}
The characteristic polynomial equations are:
\begin{align*}
    (\lambda-1)^2+2 \gamma \alpha \sigma_i^2 (\lambda -1) +\alpha^2\sigma_i^2 + \alpha^2 \gamma^2 \sigma_i^4 = 0, i \in [m]
\end{align*}
where $\sigma_i$ is a singular value of $A$. 
And then by Lemma \ref{schur stable}, the norm of roots of the above polynomial is always less than 1 if the following holds for all $i\in[m]$,
\begin{align}\label{eq:discriminant-of-EG}
   \alpha^2\sigma_i^2+(\alpha\gamma \sigma_i^2-1)^2 <1.
\end{align}
It is easy to verify that $\alpha = \gamma < \frac{1}{2 \sigma}$ satisfies the above inequalities.
Then we can use similar analysis in the part of OGDA to prove the conclusion for EG.

\textbf{Negative Momentum Method.}
First we write characteristic polynomial of iterative matrix $\mathcal{A}$ defined in \eqref{eq:Negative Momentum} when payoff payoff equals to $A$, we have
\begin{align}\label{STATIC-NM}
\mathcal{A} = 
\begin{bmatrix}
(1+\beta_1) I& -\eta A & -\beta_1 I & 0 \\
\\
\eta(1+\beta_1) A^{\top} & (1+\beta_2) I-\eta^2 A^{\top}A  & -\eta\beta_1 A^{\top} & -\beta_2 I\\
\\
I & 0& 0&0 \\
\\
0& I& 0&0 
\end{bmatrix}.     
\end{align}

The characteristic polynomial equations are:
\begin{align*}
    (\lambda -1)^2(\lambda-\beta_1)(\lambda-\beta_2)+\eta^2\sigma_i^2\lambda^3=0, i \in[m]
\end{align*}
when $\beta_1 = -\frac{1}{2}$, $\beta_2 = 0$ and $\eta < \frac{1}{\sigma}$ satisfies conditions in Lemma \ref{schur stable}. We can also use a similar analysis as in the OGDA part to prove the conclusion for negative momentum method.
\end{proof}

\section{Omitted Proofs from Theorem~\ref{thm:EG-Period}}\label{egpp}
\EGPeriod*

In this section, we prove Theorem \ref{thm:EG-Period}. In the following, we use $\tilde{\mathcal{A}}$ to denote matrix $\prod_{i=1}^T \mathcal{A}_i$. As shown by the Floquet theorem, the asymptotic behavior of a periodic linear system is determined by the product of iterative matrices over one period. Therefore, the analysis can be reduced to that of an autonomous system. We analyze the Jordan normal form of the product matrix for extra-gradient. We prove that the product matrix has no eigenvalues with a modulus larger than $1$. Moreover, the Jordan blocks of $1$ as an eigenvalue of the product matrix have size equals to $1$. These facts are enough to show the exponentially convergent behavior of extra-gradient. Before going through details of the proof, we provide a road map for the proof in Figure \ref{rb}.

\begin{figure}[h]
\flushleft
\subfigure
{
    \begin{minipage}[b]{.8\linewidth}
        \centering
        \includegraphics[scale=0.3]{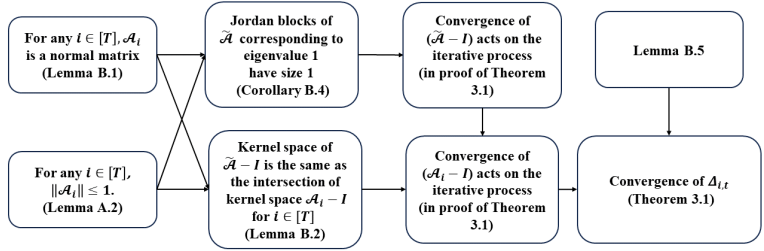}
    \end{minipage}
}
\caption{Road map for the proof of Theorem \ref{thm:EG-Period}}
\label{rb}
\end{figure}

Recall that \ref{Extra Gradient} can be written in a single linear difference system as
\begin{align}\label{eq:ext3}
\begin{bmatrix}
x_{t+1}\\
\\
y_{t+1}
\end{bmatrix}
=
\begin{bmatrix}
 I - \alpha\gamma A_tA_t^{\top}  & -\alpha A_t\\
\\
\alpha A_t^{\top} & I - \gamma\alpha A_t^{\top}A_t
\end{bmatrix}
\begin{bmatrix}
x_{t}\\
\\
y_{t}
\end{bmatrix}.
\end{align}
Denote $\mathcal{A}_t$ the iterative matrix in \eqref{eq:ext3}. The following lemma tells us that  $\mathcal{A}_t$ is a normal matrix.   
\begin{lem}\label{lm:normalEG}
For any $i \in [T]$, $\mathcal{A}_i$ is a normal matrix.
\end{lem}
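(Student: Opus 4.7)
The plan is to verify directly that $\mathcal{A}_t \mathcal{A}_t^\top = \mathcal{A}_t^\top \mathcal{A}_t$ by block multiplication. Writing
\[
\mathcal{A}_t = \begin{bmatrix} P & Q \\ -Q^\top & R \end{bmatrix},\qquad P := I - \alpha\gamma A_tA_t^{\top},\ Q := -\alpha A_t,\ R := I - \gamma\alpha A_t^{\top}A_t,
\]
I first note that $P$ and $R$ are symmetric (since $A_tA_t^\top$ and $A_t^\top A_t$ are), so
\[
\mathcal{A}_t^\top = \begin{bmatrix} P & -Q \\ Q^\top & R \end{bmatrix}.
\]

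Next I would expand both $\mathcal{A}_t \mathcal{A}_t^\top$ and $\mathcal{A}_t^\top \mathcal{A}_t$ as $2\times 2$ block products. The diagonal blocks come out to $P^2 + QQ^\top$ and $Q^\top Q + R^2$ in both products, so they agree automatically. The off-diagonal blocks differ only by a sign change in $Q$, so equality of the two products reduces to the single intertwining identity
\[
PQ = QR,
\]
together with its transpose $Q^\top P = RQ^\top$ (which is automatic given $P^\top=P$, $R^\top=R$).

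The remaining step, which is the only nontrivial one, is the short calculation
\[
PQ = (I - \alpha\gamma A_tA_t^{\top})(-\alpha A_t) = -\alpha A_t + \alpha^2\gamma\, A_t A_t^{\top} A_t,
\]
\[
QR = (-\alpha A_t)(I - \gamma\alpha A_t^{\top} A_t) = -\alpha A_t + \alpha^2\gamma\, A_t A_t^{\top} A_t,
\]
so the two coincide. I don't anticipate any real obstacle here: the key structural fact is just that the same matrix $A_t$ appears on both sides of $Q$, allowing $A_t A_t^\top A_t$ to be read either as $(A_t A_t^\top)A_t$ or $A_t(A_t^\top A_t)$. Together these give $\mathcal{A}_t \mathcal{A}_t^\top = \mathcal{A}_t^\top \mathcal{A}_t$, which is exactly the definition of normality. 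This normality will later be exploited to diagonalize each $\mathcal{A}_t$ in an orthonormal basis, which is what enables the Jordan-form analysis of $\tilde{\mathcal{A}} = \prod_t \mathcal{A}_t$ in the main proof of Theorem~\ref{thm:EG-Period}.
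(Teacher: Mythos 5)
Your proposal is correct and takes essentially the same route as the paper: a direct block computation verifying $\mathcal{A}_t\mathcal{A}_t^\top=\mathcal{A}_t^\top\mathcal{A}_t$, where the paper's observation that both products are block diagonal is exactly the intertwining identity $PQ=QR$ you isolate (it forces the off-diagonal blocks $\pm(PQ-QR)$ to vanish). Nothing is missing; the argument is complete.
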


\begin{proof}We have
\begin{align*}
 \mathcal{A}_i \mathcal{A}_i ^{\top}  =  \mathcal{A}_i^{\top}   \mathcal{A}_i
= 
\begin{bmatrix}
(I - \alpha \gamma A_i A_i^{\top})^2 + \alpha^2 A_i A_i^{\top} & 0\\
\\
0 & (I - \alpha\gamma A_i^{\top} A_i)^2 + \alpha^2 A_i^{\top} A_i 
\end{bmatrix}.
\end{align*}
    
\end{proof}

Using above lemma, we can  present several useful lemmas to describe Jordan form of matrix $\tilde{\mathcal{A}}$.

\begin{lem} \label{lm:ker1}
If $\alpha = \gamma <\frac{1}{\sigma}$, then for any $i\in [T]$, $\lVert \mathcal{A}_i  \lVert_2 \le 1$, and $\ker(\tilde{\mathcal{A}} - I) = \cap^T_{i=1} \ker(\mathcal{A}_i - I)$. Moreover, denote  $$\lambda_* = \max \{ \ \lvert \lambda \lvert \ | \ \lambda  \textnormal{\ is an eigenvalue of \ } 
\tilde{\mathcal{A}}, \lambda \ne 1   \},$$ then we have $\lambda_*<1$. 
\end{lem}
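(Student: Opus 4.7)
The plan is built around three ingredients: (i) each $\mathcal{A}_i$ is a contraction in the spectral norm; (ii) the only modulus-one eigenvalue of $\mathcal{A}_i$ is $1$; and (iii) for a normal contraction $M$, the identity $\lVert Mw \rVert_2 = \lVert w \rVert_2$ forces $w$ into the modulus-one eigenspace of $M$, which here coincides with $\ker(\mathcal{A}_i - I)$.

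First I would pin down the spectrum of $\mathcal{A}_i$. Using an SVD $A_i = U_i \Sigma_i V_i^\top$ and conjugating by $\mathrm{diag}(U_i, V_i)$, the matrix $\mathcal{A}_i$ block-diagonalizes into $2 \times 2$ blocks of the form $\bigl[\begin{smallmatrix} 1 - \alpha^2 \sigma^2 & -\alpha\sigma \\ \alpha\sigma & 1 - \alpha^2 \sigma^2 \end{smallmatrix}\bigr]$ (one per singular value $\sigma$ of $A_i$, plus identity blocks accounting for any dimension mismatch). Each such block has eigenvalues $(1 - \alpha^2\sigma^2) \pm i\alpha\sigma$, whose squared modulus is $1 - \alpha^2 \sigma^2 (1 - \alpha^2 \sigma^2)$. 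Since $\alpha \sigma_j \le \alpha \sigma < 1$ for every singular value $\sigma_j$ of every $A_i$, this modulus is strictly less than $1$ whenever $\sigma_j > 0$, and equals $1$ only when $\sigma_j = 0$, in which case the eigenvalue is exactly $1$. Combined with Lemma \ref{lm:normalEG}, this simultaneously yields $\lVert \mathcal{A}_i \rVert_2 \le 1$ and identifies the modulus-one eigenspace of $\mathcal{A}_i$ with $\ker(\mathcal{A}_i - I)$.

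Next I would run the main ``peeling'' argument. Write $\tilde{\mathcal{A}} = \mathcal{A}_T \mathcal{A}_{T-1} \cdots \mathcal{A}_1$ and suppose $\tilde{\mathcal{A}} v = \lambda v$ with $|\lambda| = 1$. Submultiplicativity and step (i) give the chain $\lVert v \rVert_2 = \lVert \tilde{\mathcal{A}} v \rVert_2 \le \lVert \mathcal{A}_T \rVert_2 \cdots \lVert \mathcal{A}_1 \rVert_2 \lVert v \rVert_2 \le \lVert v \rVert_2$, so every partial product $\mathcal{A}_j \mathcal{A}_{j-1} \cdots \mathcal{A}_1 v$ has norm $\lVert v \rVert_2$. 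Starting from $j = 1$, step (iii) forces $\mathcal{A}_1 v = v$; then $\lVert \mathcal{A}_2 \mathcal{A}_1 v \rVert_2 = \lVert \mathcal{A}_1 v \rVert_2$ rewrites as $\lVert \mathcal{A}_2 v \rVert_2 = \lVert v \rVert_2$, delivering $\mathcal{A}_2 v = v$; inductively $\mathcal{A}_i v = v$ for every $i$. Hence $\tilde{\mathcal{A}} v = v$, forcing $\lambda = 1$, and $v \in \bigcap_i \ker(\mathcal{A}_i - I)$. The reverse inclusion $\bigcap_i \ker(\mathcal{A}_i - I) \subseteq \ker(\tilde{\mathcal{A}} - I)$ is immediate, so the kernel equality follows from the $\lambda = 1$ case; and $\lambda_* < 1$ follows because the spectrum of $\tilde{\mathcal{A}}$ is finite and the only eigenvalue with $|\lambda| = 1$ is $\lambda = 1$ itself.

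The substantive obstacle is essentially just (iii) — connecting the operator-norm chain to the pointwise eigenvector equations $\mathcal{A}_i v = v$. This uses the spectral theorem for normal matrices: writing $v$ in an orthonormal eigenbasis of $\mathcal{A}_i$, the condition $\lVert \mathcal{A}_i v \rVert_2 = \lVert v \rVert_2$ annihilates all coefficients attached to eigenvalues of modulus $< 1$. Because the modulus-one spectrum of $\mathcal{A}_i$ consists only of $\lambda = 1$, ``lying in the modulus-one eigenspace'' is the same as ``being fixed by $\mathcal{A}_i$,'' which is exactly what closes the argument and keeps the peeling step clean rather than technical.
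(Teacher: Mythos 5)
Your proposal is correct and follows essentially the same route as the paper's proof: show each (normal) $\mathcal{A}_i$ satisfies $\lVert \mathcal{A}_i \rVert_2 \le 1$ with $1$ as its only modulus-one eigenvalue, then use the equality chain of norms for an eigenvector of $\tilde{\mathcal{A}}$ with modulus-one eigenvalue to force $\mathcal{A}_i v = v$ for every $i$, which yields both the kernel identity and $\lambda_* < 1$. The only cosmetic differences are that you derive the spectral facts by an explicit SVD block-diagonalization where the paper cites its Lemma \ref{lm:eig-static}, and you peel the product inductively where the paper argues via a minimal offending index; neither changes the substance.
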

\begin{proof}
$(\Leftarrow)$ : If $v \in \cap^T_{i=1} \ker(\mathcal{A}_i - I)$, then for any $i \in \{1,...,T \}$, $\mathcal{A}_iv = v$, thus 
\begin{align*}
   \tilde{\mathcal{A}}v = \mathcal{A}_T \mathcal{A}_{T-1} ... \mathcal{A}_1 v = v.
\end{align*}
Then we have  $\cap^T_{i=1} \ker(\mathcal{A}_i - I)\subseteq \ker(\tilde{\mathcal{A}} - I) $.

$(\Rightarrow)$ : 
Let $v \in \ker(  \tilde{\mathcal{A}} - I) $, then we have $\lVert  v \lVert_2  = \lVert \mathcal{A}_T  ... \mathcal{A}_1 v \lVert_2$.
Denote $\lVert \cdot \lVert_2$ as 2-norm of matrices and vectors. 
According to Lemma \ref{lm:eig-static}, if $\alpha = \gamma <\frac{1}{\sigma}$, then the spectral radius $\rho({\mathcal{A}_i})$ of $ \mathcal{A}_i$ is no larger than 1. Combining with the fact that $\mathcal{A}_i$ is normal, we have $\lVert{\mathcal{A}_i}\lVert_2=\rho(\mathcal{A}_i)\le 1$ for $i\in [T]$. 
We claim that if $\lVert  v \lVert_2  = \lVert \mathcal{A}_T  ... \mathcal{A}_1 v \lVert_2$, then we have $A_iv=v$ for $i\in [T]$. We prove it by contradiction. Suppose the claim is not true. Let $s$ be the minimum $i$ such that $\mathcal{A}_i v\ne v$. Since $\mathcal{A}_s$ is normal and its eigenvalues whose modulus equal to 1 can only be 1, we have  $\lVert \mathcal{A}_s v \lVert_2 < \lVert v \lVert_2$, then there holds
\begin{align*}
\lVert  v \lVert_2 & = \lVert \mathcal{A}_T ... \mathcal{A}_s ... \mathcal{A}_1 v \lVert_2\\
& = \lVert \mathcal{A}_T ... \mathcal{A}_s v \lVert_2 \\
& <  \lVert \mathcal{A}_T ... \mathcal{A}_{s +1} \lVert_2 \lVert v \lVert_2 \\
& \le \lVert v \lVert_2,
\end{align*}

which leads to a contradiction. Therefore, for any $i \in [T]$, we obtain that $\mathcal{A}_i v= v$, i.e., $v\in \ker(\mathcal{A}_i-I)$. 
From the claim, we know that if $v \in \ker(  \tilde{\mathcal{A}} - I) $ , then $v\in \ker(\mathcal{A}_i-I)$ for $i\in [T]$. Thus we have $\ker(\tilde{\mathcal{A}} - I) \subseteq \cap^T_{i=1} \ker(\mathcal{A}_i - I)$.

Next we prove that $\lambda_*\le 1$. By the definition of $\lambda_*$, we obtain 
\begin{align*}  \lambda_*&\le\rho(\mathcal{A}_T\cdots\mathcal{A}_1)\\
    &\le \lVert \mathcal{A}_T  \cdots \mathcal{A}_1\lVert_2\\&\le \lVert \mathcal{A}_T \lVert_2 \cdots \lVert \mathcal{A}_1\lVert_2\le 1,
\end{align*}
where the second inequality holds because the spectral radius $\rho(A)\le \lVert A\lVert_2$ for any matrix $A$. 

Now we prove that $\lambda_*\neq 1$, which means that $\tilde{\mathcal{A}}$  have no eigenvalue $\lambda$ satisfying $\lambda \ne 1$ and $\lvert \lambda\lvert=1$. 
Assuming $v$ is the eigenvector of $\tilde{\mathcal{A}}$ corresponding to $\lambda'$, where $\lvert \lambda' \lvert=1$, we can obtain $ \lVert \mathcal{A}_T ... \mathcal{A}_1 v \lVert_2=\lVert\lambda'v\lVert_2  =\lVert  v \lVert_2 $. Similar to the proof above, $\mathcal{A}_i v=v$ for $i\in [T]$, which implies that $\lambda'=1$. This completes the proof of $\lambda_*<1$.

\end{proof}

\begin{lem}\label{per1}
Under a suitable orthogonal normal basis,  $\tilde{\mathcal{A}}$ has form
\begin{align}
\begin{bmatrix}
\bf{I_{r \times r}} & \bf{0} \\
\\
\bf{0} & \bf{C}
\end{bmatrix},
\end{align}
where $\bf{I_{r \times r}} \in \BR^{r \times r}$, $\bf{C} \in \BR^{(n+m-r) \times (n+m-r)}$, and $r = \Dim_{\BR} (\ker (\tilde{\mathcal{A}} - I)  )$.
\end{lem}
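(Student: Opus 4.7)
The plan is to produce the orthogonal decomposition $\BR^{n+m} = V \oplus V^{\perp}$ with $V := \ker(\tilde{\mathcal{A}} - I)$, and then show that \emph{both} $V$ and $V^{\perp}$ are invariant under $\tilde{\mathcal{A}}$. Once this is established, picking any orthonormal basis whose first $r = \dim V$ vectors span $V$ and whose remaining vectors span $V^{\perp}$ produces a block-diagonal matrix; the top-left block equals $I_{r \times r}$ because $\tilde{\mathcal{A}}$ acts as the identity on $V$ by definition, and the bottom-right block is some matrix $C$ acting on $V^{\perp}$. Invariance of $V$ under $\tilde{\mathcal{A}}$ is immediate, so the entire content of the lemma is the invariance of $V^{\perp}$.

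For the invariance of $V^{\perp}$, the key input is the characterization $V = \cap_{i=1}^{T} \ker(\mathcal{A}_i - I)$ from Lemma \ref{lm:ker1}, combined with the normality of every $\mathcal{A}_i$ from Lemma \ref{lm:normalEG}. I will first argue that it suffices to show $V^{\perp}$ is invariant under each \emph{individual} $\mathcal{A}_i$ (since $\tilde{\mathcal{A}}$ is a product of the $\mathcal{A}_i$). Then, using normality, I will invoke the standard fact that for a normal operator $N$ and any scalar $\lambda$, one has $\ker(N - \lambda I) = \ker(N^{*} - \bar{\lambda} I)$. Applied to $\mathcal{A}_i$ with $\lambda = 1$, this yields $V \subseteq \ker(\mathcal{A}_i - I) = \ker(\mathcal{A}_i^{*} - I)$, so every $w \in V$ satisfies $\mathcal{A}_i^{*} w = w$.

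Given $v \in V^{\perp}$ and any $w \in V$, we then compute
\begin{align*}
\langle \mathcal{A}_i v,\, w \rangle = \langle v,\, \mathcal{A}_i^{*} w \rangle = \langle v,\, w \rangle = 0,
\end{align*}
which shows $\mathcal{A}_i v \in V^{\perp}$ and hence $\mathcal{A}_i(V^{\perp}) \subseteq V^{\perp}$ for each $i$. Applying this successively along the product $\tilde{\mathcal{A}} = \mathcal{A}_T \cdots \mathcal{A}_1$ gives $\tilde{\mathcal{A}}(V^{\perp}) \subseteq V^{\perp}$, as required.

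I do not expect a serious obstacle here, since everything reduces to the two previously-proven facts (normality of each $\mathcal{A}_i$, and the equality $V = \cap_i \ker(\mathcal{A}_i - I)$) together with the elementary lemma about adjoints of normal operators. The only subtlety worth flagging in the write-up is that the invariance must be pushed through the \emph{product} one factor at a time, rather than trying to use normality of $\tilde{\mathcal{A}}$ itself, since $\tilde{\mathcal{A}}$ is generally \emph{not} normal even though each $\mathcal{A}_i$ is.
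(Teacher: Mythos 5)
Your proposal is correct and takes essentially the same approach as the paper: both arguments combine Lemma \ref{lm:ker1} ($\ker(\tilde{\mathcal{A}}-I)=\cap_{i=1}^{T}\ker(\mathcal{A}_i-I)$) with the normality of each $\mathcal{A}_i$ (Lemma \ref{lm:normalEG}) to conclude that the orthogonal complement of $\ker(\tilde{\mathcal{A}}-I)$ is invariant under every factor $\mathcal{A}_i$, hence under the product $\tilde{\mathcal{A}}$, which yields the block-diagonal form in an adapted orthonormal basis. The only difference is cosmetic: the paper establishes this invariance by an explicit block-matrix computation (deriving $\bf{C^{\top}_{i,1}}\bf{C_{i,1}}=0$, hence $\bf{C_{i,1}}=0$), while you use the coordinate-free identity $\ker(N-I)=\ker(N^{*}-I)$ for normal $N$ and push invariance through the product one factor at a time, exactly as the paper does implicitly.
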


\begin{proof} Let $\{v_1, ..., v_r\}$ be an orthogonal normal basis of  $\ker (\tilde{\mathcal{A}} - I)$, i.e., $\langle v_i,v_j \rangle = 1$ if $i = j$, $\langle v_i,v_j \rangle=0$ if $i\ne j$.
First, we extend $\{v_1, ..., v_r\}$  to an orthonormal basis of $\BR^{n+m}$ and denote this basis by $\{v_1, ..., v_r, v_{r+1}, ..., v_{n+m}\}$. We also denote $M$ the matrix consisting of
$\{v_1, ..., v_r, v_{r+1}, ..., v_{n+m}\}$ as columns. With these settings, we have $M^{\top}M = M M^{\top} = I$. 

Under this basis, $\mathcal{A}_i$ is represented by matrix
\begin{align}\label{ima}
\begin{bmatrix}
\bf{I_{r \times r}} & \bf{0} \\
\\
\bf{C_{i,1}} & \bf{C_{i,2}}
\end{bmatrix}.
\end{align}

Moreover, as  $\mathcal{A}_i$ is a normal matrix, its representation under an orthogonal normal basis is still a normal matrix, thus we have

\begin{align}\label{asd}
\begin{bmatrix}
\bf{I_{r \times r}} & & \bf{0} \\
\\
\bf{C_{i,1} }& &\bf{ C_{i,2}}
\end{bmatrix}
\begin{bmatrix}
\bf{I_{r \times r}} & & \bf{C^{\top}_{i,1} }\\
\\
\bf{0} & &\bf{C^{\top}_{i,2} }
\end{bmatrix}
=
\begin{bmatrix}
\bf{I_{r \times r}} & & \bf{C^{\top}_{i,1} } \\
\\
\bf{0}& & \bf{C^{\top}_{i,2}}
\end{bmatrix}
\begin{bmatrix}
\bf{I_{r \times r}} & &\bf{0} \\
\\
\bf{C_{i,1} }& &\bf{C_{i,2} }
\end{bmatrix}.
\end{align}

Note that \eqref{asd} is equivalent to 
\begin{align*}
\begin{bmatrix}
\bf{I_{r \times r}}  & & \bf{C^{\top}_{i,1} }\\
\\
\bf{C_{i,1} }& & \bf{C_{i,1}C^{\top}_{i,1} } + \bf{ C_{i,2} C^{\top}_{i,2} }
\end{bmatrix}
=
\begin{bmatrix}
\bf{I_{r \times r}}  + \bf{C^{\top}_{i,1}C_{i,1} }& & \bf{C^{\top}_{i,1}C_{i,2} }\\
\\
\bf{C^{\top}_{i,2}C_{i,1} }& & \bf{C^{\top}_{i,2}C_{i,2} }
\end{bmatrix}.
\end{align*}
As a consequence, we have $\bf{C^{\top}_{i,1} } \bf{C_{i,1}} = 0$, and furthermore, this implies $\bf{C_{i,1} }= 0$. Thus \eqref{ima} has form

\begin{align*}
\begin{bmatrix}
\bf{I_{r \times r}} & \bf{0} \\
\\
\bf{0} & \bf{C_{i,2} }
\end{bmatrix},
\end{align*}
and under this basis, $\tilde{\mathcal{A}}$ can be represented by 
\begin{align*}
\begin{bmatrix}
\bf{I_{r \times r}} & & \bf{0} \\
\\
\bf{0} & & \prod^T_{i=1} \bf{C_{i,2}}
\end{bmatrix}.
\end{align*}
Since $\prod^T_{i=1} \bf{C_{i,2}}$ is a matrix with size of $(n+m-r)\times(n+m-r)$, we complete the proof.
\end{proof}

\begin{cor}\label{cor:Jordan block}
If $\lambda=1$ is an eigenvalue of  $\tilde{\mathcal{A}}$, then the Jordan blocks of $\tilde{\mathcal{A}}$ corresponding to eigenvalue $1$ has size $1$.
\end{cor}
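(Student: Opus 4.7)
The plan is to read off the Jordan structure directly from the block-diagonal form furnished by Lemma~\ref{per1}. Under the chosen orthonormal basis, $\tilde{\mathcal{A}}$ decomposes as $\mathrm{diag}(I_{r\times r}, C)$ with $r=\dim_{\mathbb R}\ker(\tilde{\mathcal{A}}-I)$. Since the Jordan form of a block-diagonal matrix is the union of the Jordan forms of its blocks, and the $I_{r\times r}$ block already contributes $r$ Jordan blocks of size $1$ at eigenvalue $1$, the only way a Jordan block of size $\ge 2$ at eigenvalue $1$ could arise is through the bottom block $C$.

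First I would show that $1$ is not an eigenvalue of $C$. Suppose for contradiction that $Cw=w$ for some nonzero $w\in\mathbb{R}^{n+m-r}$. Then the vector $v=(0,w)^{\top}$, expressed in the chosen basis, satisfies $\tilde{\mathcal{A}}v=v$, so $v\in\ker(\tilde{\mathcal{A}}-I)$. However, by construction in the proof of Lemma~\ref{per1}, the first $r$ orthonormal basis vectors $\{v_1,\ldots,v_r\}$ form an orthonormal basis of $\ker(\tilde{\mathcal{A}}-I)$, and $v$ is orthogonal to each of them (its first $r$ coordinates in this basis are zero). This forces $v=0$, hence $w=0$, a contradiction.

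Therefore, every Jordan block of $\tilde{\mathcal{A}}$ at eigenvalue $1$ lies inside the $I_{r\times r}$ summand, and all such blocks have size exactly $1$. This concludes the corollary. I do not anticipate a genuine obstacle here: the argument is a direct unpacking of the orthogonal block-diagonal structure already established, combined with the definitional fact that $r$ is the full dimension of $\ker(\tilde{\mathcal{A}}-I)$. The only subtlety is remembering that we are working in the rotated basis of Lemma~\ref{per1}, so the orthogonality of $v$ to $v_1,\ldots,v_r$ must be argued in that basis, not in the original one.
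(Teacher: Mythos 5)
Your argument is correct and follows essentially the same route as the paper: both rest on the invariant block decomposition of Lemma \ref{per1} together with the fact that $r=\Dim_{\BR}\ker(\tilde{\mathcal{A}}-I)$, so the complementary block $\bf{C}$ cannot carry the eigenvalue $1$ (you rule it out by orthogonality to $v_1,\dots,v_r$, the paper by linear independence), forcing all Jordan blocks at $1$ to have size $1$. No gaps.
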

\begin{proof} From Lemma \ref{per1}, we have a decomposition  $\BR^{m+n} = \ker( \tilde{\mathcal{A}}- I ) \oplus V'$, and both these two spaces are invariant under the action of $\tilde{\mathcal{A}}$. Thus we can choose a basis of $V'$ consisting of Jordan chains and denote this basis by $\{w_1,...,w_{m+n-r}\}$, then under the basis $\{w_1,...,w_{m+n-r}\} \cup \{v_1,...,v_r\}$, $\tilde{\mathcal{A}}$ is a block diagonal matrix. Moreover, there is no eigenvectors corresponding to eigenvalue $1$ in $\{w_1,...,w_{m+n-r}\}$, because any $w_i$ is linearly independent with $\{v_1,...,v_r\}$ (since they are basis), thus if some $w_i$ is an eigenvector of eigenvalue $1$, then a contradiction is conducted since it is assumed that $\Dim_{\BR}( \ker( \tilde{\mathcal{A}} - I) ) = r$.
\end{proof}

\begin{lem}\label{lm:EG-convergence equiavalent}
    Denote $X_t = (x_t,y_t)$ be the strategies of players at round of $t$ when they are playing extra-gradient. For any $i \in [T]$, if $\lVert (\mathcal{A}_i-I) X_t \lVert_2 $ converges to 0 with rate $\CO \left((\lambda_*)^{t/T} \cdot \textnormal{Poly}(t) \right)$, then $\Delta_{i,t} $ converges to 0 with rate $\CO \left((\lambda_*)^{t/T} \cdot \textnormal{Poly}(t) \right)$.
\end{lem}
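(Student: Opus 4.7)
The plan is to establish a pointwise inequality of the form $\Delta_{i,t}\le C\,\|(\mathcal{A}_i-I)X_t\|_2$ for a constant $C$ that depends only on $\alpha$ and $A_i$ (not on $t$). Once this holds, the convergence of $\Delta_{i,t}$ at the prescribed rate follows immediately from the hypothesis on $\|(\mathcal{A}_i-I)X_t\|_2$. The main idea is to diagonalize everything through the singular value decomposition of $A_i$ so that the quadratic forms $\|(\mathcal{A}_i-I)X_t\|_2^2$ and $\|A_i^\top x_t\|_2^2+\|A_iy_t\|_2^2$ become weighted sums over singular-value coordinates that can be compared termwise.

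Concretely, I would first use $\alpha=\gamma$ to rewrite
\begin{equation*}
(\mathcal{A}_i-I)X_t=\alpha\begin{bmatrix}-(\alpha A_iA_i^\top x_t+A_iy_t)\\ A_i^\top x_t-\alpha A_i^\top A_iy_t\end{bmatrix}.
\end{equation*}
Next, write the SVD $A_i=U_i\Sigma_iV_i^\top$ with singular values $\sigma_1,\ldots,\sigma_k$ (for the nonzero part), and introduce rotated coordinates $\tilde{x}_t=U_i^\top x_t$, $\tilde{y}_t=V_i^\top y_t$. In these coordinates $\|A_i^\top x_t\|_2^2+\|A_iy_t\|_2^2=\sum_j\sigma_j^2(\tilde{x}_{t,j}^2+\tilde{y}_{t,j}^2)$, and a direct coordinate expansion of the two blocks above gives
\begin{equation*}
\|(\mathcal{A}_i-I)X_t\|_2^2=\alpha^2\sum_j\sigma_j^2\bigl[(\alpha\sigma_j\tilde{x}_{t,j}+\tilde{y}_{t,j})^2+(\tilde{x}_{t,j}-\alpha\sigma_j\tilde{y}_{t,j})^2\bigr]=\alpha^2\sum_j\sigma_j^2(1+\alpha^2\sigma_j^2)(\tilde{x}_{t,j}^2+\tilde{y}_{t,j}^2),
\end{equation*}
where the mixed terms cancel cleanly. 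Since $1+\alpha^2\sigma_j^2\ge 1$ termwise, this yields $\|(\mathcal{A}_i-I)X_t\|_2^2\ge\alpha^2(\|A_i^\top x_t\|_2^2+\|A_iy_t\|_2^2)$.

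Finally, combining this with the elementary inequality $a+b\le\sqrt{2}\sqrt{a^2+b^2}$ applied to $a=\|A_i^\top x_t\|_2$ and $b=\|A_iy_t\|_2$ gives
\begin{equation*}
\Delta_{i,t}\le\sqrt{2}\sqrt{\|A_i^\top x_t\|_2^2+\|A_iy_t\|_2^2}\le\frac{\sqrt{2}}{\alpha}\,\|(\mathcal{A}_i-I)X_t\|_2,
\end{equation*}
from which the asserted $\mathcal{O}((\lambda_*)^{t/T}\cdot\mathrm{Poly}(t))$ rate for $\Delta_{i,t}$ is transferred from the assumed rate for $\|(\mathcal{A}_i-I)X_t\|_2$. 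There is no real obstacle here beyond the bookkeeping of the SVD expansion; the only thing to verify carefully is the cancellation of the cross terms $\pm 2\alpha\sigma_j\tilde{x}_{t,j}\tilde{y}_{t,j}$ between the two block contributions, which is what makes the final estimate clean and independent of the step size up to the $1/\alpha$ prefactor.
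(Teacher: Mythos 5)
Your proposal is correct, and it reaches the same pointwise conclusion as the paper — an inequality of the form $\Delta_{i,t}\le \tfrac{C}{\alpha}\lVert(\mathcal{A}_i-I)X_t\rVert_2$ — but by a genuinely different route. The paper works directly with the two block components of $(\mathcal{A}_i-I)X_t$: it isolates $(\gamma^2 A_i^\top A_i+I)A_i^\top x_t$ as a linear combination of the two blocks via the triangle inequality, then removes the factor $(\gamma^2 A_i^\top A_i+I)$ using $\lVert(\gamma^2 A_i^\top A_i+I)^{-1}\rVert_2\le 1$, and repeats the argument symmetrically for $\lVert A_i y_t\rVert_2$; the resulting constant involves $\max_i\lVert A_i\rVert_2$. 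You instead pass to the SVD coordinates of $A_i$ and observe the exact identity
\begin{equation*}
\lVert(\mathcal{A}_i-I)X_t\rVert_2^2=\alpha^2\sum_j\sigma_j^2\bigl(1+\alpha^2\sigma_j^2\bigr)\bigl(\tilde{x}_{t,j}^2+\tilde{y}_{t,j}^2\bigr),
\end{equation*}
which I have checked: the cross terms $\pm2\alpha\sigma_j\tilde{x}_{t,j}\tilde{y}_{t,j}$ from the two blocks do cancel (this is the same structural fact that makes the EG iteration matrix normal), and the coordinates with $\sigma_j=0$ contribute to neither side, so the comparison with $\lVert A_i^\top x_t\rVert_2^2+\lVert A_i y_t\rVert_2^2=\sum_j\sigma_j^2(\tilde{x}_{t,j}^2+\tilde{y}_{t,j}^2)$ is termwise valid. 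Your approach buys a cleaner, two-sided relation with the dimension- and payoff-independent constant $\sqrt{2}/\alpha$, whereas the paper's block-algebra argument avoids introducing the SVD at this point and works for $\alpha,\gamma$ not necessarily tied together by the exact cancellation (its constant degrades with $\lVert A_i\rVert_2$). Both are complete proofs of the lemma; the only caveat in yours is that the clean identity uses $\alpha=\gamma$, which is indeed assumed in Theorem \ref{thm:EG-Period}, so nothing is lost.
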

\begin{proof}
Writing $(\mathcal{A}_i-I) X_{t}$ in a matrix form:
\begin{align*}
\begin{bmatrix}
-\alpha \gamma  A_i A_i^{\top} & -\alpha A_i  \\
\\
\alpha  A_i ^{\top} & -\alpha \gamma A_i^{\top} A_i 
\end{bmatrix}
\begin{bmatrix}
 x_{t-1} \\
 \\
 y_{t-1} \\
\end{bmatrix}
= \begin{bmatrix}
-\alpha \gamma A_i A_i^{\top}  x_{t-1} - \alpha A_iy_{t-1}\\
\\
\alpha  A_i ^{\top} x_{t-1} - \alpha \gamma A_i^{\top} A_i  y_{t-1}
\end{bmatrix}.
\end{align*}
For the sake of readability, we denote $g(t)= (\lambda_*)^{t/T} \cdot \textnormal{Poly}(t)$. According to the assumption, there is a constant $c$ such that $\lVert(\mathcal{A}_i-I)X_{t}\lVert_2 \leq c g(t) $, then we have
\begin{align*}
&\lVert - \gamma A_i A_i^{\top} x_{t-1} -  A_i y_{t-1}\lVert_2  \leq \frac{c g(t)}{\alpha},\\
 &\lVert A_i ^{\top}  x_{t-1} - \gamma A_i^{\top} A_i  y_{t-1}\lVert_2 \leq \frac{c g(t)}{\alpha}. 
\end{align*}
Let $c_1 = \max\{ \lVert A_i \lVert_2, i \in [T]\}$. Using these two inequalities to bound
 $\lVert A_i^{\top} x_{t}\lVert_2$, we have
 \begin{align*}
  &\lVert (\gamma^2  A_i^{\top}A_i+I)A_i^{\top} x_{t-1}\lVert_2 \\
  \\
  =&\lVert \gamma^2  A_i^{\top} A_i A_i^{\top} x_{t-1} + A_i^{\top}x_{t-1}\lVert_2 \\
  \\
  =&\lVert  A_i^{\top} x_{t-1} -  \gamma A_i^{\top} A_i y_{t-1} - \gamma A_i^{\top} (- \gamma A_i A_i^{\top} x_{t-1} -  A_i y_{t-1})\lVert_2\\
  \\
  \leq & \lVert  A_i^{\top} x_{t-1} -\gamma A_i^{\top} A_i y_{t-1} \lVert_2 + 
  \gamma \lVert A_i^{\top} \lVert_2 \lVert  - \gamma A_i A_i^{\top} x_{t-1} -  A_i y_{t-1} \lVert_2\\
  \\
  \leq & \frac{c (1+\gamma c_1) g(t)}{\alpha}.
 \end{align*}

Since matrix $\gamma^2  A_i^{\top}A_i+I$ is invertible, then
\begin{align*}
\lVert A_i x_{t}\lVert_2
=&(\gamma^2  A_i^{\top}A_i+I)^{-1}(\gamma^2  A_i^{\top}A_i+I)A_i^{\top} x_{t-1}\lVert_2 \\
\\
\leq& \lVert (\gamma^2  A_i^{\top}A_i+I)^{-1} \lVert_2 
\lVert (\gamma^2  A_i^{\top}A_i+I)A_i^{\top} x_{t-1}\lVert_2\\
\\
\leq &\lVert (\gamma^2  A_i^{\top}A_i+I)A_i^{\top} x_{t-1}\lVert_2\\
\\
\leq& \frac{c (1+\gamma) g(t)}{\alpha},
\end{align*}
where the last inequality is due to $\lVert (\gamma^2  A_i^{\top}A_i+I)^{-1} \lVert_2 \leq 1$.
 Similarly, we can obtain $$\lVert A_i y_{t}\lVert_2 \leq \frac{c (1+\gamma c_1) g(t)}{\alpha}.$$ Thus by definition of $\Delta_{i,t}=\lVert A_i^{\top} x_{t}\lVert_2 +\lVert A_i y_{t}\lVert_2$, $\Delta_{i,t}$ converges to 0 with rate $\CO \left((\lambda_*)^{t/T} \cdot \textnormal{Poly}(t) \right)$.
\end{proof}

Now we are ready to prove Theorem \ref{thm:EG-Period}.

\begin{proof}[Proof of Theorem \ref{thm:EG-Period}.]
We have proved $\lambda_*<1$ in Lemma \ref{lm:ker1}, now we prove the part of convergence rate.
Note that here we cannot directly apply the Floquet theorem in Proposition \ref{Floquet}, as it requires all iterative matrices within a period to be invertible. However, the proof
here follows the same idea as  the Floquet theorem : the convergence behavior of a periodic linear difference system is determined by the product of all iterative matrices of the system in a period.
According to Corollary \ref{cor:Jordan block}, we can write Jordan form $J$ of $\tilde{\mathcal{A}}$ in the following way:
\begin{align*}
    J=
    \begin{bmatrix}
    I & 0\\
    0  & \tilde{J}
    \end{bmatrix},
\end{align*}
where $\tilde{J}$ consists of Jordan blocks corresponding to eigenvalues whose modulus not equal to 1. According to Lemma \ref{lm:ker1}, we have that the modulus of  eigenvalues of $\tilde{J}$ are less than 1. Moreover, we assume 
\begin{align*}
    J = P^{-1} \tilde{A}P.
\end{align*}

Denote $J_k(\lambda)$ as a Jordan block corresponding to eigenvalue $\lambda$ with size $k$, and $\lvert \lambda \lvert <1$. We can write $J_k(\lambda) = \lambda I+ N$, where $N$ represents the nilpotent matrix whose superdiagonal contains $1$'s and all other entries are zero. Moreover, we have $N^k=0$ and $\lVert N \lVert_2 = 1$. 

For each Jordan block $J_k(\lambda)$, without loss of generality, when $s > 2k$, by the binomial theorem:
\begin{align*}
  J^s_k(\lambda)  =   (\lambda I +N)^s
  = \sum_{r=0}^s\left(\begin{array}{l}
s \\
r
\end{array}\right) \lambda^{s-r} N^r.
\end{align*}
Then 
$$
\lVert J^s_k(\lambda) \lVert_2 \leq
(k-1) \left(\begin{array}{l}
s \\
k-1
\end{array}\right)
\lvert \lambda \lvert^{s-k+1},
$$
since $\lVert N \lVert_2 = 1$ and $s >2k$.
We know that $\left( \begin{array}{l}
s \\
k-1
\end{array}\right)$ is a polynomial of $s$ with degree $k \leq n+m$. Since $\lvert \lambda \lvert < 1$, $\lVert J^s_k(\lambda) \lVert_2$ goes to zero in rate $\CO \left((\lambda_*)^s \cdot \textnormal{Poly}(s) \right)$.
Since $J^s_k(\lambda)$ are blocks in block diagnol matrix $\tilde{J}^s$, 
then 
\begin{align*}
    \lVert \tilde{J}^s \lVert_2 \leq \sum_{ \lambda \in \text{Eigenvalue}\left(\tilde{\mathcal{A}}\right), \lambda \neq 1 }\lVert J^s_k(\lambda) \lVert_2,
\end{align*}

and $\lVert \tilde{J}^s \lVert_2$ goes to zero in rate $\CO \left((\lambda_*)^{s} \cdot \textnormal{Poly}(s) \right)$.
For any $t$, without loss of generality, we assume that $t = s T + j$, and $j\in [T]$ is the remainder. Then we have 
\begin{align*}
(\tilde{\mathcal{A}}-I) X_{t}
=&(\tilde{\mathcal{A}}-I)\tilde{\mathcal{A}^{s}} X_j\\
\\
=&(\tilde{\mathcal{A}}^{s+1}  - \tilde{\mathcal{A}}^{s}) X_j\\
\\
=& P^{-1} (J^{s+1} - J^{s}) P X_j\\
\\
=& P^{-1}\left(
\begin{bmatrix}
    I & 0\\
    0  & J_1^{s+1}
\end{bmatrix}
-
\begin{bmatrix}
    I & 0\\
    0  & J_1^{s}
\end{bmatrix}
\right) P X_j \\
\\
=&P^{-1}\left(
\begin{bmatrix}
    0 & 0\\
    0  & J_1^{s+1} - J_1^{s}
\end{bmatrix}
\right) P X_j.
\end{align*}
Taking norm on both sides, we have
\begin{align*}
    \lVert (\tilde{\mathcal{A}}-I) X_{t} \lVert_2
    \leq (\lVert J_1^{s+1} \lVert_2 + \lVert J_1^{s} \lVert_2) \lVert X_j \lVert_2
    \leq 2 \lVert J_1^{s} \lVert_2 \lVert X_j \lVert_2.
\end{align*}
From definition of $s$, we know that
$s = \lfloor t/T \rfloor \geq t/T-1 $, leading to  $(\lambda_*)^{s} \leq \frac{1}{\lambda_*}(\lambda_*)^{t/T} $. Since $\lVert J_1^{s} \lVert_2$ converges to zero with rate $\CO \left((\lambda_*)^{s} \cdot \textnormal{Poly}(s) \right)$, then $\lVert (\tilde{\mathcal{A}}-I) X_{t }\lVert_2$ converges to zero with rate  $\CO \left((\lambda_*)^{t/T} \cdot \textnormal{Poly}(t) \right)$. By Lemma \ref{lm:ker1}, for any $i \in [T]$, $\lVert (\mathcal{A}_i-I) X_{t}\lVert_2$ goes to zero in rate $\CO \left((\lambda_*)^{t/T} \cdot \textnormal{Poly}(t) \right)$.

According to Lemma \ref{lm:EG-convergence equiavalent}, we conclude for any $i \in [T]$, $\Delta_{i,t}$ goes to zero with convergence rate $\CO \left((\lambda_*)^{t/T} \cdot \textnormal{Poly}(t) \right)$, this completes the proof.
\end{proof}

\section{Omitted Proofs from Theorem \ref{thm:OGDMM-Period}}\label{onpg}
\OGDMMPeriod*

\subsection{On initialization}

Before proving Theorem \ref{thm:OGDMM-Period}, we discuss a more detailed question : 

\centerline{\emph{Which initial points will make (OGDA) and (NM) diverge ? }}

In fact, it is obviously that not every initial point will make optimistic gradient descent ascent and negative momentum method diverge. For example, if the initial point is chosen to be
\begin{align}
    (x_0,y_0) \in (\ker A_t^{\top}, \ker A_t),
\end{align}
then these point will be not diverge because they are stationary points of the game dynamics.

In the proof of Theorem \ref{thm:OGDMM-Period} below, we explicitly construct  initial points that diverge exponentially fast under \eqref{Optimistic Gradient} or \eqref{Negative Momentum}. In Figure \ref{ccccc}, we present an example of an initial point that converges under \eqref{Negative Momentum} with the game defined by \eqref{p2mg}. In fact, we can see that these converge initial points of \eqref{Optimistic Gradient} or \eqref{Negative Momentum} lie on a low dimension space, thus have measure zero. Note that this doesn't conflict with Theorem \ref{thm:OGDMM-Period}, since we are \textbf{not} claiming that optimistic gradient descent ascent or negative momentum method will make every initial point diverge. 

\begin{figure}[h]
\centering
\subfigure
{
    \begin{minipage}[b]{.6\linewidth}
        \centering
        \includegraphics[scale=0.5]{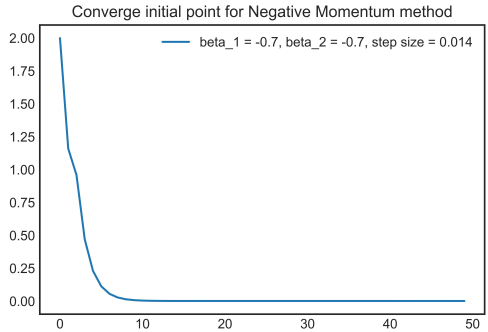}
    \end{minipage}
}
\caption{A converge initial point for negative momentum method, with initial condition $x_0 = x_{-1} = 0$, and 
$y_0 = (-0.4,1), \ y_{-1} = (1,-1)$. The curve is $\Delta_{1,t}$.}
\label{ccccc}
\end{figure}

In the following sections \ref{subnm} and  \ref{subogda}, we will prove that both negative momentum method and OGDA diverge with an exponential rate under certain initial conditions. The proof idea is the same for these two learning dynamics : firstly, we prove that the product of iterative matrices in a period for these learning dynamics have an eigenvalue with modulus larger than $1$; then, we show 
that eigenvectors corresponding to this eigenvalue as initial condition will diverge under the learning dynamics.

\subsection{Negative Momentum Method}\label{subnm}
We first consider negative momentum method with step size $\eta$, recall it can be written as:
\begin{align*}
& x_{t+1} = x_{t} -  \eta A_{t} y_{t} + \beta_1(x_{t} - x_{t-1}) , \\
& y_{t+1} = y_{t} +  \eta A_{t+1} ^{\top}x_{t+1}+ \beta_2(y_{t} - y_{t-1}),
\end{align*}
where $\beta_1,\beta_2 \le 0$ are the momentum parameters. 
Writing negative momentum method in matrix form, we have
\begin{align}\label{eq:Negative Momentum2}
\begin{bmatrix}
x_{t+1} \\
\\
y_{t+1} \\
\\
x_{t} \\
\\
y_{t}
\end{bmatrix} 
=
\begin{bmatrix}
(1+\beta_1) I& -\eta A_{t} & -\beta_1 I & 0 \\
\\
\eta(1+\beta_1) A_{t+1}^{\top} & (1+\beta_2) I-\eta^2 A_{t+1}^{\top}A_{t}  & -\eta\beta_1 A_{t+1}^{\top} & -\beta_2 I\\
\\
I & 0& 0&0 \\
\\
0& I& 0&0 
\end{bmatrix} 
\begin{bmatrix}
x_{t} \\
\\
y_{t} \\
\\
x_{t-1} \\
\\
y_{t-1}
\end{bmatrix} 
\end{align}
Denote the iterative matrix in \eqref{eq:Negative Momentum2} as $\mathcal{A}_t$ and $X_t= (x_t^{\top},y_t^{\top},x_{t-1}^{\top},y_{t-1}^{\top})^{\top}$. 
Let $\tilde{\mathcal{A}}_{NM}=\mathcal{A}_{t+1} \mathcal{A}_{t}$, by Floquet Theorem, $\tilde{\mathcal{A}}_{NM}$ will determine the dynamical behaviors of negative momentum method.
We have
$$
X_t=\tilde{\mathcal{A}}_{NM} X_{t-2}, \ \text{for any $t\geq 2$}.
$$
In the following lemma, we show that the spectral radius of $\tilde{\mathcal{A}}_{NM}$ is always larger than $1$.
\begin{lem}\label{thm:NM-Period}
    For any step size $\eta > 0$, and momentum parameters $\beta_1,\ \beta_2 \le 0$, the spectral radius of $\tilde{\mathcal{A}}_{NM}$ is larger than $1$.
\end{lem}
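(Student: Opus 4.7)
The plan is to exhibit a real root of modulus strictly greater than $1$ in the characteristic polynomial of an auxiliary $4 \times 4$ matrix that captures the non-trivial part of $\tilde{\mathcal{A}}_{NM}$. First I would \emph{reduce to four dimensions}. Since $A_2 = -A_1$ and $\ker A_1 = \ker A_2$ is the one-dimensional span of $v := \tfrac{1}{\sqrt{2}}(1,1)^\top$, writing $y_t = c_t u + d_t v$ (with $u := \tfrac{1}{\sqrt{2}}(1,-1)^\top$ satisfying $A_1 u = \sqrt{2}$) splits the state space into two invariant subspaces for the iteration. On the $v$-subspace the component $d_t$ decouples from $x$ and obeys $d_{t+1} = (1+\beta_2) d_t - \beta_2 d_{t-1}$, contributing eigenvalues $1$ and $\beta_2^2$ to $\tilde{\mathcal{A}}_{NM}$ (so if $|\beta_2| > 1$ we are already done). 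On the complementary four-dimensional subspace spanned by $(x_t, c_t, x_{t-1}, c_{t-1})$ the iterative matrix reduces to a $4 \times 4$ matrix $\mathcal{A}'_t$ in which $A_t$ is replaced by the scalar $a_t$, with $a_1 = \sqrt{2}$ and $a_2 = -\sqrt{2}$.

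Next I would \emph{exploit the sign symmetry of $a_t$}. Let $S = \mathrm{diag}(1, -1, 1, -1)$. Flipping the signs of the $c$-coordinates is equivalent to flipping the sign of every $a_t$, so a direct inspection of the entries shows $\mathcal{A}'_2 = S \mathcal{A}'_1 S^{-1}$. Using $S^2 = I$, the one-period iterate on this subspace factors as
\begin{equation*}
\tilde{\mathcal{A}}'_{NM} := \mathcal{A}'_2 \mathcal{A}'_1 = (S \mathcal{A}'_1)(S \mathcal{A}'_1) = N^2, \qquad N := S \mathcal{A}'_1,
\end{equation*}
so it suffices to produce an eigenvalue of $N$ of modulus greater than $1$. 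Expanding $\det(\lambda I - N)$ (straightforward but tedious, since $N$ has a zero row-pattern analogous to that of $\mathcal{A}'_1$) and collecting terms, one arrives at the monic quartic
\begin{equation*}
p_N(\lambda) = \lambda^4 + (\beta_2 - \beta_1 + 2\eta^2)\lambda^3 - (1 + \beta_1 \beta_2)\lambda^2 + (\beta_1 - \beta_2)\lambda + \beta_1 \beta_2.
\end{equation*}

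Finally I would \emph{evaluate $p_N$ at $\lambda = \pm 1$}. The coefficients are arranged so that almost every term cancels: one finds $p_N(1) = 2\eta^2$ and $p_N(-1) = -2\eta^2$. Since $p_N$ is monic and $p_N(\lambda) \to +\infty$ as $\lambda \to -\infty$, applying the intermediate value theorem on $(-\infty, -1)$ produces a real root $\lambda_* < -1$. Then $\lambda_*^2 > 1$ is an eigenvalue of $N^2 = \tilde{\mathcal{A}}'_{NM}$, and hence of $\tilde{\mathcal{A}}_{NM}$ acting on the full six-dimensional space, proving the lemma for all $\eta > 0$ and $\beta_1, \beta_2 \le 0$. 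The main obstacle is bookkeeping: verifying the conjugation identity $\mathcal{A}'_2 = S \mathcal{A}'_1 S^{-1}$ and tracking the cancellations in $p_N(\pm 1)$. Once those are in hand, the IVT argument is a single line, and the more delicate Schur-type criterion from Lemma \ref{schur stable} is not needed here (it will be required to analyse OGDA in the companion subsection).
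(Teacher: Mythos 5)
Your proposal is correct, but it takes a genuinely different route from the paper. The paper works directly with the full $6\times 6$ matrix $\tilde{\mathcal{A}}_{NM}$: it computes its characteristic polynomial symbolically, factors out $(\lambda-1)(\lambda-\beta_2^2)$, and applies the Schur-type criterion of Lemma \ref{schur stable} to the remaining quartic, showing $|a+c|-(b+d+1)=4\eta^4>0$, followed by a separate scaling argument ($r\to 1^{+}$) to upgrade ``modulus at least $1$'' to ``strictly larger than $1$.'' You instead exploit the specific structure $A_2=-A_1$: splitting $y$ along $u=\tfrac{1}{\sqrt 2}(1,-1)^{\top}$ and $\ker A_1=\mathrm{span}\{v\}$ does give a block-diagonal conjugate of $\mathcal{A}_t$ (the $d$-block correctly contributes the eigenvalues $1$ and $\beta_2^2$), and on the $4$-dimensional block the sign-flip conjugation $\mathcal{A}'_2=S\mathcal{A}'_1S$ with $S=\mathrm{diag}(1,-1,1,-1)$ indeed holds (the only entry quadratic in $a_t$ sits at position $(2,2)$ and is fixed by $S$, while the linear entries flip), so the one-period map is $N^2$ with $N=S\mathcal{A}'_1$. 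I checked your quartic: $\det(\lambda I-N)=\lambda^4+(\beta_2-\beta_1+2\eta^2)\lambda^3-(1+\beta_1\beta_2)\lambda^2+(\beta_1-\beta_2)\lambda+\beta_1\beta_2$, and indeed $p_N(1)=2\eta^2$, $p_N(-1)=-2\eta^2$; it is also consistent with the paper's quartic for $N^2$ (e.g.\ the product of your roots squared equals $\beta_1^2\beta_2^2=d$, and $\sum\lambda_i^2$ reproduces $-a=4\eta^4+4\eta^2(\beta_2-\beta_1)+\beta_1^2+\beta_2^2+2$). The IVT then yields a real root $\lambda_*<-1$, so $\lambda_*^2>1$ is an eigenvalue of $\tilde{\mathcal{A}}_{NM}$, and strictness comes for free. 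What your approach buys: a smaller determinant (a $4\times 4$ with explicit scalar entries rather than the $6\times 6$ the paper delegates to symbolic software), no need for the Schur criterion, and no separate limiting argument for strict inequality. What the paper's approach buys: it is purely mechanical and also exposes the factor $(\lambda-1)(\lambda-\beta_2^2)$ and eigenvector information that the subsequent proof of Theorem \ref{thm:OGDMM-Period} reuses, whereas your argument proves exactly the spectral-radius statement of this lemma.
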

\begin{proof}
 We directly compute the characteristic polynomial $P_{\tilde{\mathcal{A}}_{NM}}(\lambda)$ of matrix $\tilde{\mathcal{A}}_{NM}$ as follows \footnote{Symbolic computing software, such as Matlab, can be used to perform the computation of characteristic polynomial.} : 
\begin{align*}
 P_{\tilde{\mathcal{A}}_{NM}}(\lambda)& = \det( \lambda I - \tilde{\mathcal{A}}_{NM} )  \\
 \\
= &[\lambda^4-\big(4\eta^4+4\eta^2(\beta_2-\beta_1) + \beta_1^2 + \beta_2^2 + 2\big) \cdot \lambda^3 \\ 
\\
& +\big(4\eta^2(\beta_2 -\beta_1) + \beta_1^2\beta_2^2 +2\beta_1^2 + 2\beta_2^2 + 1\big) \cdot \lambda^2\\
\\
& -\big(2\beta_1^2\beta_2^2 + \beta_1^2 + \beta_2^2 \big) \cdot  \lambda +\beta_1^2\beta_2^2] \cdot (\lambda-1) \cdot (\lambda-\beta_2^2).
\end{align*}
Note that this is a polynomial on $\lambda$ of degree $6$, with two roots $\lambda = 1$ and
$\lambda = \beta^2_2$.

Thus, eigenvalues of matrix $\tilde{\mathcal{A}}_{NM}$ consists of $1$, $\beta_2^2$ and roots of the quartic polynomial 
\begin{align*}
    g(\lambda) = \lambda^4 + a\lambda^3 + b\lambda^2 + c\lambda + d,
\end{align*}
with coefficients 
    \begin{align*}
         &a=-(4\eta^4+4\eta^2(\beta_2 -\beta_1) + \beta_1^2 + \beta_2^2 + 2),\\
         \\
         &b = 4\eta^2(\beta_2 -\beta_1) + \beta_1^2\beta_2^2 +2\beta_1^2 + 2\beta_2^2 + 1,\\
         \\
         &c=-(2\beta_1^2\beta_2^2 + \beta_1^2 + \beta_2^2),\\
         \\
         &d=\beta_1^2\beta_2^2.
    \end{align*}

In order to prove the spectral radius of $\tilde{\mathcal{A}}_{NM}$ is larger than 1, we just need to verify that the maximal modulus of the roots of $g$ is larger than 1.
According to Lemma \ref{schur stable}, the polynomial $g(\lambda)$ has a root with modulus no less than $1$ if $|a+c| > b+d+1$. Next we want to prove that $|a+c| > b+d+1$ holds for any step size $\eta > 0$ and any momentum parameter $\beta_1$, $\beta_2$. Computing directly,
\begin{align*}
|a+c|-(b+d+1)
= \ & -(a+c)-(b+d+1)\\
= \ & 4\eta^4+4\eta^2(\beta_2 -\beta_1)+ \beta_1^2 + \beta_2^2 + 2 +2\beta_1^2\beta_2^2 + \beta_1^2 + \beta_2^2\\
&-(4\eta^2(\beta_2 -\beta_1) + \beta_1^2\beta_2^2 +2\beta_1^2 + 2\beta_2^2 + 1+\beta_1^2\beta_2^2+1)\\
= \ &4\eta^4 > 0,
\end{align*}
where the first equality holds since 
\begin{align*}
    a+c & = -(4\eta^4+4\eta^2(\beta_2 -\beta_1) + 2\beta_1^2 + 2\beta_2^2 + 2+2\beta_1^2\beta_2^2) \\
    & \le -(4\eta^4+4\eta^2(\beta_2 -\beta_1)+(\beta_1 -\beta_2)^2) \\
    & \le -(2\eta^2+\beta_2-\beta_1)^2 \\
    & \le 0.
\end{align*}
The inequality $|a+c| > b+d+1$ violates the second condition in Corollary \ref{schur stable}, which means the maximal modulus of the roots of $g$ is at least $1$. 

Next, we want to prove the maximal modulus of the roots of $g$ is strictly larger than $1$. 
Assuming that the maximal modulus of the roots of $g$ is equal to $1$, then for any given $r>1$, the roots of  
$$(r \lambda )^4+a (r \lambda )^3+b (r \lambda )^2+c (r \lambda )+d=0$$ 
are within the (open) unit disk of the complex plane (the roots $\lambda$ satisfy $\left\vert r\lambda \right\vert \le 1 $). Divide the quartic polynomial by $r^4$, we have $\lambda ^4+\frac{a}{r}\lambda ^3+\frac{b}{r^2} \lambda^2+\frac{c}{r^3}\lambda+\frac{d}{r ^4}=0$. 
By Corollary \ref{schur stable}, we have the second condition for the polynomial above, that is,
\[
\left\vert \frac{a}{r}+\frac{c}{r^3}  \right\vert< \frac{b}{r^2}+\frac{d}{r ^4}+1. 
\]
Notice that the inequality above holds for any $r>1$. Let $r \to 1^{+}$, we obtain,
\begin{align*}
    & \left\vert a+c \right\vert  \le b+d+1\\
    & (\Leftrightarrow) \\
    &\ |4\eta^4+4\eta^2(\beta_2 -\beta_1) + 2\beta_1^2 + 2\beta_2^2 + 2+2\beta_1^2\beta_2^2 | \\
    & \le 4\eta^2(\beta_2 -\beta_1) + 2\beta_1^2 + 2\beta_2^2 +2\beta_1^2\beta_2^2+2 \\
    & (\Leftrightarrow)\\
    &\ \eta^{4} \le 0.
\end{align*}
which contradicts with the step size $\eta >0$.
Therefore, our assumption that the maximal modulus of the roots of $g$ is equal to 1 cannot hold. 

In conclusion, we have that the spectral radius of $\tilde{\mathcal{A}}_{NM}$ is strictly greater than $1$.
\end{proof}

Now we are ready to proof Theorem \ref{thm:OGDMM-Period} for the part of negative momentum method.

\begin{proof}[ proof of Theorem \ref{thm:OGDMM-Period}, \bf(part \uppercase\expandafter{\romannumeral1}, Negative Momentum)]
As we have shown in Lemma \ref{thm:NM-Period}, $1$ and $\beta_2^2$ are two eigenvalues of $\tilde{\mathcal{A}}_{NM}$.
We claim that if $[0,a,a,0,b,b]^{\top} \in \BR^6$ is an eigenvector of $\tilde{\mathcal{A}}_{NM}$, with condition $a $ and $b$ not simultaneously equal to $0$, then it can only be an eigenvector corresponds to either $1$ or $\beta_2^2$. In the following,we prove the above claim.

Without loss of generality, we assume that $b \neq 0$ (the case $ a \neq 0$ is similar), moreover, we can assume that $b = 1$ by a normalization.
Then, we have
\begin{align*}
\tilde{\mathcal{A}}_{NM} \cdot
\begin{bmatrix}
 0
 \\ 
 a
 \\ 
 a 
 \\
 0 
 \\1 
 \\ 1   
\end{bmatrix}
= \begin{bmatrix}
 0 
 \\ 
 a(\beta_2^2+\beta_2+1)-\beta_2(\beta_2+1) \\ a(\beta_2^2+\beta_2+1)-\beta_2(\beta_2+1)
 \\
 0 
 \\
 a(\beta_2+1)-\beta_2 
 \\
 a(\beta_2+1)-\beta_2  
\end{bmatrix}
\end{align*}
Firstly, if $a=0$ and $[0,a,a,0,1,1]^{\top} = [0,0,0,0,1,1]^{\top}$  is an eigenvector of $\tilde{\mathcal{A}}_{NM}$, then either $\beta_2 = 0$ or $\beta_2 = -1$. If $\beta_2 = 0$, then $[0,0,0,0,1,1]^{\top}$  is eigenvector corresponding to eigenvalue $ \beta_2^2$; if $\beta_2 = -1$, then $[0,0,0,0,1,1]^{\top}$  is eigenvector corresponding to eigenvalue $1$.

Secondly, if $a \neq 0$ and $[0,a,a,0,1,1]^{\top}$ is eigenvector of $\tilde{\mathcal{A}}_{NM}$, 
then 
\begin{align*}
   & \dfrac{a(\beta_2^2+\beta_2+1)-\beta_2(\beta_2+1)}{a} 
    =\dfrac{a(\beta_2+1)-\beta_2}{1}
    \\
\Rightarrow 
& (a-\beta_2)(a-1)(\beta_2+1)=0.
\end{align*}
When $\beta_2 \neq -1$, then $a = \beta_2$ or $a=1$, and 
$[0,a,a,0,1,1]^{\top}$ is an eigenvector corresponding to eigenvalue $\beta_2^2$ or $1$. When $\beta_2 = -1$, $[0,1,1,0,1,1]^{\top}$ and $[0,0,0,0,1,1]^{\top}$ are eigenvectors of $\tilde{\mathcal{A}}_{NM}$ corresponding to eigenvalue $1$. Thus we conclude for any $a$ and $b$ not simultaneously equal to $0$, $[0,a,a,0,b,b]^{\top}$ can only be an eigenvector of $\tilde{\mathcal{A}}_{NM}$ corresponding to eigenvalue $1$ and $\beta^2_2$, this completes the proof of the claim. 

Next we construct an initial condition that has exponential divergence rate under negative momentum method. Let $\lambda'$ be the eigenvalue
of $\tilde{\mathcal{A}}_{NM}$  with largest modulus except $\beta_2^2$,
 then by Lemma  \ref{thm:NM-Period},  $\lvert \lambda' \lvert > 1$. 
 We also denote $X_0=[x_0, y_{0,1}, y_{0,2}, x_{-1}, y_{-1,1}, y_{-1,2}]^{\top} \in \BR^6$ as the corresponding eigenvector of $\lambda'$. Here $y_i = (y_{i,1},y_{i,2})$, and $x_i$ for $i=0,-1$ are initial conditions. Then from the claim proved above, one of  $x_0$, $x_{-1}$, $y_{0,1}-y_{0,2}$ and  $y_{-1,1}-y_{-1,2}$ not equals to $0$. Let $c = \max\{\lvert x_0 \lvert, \lvert x_{-1} \lvert, \lvert y_{0,1}-y_{0,2} \lvert,\lvert y_{-1,1}-y_{-1,2} \lvert \}$, then $c > 0$.
 
 In the following, we construct the initial point by  considering two cases :
$\lambda'$ is a real number or complex number.



Firstly, we consider the case that $\lambda'$ is a real number.
We can write the iterative process using $\tilde{\mathcal{A}}_{NM}$ as follows:
\begin{align*}
X_{2t} = 
\tilde{\mathcal{A}}_{NM}^{t} X_0= (\lambda') ^{t} X_0.
\end{align*}
which implies 
\begin{align*}
    & x_{2t} =  (\lambda')^t  x_0, 
    y_{2t,1} = (\lambda')^t y_{0,1},
    y_{2t,2} =(\lambda')^t y_{0,2},\\
    & x_{2t-1} = (\lambda')^t x_{-1},
    y_{2t-1,1} =(\lambda')^t y_{-1,1},  
    y_{2t-1,2} = (\lambda')^{t} y_{-1,2}.\\
\end{align*}

Since $A_1= [1,-1]$, then 
\begin{align*}
A_1^{\top} x_t =& [x_t, -x_t]^{\top},
A_1 y_t = y_{t,1} - y_{t,2}, \\
\Rightarrow
 \Delta_{1,t} = &\lVert A_1^{\top} x_t \lVert_2 + \lVert A_1 y_t  \lVert_2\\
=& \sqrt{2} \lvert x_t \lvert + \lvert y_{t,1} - y_{t,2} \lvert \\
= & \left\{
\begin{array}{c}
\lvert \lambda' \lvert^{\frac{t}{2}} (\sqrt{2} \lvert x_0 \lvert + \lvert y_{0,1}-y_{0,2} \lvert), \  \text{if t is even}\\
\\
\lvert \lambda' \lvert^{\frac{t+1}{2}} (\sqrt{2} \lvert x_{-1} \lvert + \lvert y_{-1,1}-y_{-1,2} \lvert), \ \text{if t is odd}. 
\end{array}
\right.
\end{align*}

Then, $\max \{\Delta_{1,t-1}, \Delta_{1,t}\} \geq c \lvert \lambda' \lvert^{\frac{t}{2}}$. Let $\lambda = \lvert \lambda' \lvert^{\frac{1}{2}} $, we have
$\sup_{s \in [t]} \Delta_{1,s} \geq c \lambda^t \in \Omega \left( \lambda ^t \right)$. Similarly, then we have $\sup_{s \in [t]} \Delta_{2,s} \in \Omega \left(\lambda^t \right)$.

Secondly, we consider $\lambda'$ as a complex number. Denote this eigenvalue by $a+bi$, then $a-bi$ is also an eigenvalue of $\tilde{\mathcal{A}}_{NM}$. Denote $v$ the eigenvector of eigenvalue $a+bi$, then  $\Bar{v}$ is the eigenvector of eigenvalue $a-bi$. Let $X_0 = v + \Bar{v}$. In the following, we prove $ X_0 \neq 0$ by contradiction. Assuming $X_0 = 0$ which means $v = v' i$, where $v'$ is a real vector. Then, $A v = A v ' i =(a+b i ) v' i = a v' i - b v'$. Since $A$ is a real matrix, then vector $A v ' i$ only consists of pure imaginary numbers, leading to $b=0$. Then the contradiction appears since 
$\lambda' = a+b i$ is a complex number. According to previous analysis, one of  $x_0$, $x_{-1}$, $y_{0,1}-y_{0,2}$ and  $y_{-1,1}-y_{-1,2}$ is not $0$.
Here we analyze the case when $ c = \lvert y_{0,1}-y_{0,2} \lvert $ is not equal to zero and omit other cases because these analyses are very similar. According to the iterative process, we have
\begin{align*}
X_{2t} =&\tilde{\mathcal{A}}_{NM}^t X_0 \\
=&\tilde{\mathcal{A}}_{NM}^t(v + \Bar{v})\\
=&(a+bi)^t v + (a-bi)^t \Bar{v}\\
=&e^{it\theta}(a^2+b^2)^{\frac{t}{2}} v +e^{-it\theta}(a^2+b^2)^{\frac{t}{2}}  \Bar{v},
\end{align*}
where $\theta=\textbf{sign}(b)\frac{\pi}{2}$ if $a=0$, otherwise $\theta=\arctan(\frac{b}{a})$. Since $A_1 = [1,-1]$, then,
\begin{align*}
\lVert A_1 y_t \lVert_2 = &\lvert y_{t,1}-y_{t,2} \lvert \\
=& \lvert 2c (e^{it\theta}+e^{-it\theta})(a^2+b^2)^{\frac{t}{2}} \lvert \\
=&  \lvert 4c \cdot \cos(t\theta) \lvert (a^2+b^2)^{\frac{t}{2}}.
\end{align*}
For $\cos(t\theta)$, either $\cos(t\theta)\equiv 1$ when $\theta = 0$,  or $\lim_{t\rightarrow +\infty}\cos(t\theta)$ doesn't exist, which means there exists a constant $\delta>0$ and $\{t_j\}_{j=1,2,\dots}$ , where $\{t_j\}_{j=1,2,\dots}$ is a sequence that goes to infinity, such that $\lvert \cos(t_j\theta)\lvert>\delta$.
We know that 
$ \lvert \lambda' \lvert = (a^2+b^2)^{\frac{1}{2}}$, then $\lvert \lambda' \lvert > 1$.
Let $\lambda = \lvert \lambda' \lvert ^{\frac{1}{2}}$, leading to $\lambda>1$. In addition to $c \neq 0$, we have $$\Delta_{1,t_j} \ge \lVert A_1 y_{t_j} \lVert_2 \ge \delta\lambda^t\in \Omega(\lambda^t).$$ 
Thus $\sup_{s\in [t]}\Delta_{1,s}\in \Omega(\lambda^t)$, and $\sup_{s\in [t]}\Delta_{2,s}\in \Omega(\lambda^t)$ can be proven in the same way.
\end{proof}

\subsection{Optimistic Gradient Descent Ascent}\label{subogda}
In this subsection, we consider optimistic gradient descent ascent with step size $\eta$. Recall that the linear difference form of OGDA can be written as following:
\begin{align}\label{OGDA3}
\begin{bmatrix}
x_t \\
\\
y_t \\
\\
x_{t-1} \\
\\
y_{t-1}
\end{bmatrix} 
=
\begin{bmatrix}
I & -2\eta A_{t-1} & 0 & \eta A_{t-2} \\
\\
2\eta A_{t-1}^{\top} & I & -\eta A_{t-2}^{\top} & 0\\
\\
I & 0& 0&0 \\
\\
0& I& 0&0 
\end{bmatrix} 
\begin{bmatrix}
x_{t-1} \\
\\
y_{t-1} \\
\\
x_{t-2} \\
\\
y_{t-2}
\end{bmatrix}.
\end{align}

We denote the matrix in \eqref{OGDA3} as $\mathcal{A}_t$ and let $X_t= (x_t^{\top},y_t^{\top},x_{t-1}^{\top},y_{t-1}^{\top})^{\top}$. Since payoff matrix has period of $2$, by Floquet Theorem, we only have to analyze matrix $\mathcal{A}_{t+1} \mathcal{A}_{t}$. Let $\tilde{\mathcal{A}}_{OGDA}=\mathcal{A}_{t+1} \mathcal{A}_{t}$, 
then, we have
$$
X_t=\tilde{\mathcal{A}}_{OGDA} X_{t-2}, \ \text{for any $t\geq 2$}.
$$
\begin{lem}\label{thm:OGDA-Period}
    For any step size $\eta > 0$,  the spectral radius of $\tilde{\mathcal{A}}_{OGDA}$ is larger than $1$.
\end{lem}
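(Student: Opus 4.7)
The plan is to mirror the proof of Lemma~\ref{thm:NM-Period} for the negative momentum method, making the necessary adjustments for the structure of the OGDA iterative matrix. Since $A_1 = [1,-1]$ and $A_2 = [-1,1] = -A_1$, I would first substitute these specific payoff matrices into \eqref{OGDA3} to obtain $\mathcal{A}_t$ and $\mathcal{A}_{t+1}$ explicitly. The fact that $A_{t+1} = -A_t$ and that $A_t$ is a $1\times 2$ row vector (so $A_t A_t^\top$ is a scalar while $A_t^\top A_t$ is a rank-one $2\times 2$ matrix) should cause considerable simplification in the product $\tilde{\mathcal{A}}_{OGDA} = \mathcal{A}_{t+1}\mathcal{A}_t$, which lives in $\mathbb{R}^{6\times 6}$.

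Next I would compute the characteristic polynomial $P(\lambda) = \det(\lambda I - \tilde{\mathcal{A}}_{OGDA})$ by direct (symbolic) computation, exactly as was done for the NM case. I expect obvious factors to peel off: the directions $(x,y,x_{-1},y_{-1})$ with $y \in \ker A_t$ (which coincides with $\ker A_{t+1}$ since $A_{t+1} = -A_t$) are fixed by both $\mathcal{A}_t$ and $\mathcal{A}_{t+1}$, contributing factors of $(\lambda-1)$. After removing these, what remains should be a quartic of the form
\[
g(\lambda) = \lambda^4 + a\lambda^3 + b\lambda^2 + c\lambda + d,
\]
where the coefficients $a,b,c,d$ are explicit polynomials in $\eta$.

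The core step is then to invoke Lemma~\ref{schur stable} and show that one of its necessary conditions fails for every $\eta > 0$, which forces $g$ to have a root outside the closed unit disk. By analogy with the NM proof, I expect the critical inequality to be $|a+c| > b+d+1$; a direct algebraic manipulation should reduce this to a manifestly positive polynomial expression in $\eta$ (e.g.\ a positive multiple of $\eta^{2k}$ for some $k\ge 1$). The final detail is to upgrade ``modulus at least $1$'' to ``modulus strictly greater than $1$'': I would use the same rescaling trick $\lambda \mapsto r\lambda$ with $r \to 1^+$ as in the NM proof, pushing the Schur condition through the limit to derive a contradiction with the positivity of $\eta$.

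The main obstacle is purely computational: carrying out the $6\times 6$ determinant and simplifying $\tilde{\mathcal{A}}_{OGDA}$ without making arithmetic errors. Once $g$ is in hand, verifying the Schur-criterion violation should be routine, because the structure of the OGDA update (which already fails to be a normal matrix, unlike EG) strongly suggests that at least one eigenvalue of $\tilde{\mathcal{A}}_{OGDA}$ is forced away from the unit disk whenever $\eta \ne 0$. The proof therefore plugs directly into the framework developed in the NM case to conclude that the spectral radius of $\tilde{\mathcal{A}}_{OGDA}$ is strictly greater than $1$, with no restriction on $\eta > 0$.
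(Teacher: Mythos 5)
Your overall plan (compute $\tilde{\mathcal{A}}_{OGDA}=\mathcal{A}_{t+1}\mathcal{A}_t$ for the period-2 game, factor the characteristic polynomial, and argue a root leaves the unit disk) is in the right spirit, but the step you designate as the core of the argument fails. For this example the characteristic polynomial factors as $\lambda(\lambda-1)\bigl(\lambda^2-(8\eta^2+1)\lambda+2\eta^2\bigr)^2$, so the quartic left after peeling off the linear factors is the \emph{square} of a real quadratic $\lambda^2-s\lambda+p$ with $s=8\eta^2+1$, $p=2\eta^2$. For any such squared quadratic one has $a=-2s$, $b=s^2+2p$, $c=-2sp$, $d=p^2$, hence
\begin{align*}
\lvert a+c\rvert-(b+d+1)=2s(1+p)-s^2-(1+p)^2=-(s-1-p)^2=-36\eta^4<0 ,
\end{align*}
so the Schur condition $\lvert a+c\rvert<b+d+1$ that you expect to be violated in fact holds strictly for every $\eta>0$: the analogy with the negative-momentum proof breaks exactly at the ``core step.'' The condition that actually fails is a different one of the three in Lemma~\ref{schur stable} (the third one, and the first for large $\eta$), and even then your strictness upgrade via the rescaling $\lambda\mapsto r\lambda$, $r\to1^+$ becomes delicate: at $\eta^2=\tfrac12$ the limiting (non-strict) versions of all three conditions are satisfied (the third with equality, the first as $0\le 0$), so no contradiction can be extracted there and the case would need separate treatment. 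As written, your proposal therefore has a genuine gap.

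The paper's proof sidesteps all of this: the symbolic computation shows the characteristic polynomial splits into explicit linear factors, with the double eigenvalue $\lambda'=4\eta^2+\tfrac12\sqrt{64\eta^4+8\eta^2+1}+\tfrac12$, which is strictly increasing in $\eta$ and equals $1$ only at $\eta=0$; hence the spectral radius exceeds $1$ for every $\eta>0$ with no Schur analysis at all. (The explicit eigenvalue and its eigenvector are then reused in the divergence construction for Theorem~\ref{thm:OGDMM-Period}.) If you want to salvage your route, you must either verify the third Schur condition is violated for all $\eta>0$ and handle the equality case separately, or simply solve the quadratic factor as the paper does — at which point the Schur machinery is unnecessary.
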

\begin{proof}
Directly compute the characteristic polynomial $P_{\tilde{\mathcal{A}}_{OGDA}}(\lambda)$ of matrix $\tilde{\mathcal{A}}_{OGDA}$ gives
\begin{align*}
P_{\tilde{\mathcal{A}}_{OGDA}}(\lambda) = & \ \det( \lambda I - \tilde{\mathcal{A}}_{OGDA} ) \\
= & \ \lambda \cdot ( \lambda -1)\cdot \left(\lambda- \left(4\eta^2 - \frac{1}{2}\sqrt{64\eta^4 + 8\eta^2 + 1} + \frac{1}{2}\right)
\right)^2\\
& \cdot  \left(\lambda- \left(4\eta^2 + \frac{1}{2}\sqrt{64\eta^4 + 8\eta^2 + 1} + \frac{1}{2}\right)\right)^2.
\end{align*}
Then, $\tilde{\mathcal{A}}_{OGDA}$ has an eigenvalue $\lambda' = 4\eta^2 + \frac{1}{2}\sqrt{64\eta^4 + 8\eta^2 + 1} + \frac{1}{2}$.
It is easy to verify that $\lambda'$ is strictly monotonically increasing with $\eta \in [0, +\infty)$, and $\lambda'$ equals to $1$ iff $\eta = 0$. Since step size $\eta > 0$, thus
the spectral radius of matrix $\tilde{\mathcal{A}}_{OGDA}$ is larger than 1. 
\end{proof}

Now we are ready to prove  Theorem \ref{thm:OGDMM-Period} for the part of optimistic gradient descent ascent method.

\begin{proof}[proof of Theorem \ref{thm:OGDMM-Period}, \bf(part \uppercase\expandafter{\romannumeral2}, OGDA)]

Let  $X_0=[x_0, y_{0,1}, y_{0,2}, x_{-1}, y_{-1,1}, y_{-1,2}]^{\top}$ be the eigenvector corresponding to the eigenvalue $\lambda'$ defined above. Then, it is directly to verify $x_0, x_{-1} \ne 0$. 


In the iterative process, we have
\begin{align*}
 X_{2t} = 
\tilde{\mathcal{A}}_{OGDA}^tX_0= (\lambda') ^t X_0, \ x_{2t} =  (\lambda')^t x_0.    
\end{align*}
Since $A_1= [1,-1]$, then 
\begin{align*}
A_1^{\top} x_t =& [x_t, -x_t]^{\top},
A_1 y_t = y_{t,1} - y_{t,2}. \\
\Rightarrow
 \Delta_{1,t} = &\lVert A_1^{\top} x_t \lVert_2 + \lVert A_1 y_t  \lVert_2\\
=& \sqrt{2} \lvert x_t \lvert + \lvert y_{t,1} - y_{t,2} \lvert \\
\geq & \sqrt{2} \lambda'^{\frac{t}{2}} \min \{x_0, x_{-1}\}.
\end{align*}
By Theorem  \ref{thm:OGDA-Period},  $ \lambda'  > 1$. 
Let $\lambda = (\lambda')^{\frac{1}{2}}$, then $\lambda > 1 $.
According to the inequality above, we have
$$
\sup_{s \in [t]} \Delta_{1,s}
\geq  \Delta_{1,t}
\geq \sqrt{2} \min \{x_0, x_{-1}\}\lambda^t \in \Omega \left( \lambda ^t \right).
$$
Similarly, we have $\sup_{s \in [t]} \Delta_{2,s} \in \Omega \left(\lambda^t \right)$.
\end{proof}

\section{Proof for convergent perturbed games with invertible payoff matrix}

In this section, we provide a proof of a special case of Theorem \ref{thm:Condition-Pertub} under the assumption that the payoff matrix is an invertible square matrix. Furthermore, we can demonstrate that this assumption leads to an exponential convergence rate.

\begin{restatable}{prop}{InvertiblePertub}
\label{thm:Invertible-Pertub}
When the payoff matrix $A$ of the stable game is an invertible square matrix and $\lim_{t \to \infty} B_t = 0$, we have $\lim_{t \to \infty} (x_t, y_t) = (\boldsymbol{0},\boldsymbol{0}) \in \BR^{2n}$ in \eqref{Optimistic Gradient},\ \eqref{Extra Gradient},\ and\ \eqref{Negative Momentum}  with an exponential rate.
\end{restatable}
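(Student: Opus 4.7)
The plan is to reduce each of the three algorithms to a perturbed linear difference system of the form $X_{t+1} = (\mathcal{A} + \mathcal{E}_t) X_t$, where $\mathcal{A}$ is the iterative matrix associated to the stable game (with payoff $A$) and $\mathcal{E}_t$ is built from $B_t$. Since the OGDA/EG/NM dynamics are affine in the current payoff matrix $A_t = A + B_t$, each entry of $\mathcal{E}_t$ is a polynomial in the entries of $B_t$ (at most quadratic, from the $A_{t+1}^\top A_t$ term in NM), so $B_t \to 0$ immediately gives $\|\mathcal{E}_t\|_2 \to 0$. For OGDA and NM the state vector is the stacked $(x_t, y_t, x_{t-1}, y_{t-1})$, so convergence of $X_t$ to $0$ implies $(x_t, y_t) \to 0$.

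Next I would invoke Lemma~\ref{lm:eig-static}: when $A$ is a nonsingular square matrix and the step sizes are chosen as in the hypothesis of Theorem~\ref{thm:Condition-Pertub} (which the proposition inherits), the iterative matrix $\mathcal{A}$ of the stable game has spectral radius strictly less than $1$. Choose any $\rho^*$ with $\rho(\mathcal{A}) < \rho^* < 1$. By Gelfand's formula, there exists a matrix norm $\|\cdot\|_*$ (equivalent to $\|\cdot\|_2$ on finite-dimensional spaces) with $\|\mathcal{A}\|_* \le \rho^*$.

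From $\|\mathcal{E}_t\|_* \to 0$ we obtain a time $T_0$ after which $\|\mathcal{E}_t\|_* \le (1-\rho^*)/2$, and hence
\begin{equation*}
\|\mathcal{A} + \mathcal{E}_t\|_* \le \frac{1+\rho^*}{2} =: \mu < 1, \qquad \forall\, t \ge T_0.
\end{equation*}
Telescoping, for $t \ge T_0$,
\begin{equation*}
\|X_t\|_* \;\le\; \prod_{s=T_0}^{t-1}\|\mathcal{A} + \mathcal{E}_s\|_*\,\|X_{T_0}\|_* \;\le\; \mu^{\,t-T_0}\,\|X_{T_0}\|_*,
\end{equation*}
which is exponential decay in $\|\cdot\|_*$ and, by norm equivalence, in $\|\cdot\|_2$. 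This yields $\lim_{t\to\infty}(x_t,y_t)=(\mathbf{0},\mathbf{0})$ at an exponential rate for all three algorithms simultaneously, finishing the proof.

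The main (minor) obstacle is verifying cleanly that in each of the three cases the residual matrix $\mathcal{E}_t$ really is controlled by $\|B_t\|_2$ uniformly in $t$; this is straightforward but must be checked block by block for \eqref{eq:OGDA}, \eqref{eq:EG}, \eqref{eq:Negative Momentum} — the NM case contains the product $A_{t+1}^\top A_t = A^\top A + A^\top B_t + B_{t+1}^\top A + B_{t+1}^\top B_t$, so one uses the eventual boundedness of $\|B_t\|_2$ to bound $\|\mathcal{E}_t\|_2 \le C\|B_t\|_2 + C\|B_{t+1}\|_2$ for large $t$, which still tends to zero. Everything else is a direct application of Gelfand's formula plus Lemma~\ref{lm:eig-static}.
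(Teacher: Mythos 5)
Your proof is correct, but it proceeds by a different mechanism than the paper's. The paper's argument is a two-line appeal to Perron's theorem (Theorem~\ref{thm:Perron}): once Lemma~\ref{lm:eig-static} guarantees that, for invertible $A$ and admissible step sizes, every eigenvalue of the stable iterative matrix $\mathcal{A}$ has modulus strictly below $1$, Perron's theorem says $\lim_t \sqrt[t]{\lVert X_t \lVert_2}$ equals the modulus of some eigenvalue of $\mathcal{A}$, hence is below $1$, which is exactly exponential convergence. You instead make the argument self-contained: the same reduction to a perturbed system $X_{t+1}=(\mathcal{A}+\mathcal{E}_t)X_t$ with $\lVert \mathcal{E}_t\lVert_2\to 0$ (your block-by-block check, including the $A_{t+1}^{\top}A_t$ cross terms for \eqref{Negative Momentum}, matches the decompositions the paper itself uses later in \eqref{B-OGDA}, \eqref{B-EG}, \eqref{B-NM}), then Gelfand's formula to pick an operator norm in which $\mathcal{A}$ is a strict contraction, absorb $\mathcal{E}_t$ for $t$ large, and telescope. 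What the paper's route buys is brevity and the sharper statement that the decay rate is exactly an eigenvalue modulus of $\mathcal{A}$; what your route buys is independence from the cited perturbed-difference-system literature and an explicit rate $\mu^t$ for any $\mu>\rho(\mathcal{A})$. Note that neither argument extends to singular $A$, and for the same reason: $\mathcal{A}$ then has eigenvalue $1$, so Perron's theorem no longer forces decay and no norm can make $\mathcal{A}$ a contraction, which is precisely why the paper develops the separate machinery of Theorem~\ref{thm:Condition-Pertub}. One small housekeeping point: state explicitly, as the paper implicitly does, that the step-size (and momentum) conditions of Lemma~\ref{lm:eig-static} are in force, since the proposition as written does not list them.
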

 \begin{proof}
According to Perron Theorem  \ref{thm:Perron},  we only need to prove maximum modulus of eigenvalues of iterative matrix $\mathcal{A}$ is less than 1. Lemma \ref{lm:eig-static} indicates that if the parameter condition on step sizes is satisfied, we have  maximum modulus of eigenvalues of iterative matrix $\mathcal{A}$ is less than 1. This complete the proof.
\end{proof}

The above proof cannot be generalized to non-invertible matrices, as we have shown in Lemma \ref{lm:eig-static} that when the payoff matrix is non-invertible, then iterative matrices of the difference system associated with the game dynamics must have an eigenvalue equals to $1$. 

In the following, we prove Theorem \ref{thm:Condition-Pertub} for the general case.

\section{Omitted Proofs from Theorem \ref{thm:Condition-Pertub}}\label{appx:Condition-Pertub}
 \ConditionPertub*

We separate the proof into several lemmas. Before going into details, we present a road map of the proof in Figure \eqref{road map}. 
\begin{figure}[!htbp]
    \centering
    \includegraphics[scale=0.3]{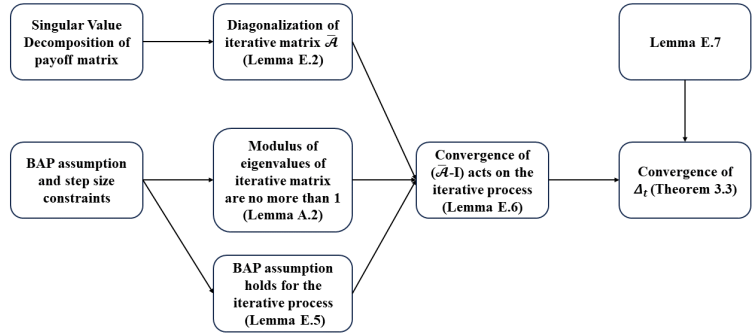}
    \caption{Road map for the prove of Theorem \ref{thm:Condition-Pertub}}
    \label{road map}
\end{figure}



As a first step, we demonstrate that the iterative matrices of learning dynamics can be diagonalized using singular value decomposition (SVD), as shown in  Lemma \ref{prop:diagonal}. This phenomenon was also shown in \cite{gidel2019negative} for a general class of first order method. 
By singular value decomposition, we can write $A= U \Sigma_A V^{\top}$, where $U$, $V$ are unitary matrices, and $\Sigma_A$ is  rectangular diagonal matrix with its diagonal entries being singular values of $A$.
We denote this by
\[
\Sigma_A=
\begin{bmatrix}
\bf{\sigma_{r \times r}} & \bf{0_{r \times (m-r)}}\\
\\
\bf{0_{(n-r) \times r}} & \bf{0_{(n-r) \times (m-r)}}\\
\end{bmatrix}
\in \BR^{n \times m},
\]
and
\begin{align*}
\bf{\sigma_{r \times r}}  = 
\begin{bmatrix}
\sigma_1 \\
& \ddots & \\
& &  \sigma_r \\
\end{bmatrix}
\in \BR^{r \times r},
\end{align*}
where $\sigma_i>0$ are the singular values of $A$, $i\in[r]$. 
Let $ \Bar{x}_t = U^{\top}x_t$, $\Bar{y}_t = V^{\top}y_t$, then we can transform the iterative process of three algorithms in convergent perturbed game into the equivalent form as followings : 
\paragraph{SVD formulation for OGDA in convergent perturbed game:}
\begin{align*}
& \Bar{x}_{t+1} = \Bar{x}_{t} - 2 \eta (\Sigma_A +U^{\top} B_{t} V )\Bar{y}_{t} + \eta (\Sigma_A +U^{\top} B_{t-1} V )\Bar{y}_{t-1}, \\
& \Bar{y}_{t+1}= \Bar{y}_{t} + 2 \eta  (\Sigma_A +V^{\top} B_{t}^{\top} U )\Bar{x}_{t} - \eta (\Sigma_A +V^{\top} B_{t-1}^{\top} U ) \Bar{x}_{t-1}.
\end{align*}
 We represent the above in the form of a linear difference system:
\begin{align}\label{SVD1}
\Bar{X}_{t+1}
= (\Bar{\mathcal{A}} +\Bar{\mathcal{B}}_{t})\Bar{X}_{t},
\end{align}
where $\Bar{X}_t=(\Bar{x}_t^{\top}, \Bar{y}_t^{\top}, \Bar{x}_{t-1}^{\top},\Bar{y}_{t-1}^{\top})^{\top} $,
\begin{align}\label{SVD-OGDA}
\setlength{\arraycolsep}{1pt}
\Bar{\mathcal{A}} =
\begin{bmatrix}
I & -2\eta \Sigma_A  & 0 & \eta \Sigma_A  \\
\\
2\eta \Sigma_A ^{\top} & I & -\eta \Sigma_A ^{\top} & 0\\
\\
I & 0& 0&0 \\
\\
0& I& 0&0 
\end{bmatrix} \in \BR^{2(m+n)},
\end{align}
and
\begin{align}\label{B-OGDA}
\Bar{\mathcal{B}}_{t}=
\begin{bmatrix}
0 & -2 \eta U^{\top}B_{t}V & 0 & \eta U^{\top}B_{t-1}V\\
\\
2 \eta V^{\top}B^{\top}_{t}U & 0 & -\eta V^{\top} B_{t-1}^{\top}U & 0 \\
\\
0 & 0& 0&0 \\
\\
0& 0& 0&0
\end{bmatrix} \in \BR^{2(m+n)}.
\end{align}

\vspace{3em}

\paragraph{SVD formulation for EG in convergent perturbed game:}
\begin{align*}
& \Bar{x}_{t+1} = (I-\alpha \gamma ( \Sigma_A\Sigma_A^{\top} + U^{\top}(A B_{t}^{\top} + B_{t} A^{\top} + B_{t} B_{t}^{\top})U)\Bar{x}_{t} - \alpha  (\Sigma_A +U^{\top} B_{t} V)\Bar{y}_{t} , \\
& \Bar{y}_{t+1} = (I-\alpha \gamma (  \Sigma_A^{\top}\Sigma_A + V^{\top}(A^{\top} B_{t} + B_{t}^{\top} A + B_{t}^{\top} B_{t}) V)\Bar{y}_{t}+ \alpha (\Sigma_A^{\top} +V^{\top} B_{t}^{\top} U)\Bar{x}_{t}.
\end{align*}
 We represent the above in the form of a linear difference system:
\begin{align}\label{SVD2}
\Bar{X}_{t+1}
= (\Bar{\mathcal{A}} +\Bar{\mathcal{B}}_{t})\Bar{X}_{t},
\end{align}
where 
$\Bar{X}_t=(\Bar{x}_{t}^{\top}, \Bar{y}_{t}^{\top}) $,
\begin{align}\label{SVD-EG}
\setlength{\arraycolsep}{1pt}
\Bar{\mathcal{A}} =
\begin{bmatrix}
I-\alpha \gamma  \Sigma_A\Sigma_A^{\top} & - \alpha  \Sigma_A  \\
\\
\\
\alpha \Sigma_A ^{\top} & I -\alpha \gamma\Sigma_A^{\top} \Sigma_A \\
\end{bmatrix} \in \BR^{m+n}
\end{align}
and
\begin{align}
\Bar{\mathcal{B}}_{t}=
\begin{bmatrix}\label{B-EG}
-\alpha \gamma U^{\top}(A B_{t}^{\top} + B_{t} A^{\top} + B_{t} B_{t}^{\top})U & -\alpha U^{\top} B_{t} V \\
\\
\\
\alpha V^{\top} B_{t}^{\top} U &\alpha \gamma V^{\top}(A^{\top} B_{t} + B_{t}^{\top} A + B_{t}^{\top} B_{t}) V
\end{bmatrix}.
\end{align}

\vspace{3em}

\paragraph{SVD formulation for NM in convergent perturbed game:}
\begin{align*}
\Bar{x}_{t+1} &= \  (1+\beta_1) \Bar{x}_{t} - \eta
\left(\Sigma_A + U^{\top} B_{t} V\right)\Bar{y}_{t}- \beta_1 \Bar{x}_{t-1}, \\
\Bar{y}_{t+1} &= \  \left(I- \eta^2  \left(\Sigma_A^{\top}\Sigma_A + V^{\top}(A^{\top} B_{t} + B_{t+1}^{\top} A + B_{t+1}^{\top} B_{t}) V \right)\right)\Bar{y}_{t} \\
& +  \eta \left(\Sigma_A^{\top}+V^{\top} B_{t+1}^{\top} U \right)\left((1+\beta_1) \Bar{x}_{t}-\beta_1 \Bar{x}_{t-1}\right) -\beta_2 \Bar{y}_{t-1}.
\end{align*}

 We represent the above in the form of a linear difference system:
\begin{align}\label{SVD3}
\Bar{X}_{t+1}
= (\Bar{\mathcal{A}} +\Bar{\mathcal{B}}_{t})\Bar{X}_{t},
\end{align}
where 
$\Bar{X}_t=(\Bar{x}_t^{\top}, \Bar{y}_t^{\top}, \Bar{x}_{t-1}^{\top},\Bar{y}_{t-1}^{\top})^{\top} $,
\begin{align}\label{SVD-NM}
\setlength{\arraycolsep}{1pt}
\Bar{\mathcal{A}} =
\begin{bmatrix}
(1+\beta_1)I & & -\eta \Sigma_A  & & -\beta_1 I & & 0  \\
\\
\eta(1+\beta_1) \Sigma_A ^{\top} & & I- \eta^2  \Sigma_A^{\top}\Sigma_A & & -\eta\beta_1\Sigma_A ^{\top} &  &-\beta_2 I\\
\\
I & & 0& & 0& & 0 \\
\\
0& & I& & 0& &0 
\end{bmatrix} \in \BR^{2(m+n)},
\end{align}
and
\begin{align}
\Bar{\mathcal{B}}_{t}=
\begin{bmatrix}\label{B-NM}
0 & -\eta U^{\top} B_{t} V & 0 & &  0\\
\\
\eta(1+\beta_1) V^{\top}B^{\top}_{t+1}U & -\eta^2 V^{\top}(A^{\top} B_{t} + B_{t+1}^{\top} A + B_{t+1}^{\top} B_{t}) V & -\eta \beta_1 V^{\top}B^{\top}_{t+1}U & & 0 \\
\\
0 & 0& 0& & 0 \\
\\
0& 0& 0& & 0 
\end{bmatrix}.
\end{align}

\begin{lem}\label{normalEG}
The iterative matrix of SVD formulation for EG in convergent perturbed game in \eqref{SVD-EG} is a normal matrix.
\end{lem}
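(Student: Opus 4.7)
The plan is to exploit the fact that the SVD formulation matrix $\bar{\mathcal{A}}$ in \eqref{SVD-EG} is orthogonally similar to the iterative matrix of extra-gradient on the stable game $A$, which has already been shown to be normal in Lemma~\ref{lm:normalEG}. Concretely, let $\mathcal{A}$ denote the iterative matrix of \eqref{Extra Gradient} with payoff $A$, i.e.\ the matrix
\[
\mathcal{A} = \begin{bmatrix} I - \alpha\gamma AA^{\top} & -\alpha A \\ \alpha A^{\top} & I - \alpha\gamma A^{\top}A \end{bmatrix},
\]
and set $Q = \begin{bmatrix} U & 0 \\ 0 & V \end{bmatrix}$, which is block-diagonal orthogonal because $U$ and $V$ are orthogonal from the SVD $A = U\Sigma_A V^{\top}$.

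First, I would verify the identity $\bar{\mathcal{A}} = Q^{\top}\mathcal{A} Q$ by a direct block computation, using $U^{\top}AV = \Sigma_A$, $V^{\top}A^{\top}U = \Sigma_A^{\top}$, $U^{\top}AA^{\top}U = \Sigma_A\Sigma_A^{\top}$, and $V^{\top}A^{\top}AV = \Sigma_A^{\top}\Sigma_A$. This is a routine plug-in and contains no real content; it just says the SVD change of basis on $(x,y)$ transforms $\mathcal{A}$ into $\bar{\mathcal{A}}$.

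Next, since $\mathcal{A}$ is normal by Lemma~\ref{lm:normalEG} (applied with $A_i = A$) and orthogonal conjugation preserves normality, I conclude that $\bar{\mathcal{A}}$ is normal:
\[
\bar{\mathcal{A}}\bar{\mathcal{A}}^{\top} = Q^{\top}\mathcal{A}QQ^{\top}\mathcal{A}^{\top}Q = Q^{\top}\mathcal{A}\mathcal{A}^{\top}Q = Q^{\top}\mathcal{A}^{\top}\mathcal{A}Q = \bar{\mathcal{A}}^{\top}\bar{\mathcal{A}}.
\]
There is essentially no obstacle here; the only thing to be careful about is the block-matching of sizes when writing $Q$, because $U\in\mathbb{R}^{n\times n}$ and $V\in\mathbb{R}^{m\times m}$ have different dimensions. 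As a sanity check (or as a standalone proof, avoiding the appeal to Lemma~\ref{lm:normalEG}), one can also compute $\bar{\mathcal{A}}\bar{\mathcal{A}}^{\top}$ and $\bar{\mathcal{A}}^{\top}\bar{\mathcal{A}}$ directly: the off-diagonal blocks cancel because $\alpha\Sigma_A - \alpha^2\gamma\Sigma_A\Sigma_A^{\top}\Sigma_A$ appears with opposite signs in the two cross-terms, and both diagonal blocks reduce to $(I - \alpha\gamma\Sigma_A\Sigma_A^{\top})^2 + \alpha^2\Sigma_A\Sigma_A^{\top}$ and $(I - \alpha\gamma\Sigma_A^{\top}\Sigma_A)^2 + \alpha^2\Sigma_A^{\top}\Sigma_A$, respectively, in both orderings.
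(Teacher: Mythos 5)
Your proof is correct. The paper's own argument is exactly your ``sanity check'': it verifies normality of $\Bar{\mathcal{A}}$ in \eqref{SVD-EG} by direct block computation, showing $\Bar{\mathcal{A}}\Bar{\mathcal{A}}^{\top}=\Bar{\mathcal{A}}^{\top}\Bar{\mathcal{A}}$ equals the block-diagonal matrix with blocks $(I-\alpha\gamma\Sigma_A\Sigma_A^{\top})^2+\alpha^2\Sigma_A\Sigma_A^{\top}$ and $(I-\alpha\gamma\Sigma_A^{\top}\Sigma_A)^2+\alpha^2\Sigma_A^{\top}\Sigma_A$, just as you state. Your primary route is genuinely different: you observe $\Bar{\mathcal{A}}=Q^{\top}\mathcal{A}Q$ with $Q=\mathrm{diag}(U,V)$ orthogonal, invoke the normality of the EG iterative matrix with payoff $A$ (the computation in Lemma~\ref{lm:normalEG}, which holds for an arbitrary payoff matrix, not just the periodic $A_i$ --- worth saying explicitly since that lemma is stated only for $i\in[T]$), and use the fact that orthogonal conjugation preserves normality. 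This is slightly more conceptual and avoids redoing the block algebra; it also dovetails with the remark the paper makes after Lemma~\ref{prop:diagonal}, where the analogous similarity $Q^{\top}\mathcal{A}Q=\Bar{\mathcal{A}}$ is recorded for OGDA and negative momentum. The paper's direct computation, on the other hand, is self-contained and simultaneously exhibits the explicit block-diagonal form of $\Bar{\mathcal{A}}\Bar{\mathcal{A}}^{\top}$, which is mildly informative in its own right. Either way the content is the same identity, and your cancellation of the off-diagonal terms $\alpha\Sigma_A-\alpha^2\gamma\Sigma_A\Sigma_A^{\top}\Sigma_A$ is exactly right.
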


\begin{proof}Directly calculate shows
\begin{align}
\Bar{\mathcal{A}} \Bar{\mathcal{A} }^{\top}  = \Bar{ \mathcal{A}}^{\top}   \Bar{\mathcal{A} }
= 
\begin{bmatrix}
(I - \alpha \gamma \Sigma_A \Sigma_A^{\top})^2 + \alpha^2 \Sigma_A \Sigma_A^{\top} & 0\\
\\
0 & (I - \alpha\gamma \Sigma_A^{\top} \Sigma_A)^2 + \alpha^2 \Sigma_A^{\top} \Sigma_A 
\end{bmatrix}.
\end{align}
    
\end{proof}

\begin{lem}\label{prop:diagonal}
	For a fixed payoff matrix $ A $, the corresponding iterative matrices in \eqref{SVD-OGDA} for OGDA, \eqref{SVD-EG} for EG, and \eqref{SVD-NM} for negative momentum method are diagonalizable.
\end{lem}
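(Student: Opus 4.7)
The plan splits the three iterative matrices according to how much help the SVD already provides. The EG case is immediate: Lemma \ref{normalEG} has shown that the matrix in \eqref{SVD-EG} satisfies $\bar{\mathcal{A}}\bar{\mathcal{A}}^{\top} = \bar{\mathcal{A}}^{\top}\bar{\mathcal{A}}$, so it is normal and therefore (unitarily) diagonalizable by the spectral theorem. Nothing further is needed for EG.

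For OGDA in \eqref{SVD-OGDA} and NM in \eqref{SVD-NM}, the idea is to exploit the sparsity of $\Sigma_A$ to reduce diagonalizability of one large matrix to diagonalizability of a family of small ones. Because $\Sigma_A$ is a rectangular diagonal matrix with only $\sigma_1,\ldots,\sigma_r$ on the top-left $r\times r$ block and zeros elsewhere, every $\bar{x}$-to-$\bar{y}$ coupling in $\bar{\mathcal{A}}$ connects an index $i\le r$ with the same index $i$ in the other coordinate. After a permutation of the standard basis of $\mathbb{R}^{2(n+m)}$ that groups the four scalars $(\bar{x}_t[i],\bar{y}_t[i],\bar{x}_{t-1}[i],\bar{y}_{t-1}[i])$ together for each $i\in [r]$, the matrix $\bar{\mathcal{A}}$ becomes block diagonal with (i) $r$ blocks $M_i \in \mathbb{R}^{4\times 4}$ whose entries depend only on the step size (and, for NM, the momentum parameters) and on the single scalar $\sigma_i$, together with (ii) several trivial $2\times 2$ blocks of the form $\begin{pmatrix}1 & 0 \\ 1 & 0\end{pmatrix}$ acting on the remaining $i>r$ directions; these have distinct eigenvalues $0$ and $1$ and are diagonalizable on sight. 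So the whole problem reduces to showing that each $M_i$ is diagonalizable.

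The characteristic polynomial of $M_i$ is already computed implicitly in Lemma \ref{lm:eig-static}: it equals $\lambda^2(\lambda-1)^2+\eta^2\sigma_i^2(1-2\lambda)^2$ in the OGDA case and $(\lambda-1)^2(\lambda-\beta_1)(\lambda-\beta_2)+\eta^2\sigma_i^2\lambda^3$ in the NM case. My plan is to verify that, under the parameter restrictions of Theorem \ref{thm:Condition-Pertub} ($\eta<\tfrac{1}{2\sigma}$ for OGDA; $\eta<\tfrac{1}{\sigma}$, $\beta_1=-\tfrac{1}{2}$, $\beta_2=0$ for NM), each of these quartics has four \emph{simple} roots, which immediately forces $M_i$ to be diagonalizable.

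\textbf{Main obstacle.} The genuine work lies in this simple-roots verification, since the discriminant of a quartic is unwieldy and has to be analyzed inside the admissible parameter window. For the exceptional parameter values where the discriminant vanishes, I would fall back on a direct rank computation of $M_i - \lambda_0 I$ at the repeated root $\lambda_0$, exploiting the two embedded copies of $I$ in the lower rows of $M_i$ (inherited from the companion-like lifting $x_t,y_t,x_{t-1},y_{t-1}$) to read off the rank cheaply and conclude that the geometric and algebraic multiplicities coincide. Once this is done for every $M_i$, combining with the trivial $2\times 2$ blocks and the normal-matrix argument for EG assembles diagonalizability in all three cases.
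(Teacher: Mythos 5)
Your EG case (normality via Lemma \ref{normalEG}, hence diagonalizability) coincides with the paper's, and the per-index block reduction for \eqref{SVD-OGDA} and \eqref{SVD-NM} is sound --- it is essentially the same decoupling the paper exploits when it writes eigenvectors supported on the coordinate pair $(e_p^n,e_p^m)$ --- up to a small slip: for NM the leftover $2\times 2$ blocks are $\bigl(\begin{smallmatrix}1+\beta_1 & -\beta_1\\ 1 & 0\end{smallmatrix}\bigr)$ and $\bigl(\begin{smallmatrix}1+\beta_2 & -\beta_2\\ 1 & 0\end{smallmatrix}\bigr)$ rather than $\bigl(\begin{smallmatrix}1 & 0\\ 1 & 0\end{smallmatrix}\bigr)$, though they remain diagonalizable since $\beta_1,\beta_2\le 0<1$. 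The genuine gap is that the decisive step --- that the quartics \eqref{OGDAquarpoly} and \eqref{quarpoly} have four \emph{simple} roots for the admissible parameters --- is only announced as a plan (you yourself call it the ``main obstacle''); as written, nothing is actually proved for the blocks with $\sigma_i>0$, and those blocks are the entire content of the lemma for OGDA and NM.

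Moreover, your fallback for parameter values where the discriminant vanishes cannot work. The $4\times 4$ blocks are nonderogatory at every relevant eigenvalue: rows $3$--$4$ force an eigenvector to have the form $(\lambda u,\lambda v,u,v)$, and rows $1$--$2$ then determine $(u,v)$ up to scale (for OGDA, $(\lambda^2-\lambda)u=\eta\sigma_i(1-2\lambda)v$ and $(\lambda^2-\lambda)v=\eta\sigma_i(2\lambda-1)u$, and no root of the quartic lies in $\{0,1\}$). Hence at a repeated root the eigenspace is one-dimensional and the block is \emph{not} diagonalizable; e.g.\ for OGDA with $\eta\sigma_i=\tfrac12$ the quartic has the double root $\tfrac{1+i}{2}$ and a genuine Jordan block appears. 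So a rank computation at a repeated root would refute, not establish, diagonalizability; the only way through is to show repeated roots do not occur in your window. That is in fact true --- for OGDA the roots solve $\lambda(\lambda-1)=\pm i\eta\sigma_i(2\lambda-1)$, two quadratics each with discriminant $1-4\eta^2\sigma_i^2\neq 0$ and sharing no root, and for NM with $\beta_1=-\tfrac12,\beta_2=0$ the quartic factors as $\lambda\bigl(\lambda^3+(\eta^2\sigma_i^2-\tfrac32)\lambda^2+\tfrac12\bigr)$ whose cubic factor has no repeated root for $\eta\sigma_i>0$ --- but this computation is precisely the work you deferred, so the proposal is incomplete as it stands (restricting to the parameters of Theorem \ref{thm:Condition-Pertub} is acceptable, since that is how the lemma is used). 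For comparison, the paper does not verify simplicity inside the step-size window at all: it constructs all $2(n+m)$ eigenvectors explicitly, proves their linear independence, and disposes of multiple roots by a genericity argument that excludes the finitely many step sizes at which the discriminant vanishes.
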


Note that the claim is true for EG since the iterative matrix is normal as we have shown in lemma \ref{normalEG}. Therefore, we will only consider the cases of  OGDA and negative momentum method below.
The idea behind proving these two claims is the same, we construct a set of linearly independent eigenvectors of  \eqref{SVD-OGDA} or \eqref{SVD-NM} that form a basis of $\BR^{2(m+n)}$, and under this basis, \eqref{SVD-OGDA} or \eqref{SVD-NM} can be represented by a diagonal matrix.

\begin{proof}
We firstly define some notation. In the following, we denote $ e_i^n $ as an $ n $-dimensional unit vector with 1 in the $ i $-th position and 0 in other positions and denote $\sigma_p$ as the $p$-th singular value of the payoff matrix $A$ of the stable game, and denote $r$ as the rank of $A$. Thus for $p \in [r]$, $\sigma_p > 0$, and otherwise $\sigma_p = 0$. We will also denote the $n$-dimensional ($m$-dimensional) zero vector as $0^n (0^m)$.

	\paragraph{Part \uppercase\expandafter{\romannumeral1},  Diagonalization of \eqref{SVD-OGDA}:}
	Now we consider the diagonalization of matrix in $ \eqref{SVD-OGDA} $. Recall that
	\[
	\Sigma_A = 
	\begin{bmatrix}
	\bf{\sigma_{r \times r}} & \bf{0_{r \times (m-r)}}\\
 \\
	\bf{0_{(n-r) \times r}} & \bf{0_{(n-r) \times (m-r)}}\\
	\end{bmatrix}
	\in \BR^{n \times m},
	\]
	and
	\begin{align*}
	\bf{\sigma_{r \times r}}  = 
	\begin{bmatrix}
	\sigma_1 \\
	& \ddots & \\
	& &  \sigma_r \\
	\end{bmatrix}
	\in \BR^{r \times r}.
	\end{align*}
	To prove $\Bar{\mathcal{A}} $ is diagonalizable, we only need to find $ 2(n+m) $ linearly independent eigenvectors of the matrix. 
 
 Then we can check the equations below
	\begin{align*}
	\begin{cases}
	\Sigma_A e_p^m=\sigma_p e_p^n,  & \text{for  $1\le p\le r$},\\
    \\
	\Sigma_A e_j^m=0^n, & \text{for  $r+1\le j \le m$},
	\end{cases}
	\end{align*}
	and
	\begin{align*}
	\begin{cases}
	\Sigma_A^{\top} e_p^n=\sigma_p e_p^m,  & \text{for  $1\le p\le r$},\\
    \\
	\Sigma_A^{\top} e_i^n=0^m, & \text{for  $r+1\le i \le n$}.
	\end{cases}
	\end{align*}
	Now we respectively construct the eigenvectors corresponding to each eigenvalue, and prove these $ 2(n+m) $ vectors are linearly independent, forming a basis of $\BR^{2(n+m)}$.
	\begin{enumerate}
		\item[\textbf{Case 1}] Eigenvectors correspond to eigenvalue 1 :\\
		It can be verified that for $ r+1\le i\le n $
		\begin{align*}
		v_{1,i}=
		\begin{bmatrix}
		e_i^n\\
  \\
		0^m\\
  \\
		0^n\\
  \\
		0^m
		\end{bmatrix},
		\end{align*}
		and for $ r+1\le j\le m $
		\begin{align*}
		w_{1,j}=
		\begin{bmatrix}
		0^n\\
  \\
		e_j^m\\
  \\
		0^n\\
  \\
		0^m
		\end{bmatrix}
		\end{align*}
		are eigenvectors of $ \Bar{\mathcal{A}} $ belonging to eigenvalue 1. 		
		\item[\textbf{Case 2}] Eigenvectors correspond to eigenvalue 0 :\\
		It can be verified that for $ r+1\le i\le n $,
		\begin{align*}
		v_{0,i}=
		\begin{bmatrix}	
		0^n\\
  \\
		0^m\\
  \\
		e_i^n\\
  \\
		0^m
		\end{bmatrix}
		\end{align*}
		and for $ r+1\le j\le m $,
		\begin{align*}
		w_{0,j}=
		\begin{bmatrix}
		0^n\\
  \\
		0^m\\
  \\
		0^n\\
  \\
		e_j^m
		\end{bmatrix}
		\end{align*}
		are eigenvectors of $ \Bar{\mathcal{A}}  $ belonging to eigenvalue 0. 
		
		\item[\textbf{Case 3}] Other eigenvectors : \\
		For $ p=1,\cdots,r $, consider the roots of the following polynomial :
		\begin{align}\label{OGDAquarpoly}
		\lambda^2(\lambda-1)^2+\eta^2\sigma_p^2(1-2\lambda)^2=0
		\end{align}
		where $ \sigma_p $ is the $ p $-th diagonal element of $ \Sigma_A $, and the solution of these polynomials are
        eigenvalues of $\Bar{\mathcal{A}}$.

  We first claim that except for finite choices of $ \eta $, equation \eqref{OGDAquarpoly} has four different non-zero roots, denote them as $ \lambda_{p,q}$, $q=1,2,3,4 $. That is because a quartic polynomial equation has multiple roots if and only if its discriminant polynomial, a homogeneous polynomial with degree $6$ on the coefficients of the quartic polynomial equation, equals to $0$. Since a degree $6$ polynomial has at most $6$ roots, thus if $\eta$ is not a root of this discriminant polynomial, \eqref{OGDAquarpoly} will not have multiple roots. In the following, we will
  choose $\eta$ such that \eqref{OGDAquarpoly} has no multiple roots.
  According to Lemma \ref{lm:eig-static}, the modulus of these eigenvalues are less than 1.
  
  Let 
  $$ \alpha_{p,q}=\frac{\eta\sigma_p(1-2\lambda_{p,q})}{{\lambda_{p,q}}^2-\lambda_{p,q}}. $$
		
		It can be verified that for $1\le p\le r $ and $q = 1,2,3,4$,
		\begin{align*}
		u_{p,q}=
		\begin{bmatrix}
		\lambda_{p,q} \alpha_{p,q} e_p^n\\
  \\
		\lambda_{p,q} e_p^m\\
  \\
		\alpha_{p,q} e_p^n\\
  \\
		e_p^m
		\end{bmatrix}
		\end{align*}
	are the eigenvectors of $ \Bar{\mathcal{A}} $ corresponding to eigenvalue $ \lambda_{p,q} $.		
	\end{enumerate}
	Then we have constructed $ 2(n+m) $ eigenvectors, now we prove they are linearly independent.
	Suppose there exists coefficients $ k_{1,i},k_{0,i}$, where $i=r+1,\cdots,n$ , $g_{1,j},g_{0,j}$, where $j=r+1,\cdots,m$ and $f_{p,q}$, where $p=1,\cdots,r$ and $q=1,2,3,4 $, 
	such that
	\begin{equation}
	\begin{aligned}\label{OGDAsum0}
	&\sum_{i=r+1}^{n} k_{1,i} v_{1,i} + \sum_{j=r+1}^{m} g_{1,j} w_{1,j}+ 	\sum_{i=r+1}^{n} k_{0,i} v_{0,i}
 + \sum_{j=r+1}^{m} g_{0,j} w_{0,j}+ \sum_{p=1}^{r} \sum_{q=1}^{4} f_{p,q} u_{p,q}=0.
	\end{aligned} 
	\end{equation}
 
For $ r+1 \le i \le n $, only $ v_{1,i} $ has non-zero element  at the $ i $-th position of vector, so $ k_{1,i}=0 $. 
 

 For $ r+1 \le j \le m $, only $ w_{1,j} $ has non-zero element at the $ (j+n) $-th position of vector, so $ g_{1,j}=0 $.
 
 For $ r+1 \le i \le n $, only $ v_{0,i} $ has non-zero element at the $ (i+n+m) $-th position of vector, so $ k_{0,i}=0 $. 
 
 For $ r+1 \le j \le m $, only $ w_{0,j}$ has non-zero element at the $ (j+2n+m) $-th position of vector, so $ g_{0,j}=0 $. 
 
 For $ 1 \le p \le r $, at the $ p $-th position of vector, only $ u_{p,q}$, where $ q=1,2,3,4  $ has non-zero element. So we can yield
	\begin{align*}
	\sum_{q=1}^{4} f_{p,q}u_{p,q}=0.
	\end{align*}
	The above equation holds for $ p = 1,\cdots,r $. 
	Because the eigenvectors of different eigenvalues are linearly independent, we have $ f_{p,q}=0$, where $q=1,2,3,4$ and $ p=1,\cdots,r $. 
	Now we have concluded that all coefficients in \eqref{OGDAsum0} are zero, thus these eigenvectors are linearly independent. 
 
 Let $P$ be the matrix whose columns are consisted by the eigenvectors of $\Bar{\mathcal{A}}$ constructed above, and $D$ be the diagonal matrix whose diagonal elements are eigenvalues of $\Bar{\mathcal{A}}$.
 After an appropriate order arrangement of columns on $P$ and elements on $D$, we have
\begin{align*}
	\Bar{\mathcal{A}} P=PD.
	\end{align*}
Moreover, as we have shown above, the columns of $P$ are linearly independent, therefore $P$ is invertible, which implies $\Bar{\mathcal{A}}$ is diagonalizable. 
 


	\paragraph{Part \uppercase\expandafter{\romannumeral2},  Diagonalization of \eqref{SVD-NM}:}
	Now we consider the diagonalization of the matrix in $ (\ref{SVD-NM}) $ and denote it as $ \Bar{\mathcal{A}} $. 
	Similiarly, to prove this matrix is diagonalizable, we only need to find $ 2(n+m) $ linearly independent eigenvectors of the matrix. Now we respectively construct the eigenvectors corresponding to each eigenvalue, and prove these $2(n+m)$ vectors are linearly independent, forming a basis of $\BR^{2(n+m)}$.
	
	\begin{enumerate}
		\item[\textbf{Case 1:}] Eigenvectors correspond to eigenvalue 1 :\\
		It can be verified that for $ r+1\le i\le n $,
		\begin{align*}
		v_{1,i}=
		\begin{bmatrix}
		e_i^n\\
  \\
		0^m\\
  \\
		e_i^n\\
  \\
		0^m
		\end{bmatrix}
		\end{align*}
		and for $ r+1\le j\le m $,
		\begin{align*}
		w_{1,j}=
		\begin{bmatrix}
		0^n\\
  \\
		e_j^m\\
  \\
		0^n\\
  \\
		e_j^m
		\end{bmatrix}
		\end{align*}
		are eigenvectors of $ \Bar{\mathcal{A}} $ belonging to eigenvalue 1. 
		\item[\textbf{Case 2:}] Eigenvectors correspond to eigenvalue $ \beta_1 $.\\
		It can be verified that for $ r+1\le i\le n $,
		\begin{align*}
		v_{\beta_1,i}=
		\begin{bmatrix}
		\beta_1 e_i^n\\
  \\
		0^m\\
  \\
		e_i^n\\
  \\
		0^m
		\end{bmatrix}
		\end{align*}
		are eigenvectors of $ \Bar{\mathcal{A}} $ corresponding to eigenvalue $ \beta_1 $.

		\item[\textbf{Case 3:}] Eigenvectors correspond to eigenvalue $ \beta_2 $.\\
		It can be verified that for $ r+1\le j\le m $,
		\begin{align*}
		w_{\beta_2,j}=
		\begin{bmatrix}
		0^n\\
  \\
		\beta_2e_j^m\\
  \\
		0^n\\
  \\
		e_j^m
		\end{bmatrix}
		\end{align*}
		are eigenvectors of $\Bar{\mathcal{A}} $ corresponding to eigenvalue $ \beta_2 $.
		
		\item[\textbf{Case 4:}] Other eigenvectors. \\
		For $ p=1,\cdots,r $, consider four roots of polynomial
		\begin{align}\label{quarpoly}
		(\lambda-1)^2(\lambda-\beta_1)(\lambda-\beta_2)+\eta^2\sigma_p^2\lambda^3=0
		\end{align}
		where $ \sigma_p $ is the $ p $-th diagonal element of $ \Sigma_A $. 
		
		Now we consider the effect of different value of $ \beta_1 $ and $ \beta_2 $.
		If $ \beta_1=0 $ and $ \beta_2=0 $, the model degenerates to gradient descent algorithm, we only consider when $ \beta_1 $ and $ \beta_2 $ are not both zero.
	Similar to the situation in the Case 3 of diagonalization of (\ref{SVD-OGDA}), except for several values for $ \eta $, equation (\ref{quarpoly}) has four different roots, denote them as $ \lambda_{p,q}$, $q=1,2,3,4 $. 
		If $ \lambda_{p,q}\neq 0 $, for $q=1,2,3,4 $,
		let 
  \begin{align*}
     \alpha_{p,q}=\frac{-{\lambda_{p,q}}^2+(1+\beta_1)\lambda_{p,q}-\beta_1}{\eta \sigma_p \lambda_{p,q}}. 
  \end{align*}
  We can check that 
		\begin{align*}
		u_{p,q}=
		\begin{bmatrix}
		\lambda_{p,q} e_p^n\\
  \\
		\lambda_{p,q} \alpha_{p,q} e_p^m\\
  \\
		e_p^n\\
  \\
		\alpha_{p,q} e_p^m
		\end{bmatrix}
		\end{align*}
		is the eigenvector of $ \Bar{\mathcal{A}} $ corresponding to eigenvalue $ \lambda_{p,q} $, that is $ \Bar{\mathcal{A}}u_{p,q}=\lambda_{p,q} u_{p,q} $. Else if $ \lambda_{p,q}=0 $, that means either $ \beta_1=0 $ or $ \beta_2=0 $. 
  
  If $ \beta_1=0 $, 
		\begin{align*}
		u_{p,q}=
		\begin{bmatrix}
		0^n\\
  \\
		0^m\\
  \\
		e_p^n\\
  \\
		0^m
		\end{bmatrix}
		\end{align*}
		is the eigenvector of $ \Bar{\mathcal{A}} $ corresponding to eigenvalue $ 0 $.
  
  If $ \beta_2=0 $,
		\begin{align*}
		u_{p,q}=
		\begin{bmatrix}
		0^n\\
  \\
		0^m\\
  \\
		0^n\\
  \\
		e_p^m
		\end{bmatrix}
		\end{align*}
		is the eigenvector of $ \Bar{\mathcal{A}} $ corresponding to eigenvalue $ 0 $.
		
		
	\end{enumerate}
	Now we obtain $ 2(n+m) $ eigenvectors, in the following we will prove these $ 2(n+m) $ eigenvectors are linearly independent.
	Suppose there exists coefficients $ k_{1,i},k_{\beta_1,i}$, where $i=r+1,\cdots,n$ , $g_{1,j},g_{\beta_2,j}$, where $j=r+1,\cdots,m$ and $f_{p,q}$, where $p=1,\cdots,r$ and $q=1,2,3,4 $, 
	such that
	\begin{align}\label{sum0}
	\sum_{i=r+1}^{n} k_{1,i} v_{1,i} + \sum_{j=r+1}^{m} g_{1,j} w_{1,j}+ 	\sum_{i=r+1}^{n} k_{\beta_1,i} v_{\beta_1,i} +	\sum_{j=r+1}^{m} g_{\beta_2,j} v_{\beta_2,j} +\sum_{p=1}^{r} \sum_{q=1}^{4} f_{p,q} u_{p,q}=0.
	\end{align} 
	First we prove $ f_{p,q} =0 $ for $ p=1,\cdots,r $ and $ q=1,2,3,4  $.  If $ \beta_1= 0 $, let $ l=1,\cdots,r $, 
	then at the $ l+n+m $-th position of vector, only $ u_{l,q}$,  $ q=1,2,3,4  $ has non-zero element. Else if $ \beta_1\neq 0 $, let  $ l=1,\cdots,r $, 
	then at the $ l+2n+m $-th position of vector, only $ u_{l,q}$,  $ q=1,2,3,4  $ has non-zero element. For these two case we both have 
	\begin{align*}
	\sum_{q=1}^{4} f_{p,q} u_{p,q}=0.
	\end{align*}
	The above equation holds for $ p=1,\cdots,r $. 
	Because the eigenvectors of different eigenvalues are linearly independent, we have $ f_{p,q}=0$, where $q=1,2,3,4$ and $ p=1,\cdots,r $.

	For $ i=r+1,\cdots,n $, at the $ i $-th position of vector, only $ v_{1,i} $ and $  v_{\beta_1,i} $ has non-zero element at this position, so we obtain 
	\begin{align*}
	k_{1,i} v_{1,i} +  k_{\beta_1,i}  v_{\beta_1,i}=0.
	\end{align*}
	Notice that $ \beta_1\neq 1 $, this means $ k_{1,i}=0$ and $ k_{\beta_1,i} =0$, where $i=r+1,\cdots,n $.
	
	For $ j=r+1,\cdots,m $, at the $ j $-th position of vector, only $ w_{1,j} $ and $  w_{\beta_2,j} $ has non-zero element, so we can yield 
	\begin{align*}
	g_{1,j} w_{1,j} +  g_{\beta_2,j} w_{\beta_2,j}=0.
	\end{align*}
	Similarly, because of $ \beta_2\neq 1 $, this means $ g_{1,j}=0$ and $ g_{\beta_2,j}=0$, where $j=r+1,\cdots,m $.
	
	We prove that if $ \eqref{sum0} $ holds, then all coefficients are zero, which illustrates that these eigenvectors are linearly independent. Same as the argument in Part $\uppercase\expandafter{\romannumeral1}$ of the proof, the existence of these $2(m+n)$ eigenvectors implies $\Bar{\mathcal{A}} $ is diagonalizable.

\end{proof}

 \begin{rem}
     For payoff matrix $A$, given its SVD decomposition $A=U\Sigma_A V^{\top} $, let
     \begin{align*}
         Q=
         \begin{bmatrix}
             U & & &\\
             & V & &\\
             & & U &\\
             & & & V
         \end{bmatrix},
     \end{align*}
     then $Q$ is a unitary matrix. Furthermore, it can be verified that
     \begin{align*}
         Q^\top\mathcal{A}Q=\Bar{\mathcal{A}}
     \end{align*}
     for both OGDA and negative momentum method. That means 
     \begin{enumerate}
         \item ${\Bar{\mathcal{A}}} $ in (\ref{SVD-OGDA}) is diagonalizable implies that ${\mathcal{A}} $ in (\ref{STATIC-OGDA}) is diagonalizable.
         \item ${\Bar{\mathcal{A}}} $ in (\ref{SVD-NM}) is diagonalizable implies that ${\mathcal{A}} $ in (\ref{STATIC-NM}) is diagonalizable.
     \end{enumerate}
     
 \end{rem}

\begin{lem}[Gronwall inequality, \cite{colonius2014dynamical}]\label{lm:Gronwall inequality}
Let for all $t \in \BN$, the functions $u,p,q,f : \BN \to \BR$ satisfy 
$$u(t) \le p(t) + q(t) \sum^{t-1}_{\ell = a} f(\ell)u(\ell).$$
Then, for all $t \in \BN$
\begin{align}
\tag{Gronwall inequality}
u(t) \le p(t) + q(t) \sum^{t-1}_{\ell = a} p(\ell)f(\ell) \prod^{k-1}_{\tau = \ell +1}(1 + q(\tau)f(\tau)).
\end{align}
\end{lem}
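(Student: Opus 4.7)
The plan is a standard two-step reduction. First, I would introduce the auxiliary summation
\[
v(t) \;:=\; \sum_{\ell = a}^{t-1} f(\ell)\, u(\ell),
\]
so that $v(a) = 0$ and the hypothesis reads $u(t) \le p(t) + q(t)\, v(t)$. The goal statement becomes equivalent to showing
\[
v(t) \;\le\; \sum_{\ell = a}^{t-1} p(\ell)\, f(\ell) \prod_{\tau = \ell+1}^{t-1}\bigl(1 + q(\tau)\, f(\tau)\bigr),
\]
after which the conclusion follows by substituting back into $u(t) \le p(t) + q(t)\, v(t)$.

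Next I would derive a one-step recursion for $v$. Since $v(t+1) - v(t) = f(t)\, u(t)$, the hypothesis plugged into the right-hand side yields
\[
v(t+1) \;\le\; v(t) + f(t)\bigl(p(t) + q(t)\, v(t)\bigr) \;=\; \bigl(1 + f(t)\, q(t)\bigr) v(t) + f(t)\, p(t).
\]
This is a scalar first-order linear inequality, and I would now solve it explicitly by induction on $t$. Writing $\alpha(t) := 1 + q(t)\, f(t)$ and $\beta(t) := f(t)\, p(t)$, the inductive step uses
\[
v(t+1) \;\le\; \alpha(t)\, v(t) + \beta(t) \;\le\; \alpha(t)\!\!\sum_{\ell = a}^{t-1}\!\beta(\ell)\!\!\prod_{\tau = \ell+1}^{t-1}\!\alpha(\tau) + \beta(t),
\]
and absorbing $\beta(t)$ into the sum as the $\ell = t$ term (with the convention that the empty product $\prod_{\tau=t+1}^{t}\alpha(\tau) = 1$) gives the claimed closed form at level $t+1$. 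The base case $v(a) = 0$ matches the empty sum on the right.

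The main obstacle is really just bookkeeping: one must be careful with the empty-product and empty-sum conventions at the boundaries $\ell = t-1$ and $t = a$, and one must verify that the factor $\alpha(t)$ genuinely factors through the product $\prod_{\tau=\ell+1}^{t-1}\alpha(\tau)$ to extend it to $\prod_{\tau=\ell+1}^{t}\alpha(\tau)$ in the inductive step. No sign or positivity assumption on $p,q,f$ is needed for the purely algebraic manipulation, though of course the bound is vacuous unless the quantities on the right-hand side behave suitably; I would note at the end that the index $k-1$ appearing in the statement is a typographical stand-in for $t-1$, which is the form in which the inequality is actually applied in the paper.
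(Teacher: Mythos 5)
The paper does not actually prove this lemma: it is imported verbatim (including the stray index $k-1$, which should read $t-1$) from Lemma 6.1.3 of \cite{colonius2014dynamical}, so there is no in-paper argument to compare yours against. Your proposal is the standard discrete Gronwall induction and is essentially correct: setting $v(t)=\sum_{\ell=a}^{t-1}f(\ell)u(\ell)$, deriving the one-step recursion $v(t+1)\le (1+q(t)f(t))\,v(t)+f(t)p(t)$, and unrolling by induction with the empty-sum and empty-product conventions is exactly how this result is established, and your bookkeeping at the boundaries is right.

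The one point to correct is your closing claim that no sign assumption on $p,q,f$ is needed for the algebra. Two steps of your argument multiply an inequality by a scalar: you multiply $u(t)\le p(t)+q(t)v(t)$ by $f(t)$ to obtain the recursion for $v$, and you multiply the inductive bound on $v(t)$ by $\alpha(t)=1+q(t)f(t)$ to extend the product from $\prod_{\tau=\ell+1}^{t-1}$ to $\prod_{\tau=\ell+1}^{t}$. Both steps reverse direction if the multiplier is negative, so the proof genuinely requires $f\ge 0$ and $1+qf\ge 0$ (nonnegativity of $f$ and $q$ is part of the hypothesis in the cited source, and it holds in the paper's only application, where $f\equiv 1$ and $q(t)=\lVert \hat{\mathcal{B}}_t\rVert_2\ge 0$). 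With those hypotheses stated, your proof is complete.
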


Gronwall inequality is a useful tool to treat linear difference equations, it also has an analogy in continuous time case. For more about Gronwall inequality, see Lemma 6.1.3 in \cite{colonius2014dynamical}.

\begin{lem}\label{lm:norm bounded equivalence}
	If $ \{B_t\}_t $ satisfy the BAP assumption, i.e.,  $\sum^{\infty}_{t=1} \lVert B_t \lVert_2 $ is bounded, then $ \{ \Bar{\mathcal{B}}_t\}_t $ defined in \eqref{B-OGDA}, \eqref{B-EG} and \eqref{B-NM}  also satisfy BAP assumption.
\end{lem}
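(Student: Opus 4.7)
The strategy is to exploit that the building blocks $\bar{\mathcal{B}}_t$ are obtained from $B_t$, $B_{t\pm 1}$ (and possibly the fixed matrix $A$) by a fixed number of multiplications, additions, and conjugation by the unitary factors $U,V$, and then to bound the operator norm of $\bar{\mathcal{B}}_t$ blockwise. Since $U$ and $V$ are unitary, for any conformable matrix $M$ we have $\lVert U^{\top}MV\rVert_2=\lVert M\rVert_2$, so every entry of $\bar{\mathcal{B}}_t$ satisfies a bound of the form $C\lVert B_{t}\rVert_2$, $C\lVert B_{t\pm 1}\rVert_2$, or, in the extra-gradient case, a sum of such terms together with quadratic contributions $\lVert B_t B_t^{\top}\rVert_2\le \lVert B_t\rVert_2^{2}$.

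Concretely, I would first record that for any block matrix, $\lVert \bar{\mathcal{B}}_t\rVert_2$ is at most the sum of the operator norms of its blocks (a crude but sufficient bound). Then I would treat the three cases in parallel: for \eqref{B-OGDA}, each nonzero block is $\pm 2\eta U^{\top}B_tV$ or $\pm\eta U^{\top}B_{t-1}V$ (and their transposes), yielding $\lVert\bar{\mathcal{B}}_t\rVert_2\le c_1(\lVert B_t\rVert_2+\lVert B_{t-1}\rVert_2)$; for \eqref{B-NM}, an analogous bound with $B_t$ and $B_{t+1}$ plus a quadratic contribution $\eta^{2}\lVert A^{\top}B_t+B_{t+1}^{\top}A+B_{t+1}^{\top}B_t\rVert_2\le c_2(\lVert B_t\rVert_2+\lVert B_{t+1}\rVert_2+\lVert B_{t+1}\rVert_2\lVert B_t\rVert_2)$; and for \eqref{B-EG} similarly, with the quadratic block $\alpha\gamma U^{\top}(AB_t^{\top}+B_tA^{\top}+B_tB_t^{\top})U$ bounded by $c_3(\lVert B_t\rVert_2+\lVert B_t\rVert_2^{2})$.

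The routine but essential step is then to observe that \eqref{bap} forces $\lVert B_t\rVert_2\to 0$, so $\{\lVert B_t\rVert_2\}_t$ is bounded by some $M<\infty$. Consequently $\lVert B_t\rVert_2^{2}\le M\lVert B_t\rVert_2$ and $\lVert B_{t+1}\rVert_2\lVert B_t\rVert_2\le M\lVert B_t\rVert_2$, so every quadratic contribution is dominated by a constant multiple of a term that is summable by hypothesis. Index shifts ($t\mapsto t\pm 1$) do not alter summability since $\sum_t\lVert B_{t\pm 1}\rVert_2$ differs from $\sum_t\lVert B_t\rVert_2$ by a single term. Combining these estimates, in each of the three cases we obtain a constant $C>0$ and an index shift such that
\begin{equation*}
\sum_{t=1}^{\infty}\lVert \bar{\mathcal{B}}_t\rVert_2\le C\sum_{t=0}^{\infty}\lVert B_t\rVert_2<\infty,
\end{equation*}
which is exactly the \eqref{bap} statement for $\{\bar{\mathcal{B}}_t\}_t$.

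This proof is almost entirely bookkeeping: no spectral information about $A$ is needed, and the only mildly delicate point is handling the quadratic cross-terms in \eqref{B-EG} and \eqref{B-NM}, which is dispatched by the boundedness of $\{\lVert B_t\rVert_2\}_t$. I do not foresee a genuine obstacle; the main care is simply to list all the blocks and to keep track of the shift $t\mapsto t+1$ appearing in the negative momentum case so that the final constant $C$ absorbs both $\lVert A\rVert_2$ and $\sup_t\lVert B_t\rVert_2$.
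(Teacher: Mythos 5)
Your proposal is correct and follows essentially the same route as the paper: decompose each $\Bar{\mathcal{B}}_t$ blockwise, use unitary invariance of the $2$-norm and boundedness of $\sup_t\lVert B_t\rVert_2$ (which follows from the BAP assumption) to absorb the quadratic terms, and bound $\lVert\Bar{\mathcal{B}}_t\rVert_2$ by a constant multiple of $\lVert B_t\rVert_2+\lVert B_{t\pm 1}\rVert_2$ before summing. The only cosmetic difference is that you invoke submultiplicativity and unitary invariance directly, whereas the paper computes the eigenvalues of $H_i^{\top}H_i$ for each block; the conclusion and constants are the same.
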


\begin{proof}
	We claim there exists some constant $ c $, such that for any $ t $, $\lVert \Bar{\mathcal{B}}_{t-1}\lVert_2 \le c\left( \lVert {B}_{t}\lVert_2 +\lVert {B}_t\lVert_2 +\lVert {B}_{t+1}\lVert_2 \right)$. With this property, we have 
 \begin{align*}
     \sum_{t=0}^{\infty} \lVert \Bar{\mathcal{B}}_t \lVert_2  \le 3c\sum_{t=0}^{\infty} \lVert B_t \lVert_2  < +\infty,
 \end{align*} 
 then we prove the statement. In the following, we prove above claim for OGDA, EG, and negative momentum method.
	\paragraph{Case of OGDA : }
	We consider the matrix $ \eqref{B-OGDA} $
	\begin{align*}
	\Bar{\mathcal{B}}_{t}=&
	\begin{bmatrix}
	0 & -2 \eta U^{\top}B_{t}V & 0 & \eta U^{\top}B_{t-1}V\\
 \\
	2 \eta V^{\top}B^{\top}_{t}U & 0 & -\eta V^{\top} B_{t-1}^{\top}U & 0 \\
 \\
	0 & 0& 0&0 \\
 \\
	0& 0& 0&0
	\end{bmatrix}\\
	&
	=2\eta
	\begin{bmatrix}
	0 &-U^{\top}B_{t}V & 0 & & 0\\
 \\
	V^{\top}B^{\top}_{t}U & 0 &0 & & 0 \\
 \\
	0 & 0& 0& & 0 \\
 \\
	0& 0& 0& & 0
	\end{bmatrix}
	+\eta
	\begin{bmatrix}
	0 & & 0 & 0 &  U^{\top}B_{t-1}V\\
 \\
	0 & & 0 & -V^{\top} B_{t-1}^{\top}U & 0 \\
 \\
	0 & & 0& 0&0 \\
 \\
	0& & 0& 0&0
	\end{bmatrix}.
	\end{align*}
	Denote the first matrix in right side of the equation as $ H_1 $, and the second one as $H_2 $. From the above equation, we can obtain that $ \lVert \Bar{\mathcal{B}}_{t}\lVert_2 \le \lVert H_1\lVert_2 +\lVert H_2\lVert_2  $. Recall the definition of 2-norm of matrix,
	\begin{align*}
	\lVert H_1\lVert_2  =\max \sqrt{\text{Eigenvalue}\{H_1^{\top}H_1\}},\\
 \\
	\lVert H_2\lVert_2  =\max \sqrt{\text{Eigenvalue}\{H_2^{\top}H_2\}},
	\end{align*}
	then, 
	\begin{align*}
	H_1^{\top}H_1=4\eta^2
	\begin{bmatrix}
	U^{\top}B_{t}B^{\top}_{t}U& 0 & 0 & & 0  \\
 \\
	0 & V^{\top}B^{\top}_{t}B_{t}V & 0 & & 0 \\
 \\
	0 & 0& 0 & & 0 \\
 \\
	0& 0& 0 & & 0
	\end{bmatrix}
	\end{align*}
	and
	\begin{align*}
	H_2^{\top}H_2=\eta^2
	\begin{bmatrix}
	0 & & 0& 0&0 \\
 \\
	0& & 0& 0&0\\
 \\
	0& & 0 & U^{\top}B_{t-1}B^{\top}_{t-1}U & 0  \\
 \\
	0 & & 0& 0 & V^{\top}B^{\top}_{t-1}B_{t-1}V    
	\end{bmatrix}.
	\end{align*}
	Because $ U $ and $ V $ are unitary matrices, we have
	\begin{align*}
	\lVert H_{1} \lVert_2 = \max \sqrt{\text{Eigenvalue}\{H_1^{\top}H_1\}}= 4 \eta^2 \max \sqrt{\text{Eigenvalue}\{B_{t}^{\top}B_{t}\}}= 4 \eta^2 \lVert B_{t} \lVert_2 
	\end{align*}
	and
	\begin{align*}
\lVert H_{2} \lVert_2 =	\max \sqrt{\text{Eigenvalue}\{H_2^{\top}H_2\}}=
\eta^2 \max \sqrt{\text{Eigenvalue}\{B_{t-1}^{\top}B_{t-1}\}}= \eta^2 \lVert B_{t-1} \lVert_2 .
	\end{align*}
	Let $ c=4\eta^2 $, then
	\begin{align*}
	\lVert \Bar{\mathcal{B}}_{t} \lVert_2 \le c\cdot( \lVert B_{t} \lVert_2 + \lVert B_{t-1} \lVert_2  ),
	\end{align*}
	we have completed the proof for OGDA.

	\paragraph{Case of EG : }
	We consider the matrix $ \eqref{B-EG} $
	\begin{align*}
	\Bar{\mathcal{B}}_{t}=&
	\begin{bmatrix}
	-\alpha \gamma U^{\top}(A B_{t}^{\top} + B_{t} A^{\top} + B_{t} B_{t}^{\top})U & -\alpha U^{\top} B_{t} V \\
 \\
	\alpha V^{\top} B_{t}^{\top} U &\alpha \gamma V^{\top}(A^{\top} B_{t} + B_{t}^{\top} A + B_{t}^{\top} B_{t}) V
	\end{bmatrix}
	\\
	=&
	\begin{bmatrix}
	-\alpha \gamma U^{\top}(A B_{t}^{\top} + B_{t} A^{\top} + B_{t} B_{t}^{\top})U &0\\
 \\
	0& 0
	\end{bmatrix}
	\\
	& +
	\begin{bmatrix}
	0 &0\\
 \\
	0& \alpha \gamma V^{\top}(A^{\top} B_{t} + B_{t}^{\top} A + B_{t}^{\top} B_{t}) V
	\end{bmatrix}
	\\
	& +
	\begin{bmatrix}
	0 & -\alpha U^{\top} B_{t} V\\
 \\
	0 & 0
	\end{bmatrix}
	+ 
	\begin{bmatrix}
	0 & 0\\
 \\
	\alpha V^{\top} B_{t}^{\top} U & 0
	\end{bmatrix}.
	\end{align*}
	
	We separate $\mathcal{B}_1$ into four matrices and denote these matrices in right side of the equation as $H_1$, $H_2$, $H_3$ and $H_4$, respectively. Then  
 \begin{align*}
     \lVert\Bar{\mathcal{B}}_{t} \lVert_2 \leq \lVert H_1 \lVert_2 + \lVert H_2 \lVert_2 + \lVert H_3 \lVert_2 + \lVert H_4 \lVert_2.
 \end{align*}

	Since $\sum_{t=1}^{\infty} \lVert B_{t} \lVert_2 \leq c$, then $\lVert B_{t} \lVert_2 \leq c$ for any $t$. We also assume that $c_2= \lVert A \lVert_2$. 
 
 Then we have,
	\begin{align*}
	\lVert H_1 \lVert_2  = & \alpha \gamma \lVert
	U^{\top}(A B_{t}^{\top} + B_{t} A^{\top} + B_{t} B_{t}^{\top})U \lVert_2 \\
 \\
	= &\alpha \gamma\left(\lVert
	A B_{t}^{\top} + B_{t} A^{\top} + B_{t} B_{t}^{\top} \lVert_2 \right) \\
 \\
	\leq &\alpha \gamma\left(\lVert A \lVert_2 \lVert B_{t} \lVert_2 + \lVert A \lVert_2 \lVert B_{t} \lVert_2 +  \lVert B_{t} \lVert_2  \lVert B_{t} \lVert_2\right) \\
 \\
	\leq &\alpha \gamma (2c_2+c) \lVert B_{t} \lVert_2,
	\end{align*}
	where the second equality is due to $U$ is unitary matrix. Similarly, $\lVert H_2 \lVert_2 \leq \alpha\gamma(2c_2+c) \lVert B_{t} \lVert_2$.
	In addition, $\lVert H_3 \lVert_2 = \lVert H_4 \lVert_2 = \alpha \lVert B_{t} \lVert_2$. 
 
    Thus for any $t$,  we have the inequality between 
	$\lVert\Bar{\mathcal{B}}_{t} \lVert_2$ and $\lVert B_{t} \lVert_2$:
	\begin{align*}
	\lVert\Bar{\mathcal{B}}_{t} \lVert_2 \leq \lVert H_1 \lVert_2 + \lVert H_2 \lVert_2 + \lVert H_3 \lVert_2 + \lVert H_4 \lVert_2 \leq \alpha \left((4c_2+c)\gamma +2\right)\lVert B_{t} \lVert_2 .
	\end{align*}
	Let $c_1 = c(4c_2+c)\gamma +2c$, summing the above inequality over $t$, we have
	\begin{align*}
	\sum_{t=1}^{\infty} \lVert\Bar{\mathcal{B}}_{t} \lVert_2 \leq \left((4c_2+c)\gamma +2\right)\sum_{t=1}^{\infty}\lVert B_{t} \lVert_2 \leq c\left((4c_2+c)\gamma +2\right) = c_1.
	\end{align*}
	\paragraph{Case of Negative Momentum Method : }
	We consider the matrix \eqref{B-NM} ,
	\begin{align*}
\Bar{\mathcal{B}}_{t}=&
\begin{bmatrix}
0 & -\eta U^{\top} B_{t} V & 0 & &  0\\
\\
\eta(1+\beta_1) V^{\top}B^{\top}_{t+1}U & -\eta^2 V^{\top}(A^{\top} B_{t} + B_{t+1}^{\top} A + B_{t+1}^{\top} B_{t}) V & -\eta \beta_1 V^{\top}B^{\top}_{t+1}U & & 0 \\
\\
0 & 0& 0& & 0 \\
\\
0& 0& 0& & 0 
\end{bmatrix}\\
\\
	&=\begin{bmatrix}
	0 & 0 & 0 & 0\\
 \\
	\eta(1+\beta_1) V^{\top}B^{\top}_{t}U & 0 & 0 & 0 \\
 \\
	0 & 0& 0&0 \\
 \\
	0& 0& 0&0 
	\end{bmatrix}
	+\begin{bmatrix}
	0 & -\eta U^{\top} B_{t} V & 0 & 0\\
 \\
	0 & 0 & -\eta \beta_1 V^{\top}B^{\top}_{t+1}U & 0 \\
 \\
	0 & 0& 0&0 \\
 \\
	0& 0& 0&0 
	\end{bmatrix}\\
 \\
	&+\begin{bmatrix}
	0 & 0 & 0 & 0\\
 \\
	0 & -\eta^2 V^{\top}(A^{\top} B_{t} + B_{t+1}^{\top} A + B_{t+1}^{\top} B_{t}) V & 0 & 0 \\
 \\
	0 & 0& 0&0 \\
 \\
	0& 0& 0&0 
	\end{bmatrix}.
	\end{align*}
	Denote the first matrix at the right side of equation as $ H_1 $, the second one as $ H_2 $ and the third as $ H_3 $. Then we have $ \lVert \Bar{\mathcal{B}}_{t}\lVert_2\le \lVert H_1\lVert_2+\lVert H_2\lVert_2+\lVert H_3\lVert_2 $. 

 By definition, 
	\begin{align*}
	\lVert H_1\lVert_2 =\max \sqrt{\text{Eigenvalue}\{H_1^{\top}H_1\}},\\
	\lVert H_2\lVert_2 =\max \sqrt{\text{Eigenvalue}\{H_2^{\top}H_2\}}.
	\end{align*}
	Because $ U $ and $ V $ are unitary matrices, 
	\begin{align*}
	&H_1^{\top}H_1=\eta^2(1+\beta_1)^2
	\begin{bmatrix}
	U^{\top}B_{t}B^{\top}_{t}U& 0 & 0 & 0  \\
	0 & 0 & 0 & 0 \\
	0 & 0& 0&0 \\
	0& 0& 0&0
	\end{bmatrix}\\
 \\
	\Rightarrow
	&
	\max \sqrt{\text{Eigenvalue}\{H_1^{\top}H_1\}}=\eta^2(1+\beta_1)^2\max \sqrt{\text{Eigenvalue}\{B_{t}^{\top}B_{t}\}}\\
 \\
	\Rightarrow
	&
	\lVert H_1\lVert_2=\eta^2(1+\beta_1)^2\lVert B_{t}\lVert_2		
	\end{align*}
	and
	\begin{align*}
	&H_2^{\top}H_2=\eta^2
	\begin{bmatrix}
	0 & 0& 0&0 \\
	0& V^{\top}B^{\top}_{t}B_{t}V& 0&0\\
	0& 0 & \beta_1^2 U^{\top}B_{t+1}B^{\top}_{t+1}U & 0  \\
	0 & 0& 0 & 0    
	\end{bmatrix}\\
	\Rightarrow
	&
	\max \sqrt{\text{Eigenvalue}\{H_2^{\top}H_2\}}=\eta^2
	(\max \sqrt{\text{Eigenvalue}\{B_{t}^{\top}B_{t}\}}+ \beta_1^2\max \sqrt{\text{Eigenvalue}\{B_{t+1}^{\top}B_{t+1}\}})  \\
	\Rightarrow
	&
	\lVert H_2\lVert_2=\eta^2(\lVert B_{t}\lVert_2 +\beta_1^2\lVert B_{t+1}\lVert_2)
	\end{align*}
	and
	\begin{align*}
	\lVert H_3\lVert_2 &\le \eta^2\left(\lVert A^{\top} B_{t} \lVert_2  +\lVert B_{t}^{\top} A\lVert_2 +\lVert B_{t}^{\top} B_{t}\lVert_2    \right)\\
 \\
	&\le \eta^2\left(\lVert A\lVert_2  \lVert B_{t} \lVert_2   +\lVert A\lVert_2  \lVert B_{t} \lVert_2 +\lVert B_{t}\lVert_2  \lVert B_{t} \lVert_2  \right)\\
 \\
	&\le c\lVert B_{t} \lVert_2 ,
	\end{align*}
	where $ c=2\lVert A\lVert_2 +c' $, $ c'=\max_{t\ge 0} \lVert B_t\lVert_2  $.  From $ \sum_{t=0}^{\infty} \lVert B_t \lVert_2  < +\infty $ and $ \lVert B_t \lVert_2 \ge 0 $, we know  $c'$ is a bounded constant.
	By combining the bounds for $ H_1 $, $ H_2 $ and $ H_3 $, we have completed the proof for negative momentum method.
\end{proof}

\begin{lem}\label{lm:SVD convergence}
    Assume that there exists a constant $c$ such that $\sum_{t=1}^{\infty} \lVert \Bar{\mathcal{B}}_t \lVert_2 \leq c$,
    and $\Bar{\mathcal{A}}$ is as defined in \eqref{SVD-OGDA}, \eqref{SVD-EG}, or \eqref{SVD-NM},
     then  $ \lVert (\Bar{\mathcal{A}}- I)\Bar{X}_t \lVert_2 $ converges to $0$ with rate $\CO(f(t))$, where
    \begin{align*}
    f(t) = \max\{ \lambda^t, \sum^{\infty}_{i =t/2} \lVert B_i \lVert_2 \}.
\end{align*}
 Here $\lambda \in (0,1)$ is determined by the eigenvalues of the iterative matrix $\Bar{\mathcal{A}}$ of corresponding learning dynamics and the payoff matrix $A$ of the stable game.
\end{lem}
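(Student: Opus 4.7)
The plan is to exploit the structural information the previous lemmas give us about $\bar{\mathcal{A}}$: it is diagonalizable (Lemma \ref{prop:diagonal}), all its eigenvalues have modulus $\leq 1$ (Lemma \ref{lm:eig-static}), and in fact every eigenvalue on the unit circle is exactly $1$. Write $\bar{\mathcal{A}} = P D P^{-1}$ with $D = \mathrm{diag}(I_k, D_{<1})$, where $D_{<1}$ collects the eigenvalues of strict modulus at most some $\lambda_0<1$. Change coordinates via $\xi_t = P^{-1}\bar{X}_t$ and split $\xi_t = (\xi^{(1)}_t,\xi^{(<1)}_t)$ according to this block decomposition. Setting $\tilde{\mathcal{B}}_t = P^{-1}\bar{\mathcal{B}}_t P$, the system becomes $\xi_{t+1} = D\xi_t + \tilde{\mathcal{B}}_t\xi_t$, and the key observation is
\[
(\bar{\mathcal{A}}-I)\bar{X}_t \;=\; P(D-I)\xi_t \;=\; P\begin{pmatrix}0\\(D_{<1}-I)\xi^{(<1)}_t\end{pmatrix},
\]
so it suffices to control $\lVert \xi^{(<1)}_t\rVert_2$ alone.

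First I would bound $\lVert \xi_t\rVert_2$ uniformly in $t$ by a discrete Gronwall argument: since $\lVert D\rVert_2 = 1$, we get $\lVert \xi_{t+1}\rVert_2 \leq (1+\lVert \tilde{\mathcal{B}}_t\rVert_2)\lVert \xi_t\rVert_2$, hence $\lVert \xi_t\rVert_2 \leq \lVert \xi_0\rVert_2 \exp\!\bigl(\sum_s \lVert \tilde{\mathcal{B}}_s\rVert_2\bigr) \leq C$ by the summability hypothesis and the equivalence $\lVert\tilde{\mathcal{B}}_s\rVert_2 \leq \lVert P\rVert_2\lVert P^{-1}\rVert_2\lVert\bar{\mathcal{B}}_s\rVert_2$. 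Then the variation-of-constants formula
\[
\xi^{(<1)}_t \;=\; D_{<1}^t\,\xi^{(<1)}_0 \;+\; \sum_{s=0}^{t-1} D_{<1}^{\,t-1-s}\bigl[\tilde{\mathcal{B}}_s\xi_s\bigr]^{(<1)}
\]
together with $\lVert D_{<1}^k\rVert_2 \leq \lambda_0^k$ (since $D_{<1}$ is diagonal) yields
\[
\lVert \xi^{(<1)}_t\rVert_2 \;\leq\; \lambda_0^t\,\lVert \xi^{(<1)}_0\rVert_2 + C\sum_{s=0}^{t-1}\lambda_0^{\,t-1-s}\,\lVert \tilde{\mathcal{B}}_s\rVert_2.
\]
Splitting the convolution sum at $s=\lfloor t/2\rfloor$, the low-$s$ part is bounded by $\lambda_0^{t/2}\sum_s\lVert\tilde{\mathcal{B}}_s\rVert_2 = \CO(\lambda_0^{t/2})$, while the high-$s$ part is bounded by $\sum_{s\geq t/2}\lVert\tilde{\mathcal{B}}_s\rVert_2 = \CO\!\bigl(\sum_{s\geq t/2}\lVert B_s\rVert_2\bigr)$ via Lemma \ref{lm:norm bounded equivalence}. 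Taking $\lambda := \sqrt{\lambda_0}\in(0,1)$ then gives exactly the announced rate $f(t)=\max\{\lambda^t,\sum_{i\geq t/2}\lVert B_i\rVert_2\}$.

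The main obstacle is not the computation but the structural input: the argument collapses unless every unit-modulus eigenvalue of $\bar{\mathcal{A}}$ equals $1$, so that the constant component $\xi^{(1)}_t$ is annihilated by $(\bar{\mathcal{A}}-I)$. This is precisely the content of Lemma \ref{lm:eig-static} under the stated step-size conditions for each of EG, OGDA, and NM, so the three cases go through uniformly once we check that the step-size hypotheses here match those required by Lemma \ref{lm:eig-static}. A secondary nuisance is verifying that $D_{<1}$ inherits diagonalizability with the claimed operator-norm bound, which is automatic because it is literally a diagonal block in the Jordan-free decomposition of $\bar{\mathcal{A}}$ provided by Lemma \ref{prop:diagonal}.
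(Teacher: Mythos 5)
Your proposal is correct and follows essentially the same route as the paper: diagonalize $\Bar{\mathcal{A}}$ via Lemma \ref{prop:diagonal}, change coordinates, use the variation-of-constants formula with $\lVert D\rVert_2\le 1$, bound the trajectory by summability of the perturbations, split the convolution sum at $t/2$, and convert back via Lemma \ref{lm:norm bounded equivalence}. Your two cosmetic deviations—projecting onto the non-unit-eigenvalue block instead of multiplying the Duhamel formula by $(D-I)$, and using the direct product bound $\prod_s(1+\lVert\tilde{\mathcal{B}}_s\rVert_2)\le e^{\sum_s\lVert\tilde{\mathcal{B}}_s\rVert_2}$ in place of the paper's explicit Gronwall step—are harmless streamlinings of the same argument.
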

\begin{proof}
Recall that we denote the SVD formulation of iterative process in \eqref{SVD1} \eqref{SVD2} and \eqref{SVD3} as follows:
\begin{align*}
\Bar{X}_{t+1}
= (\Bar{\mathcal{A}}+\Bar{\mathcal{B}}_{t})\Bar{X}_{t},
\end{align*}
Since $\Bar{\mathcal{A}}$ is a diagonalizable matrix from Lemma \ref{prop:diagonal}, thus, there exists an invertible matrix $P$ such that $P \Bar{\mathcal{A}} P^{-1} = D$, where $D$ is a diagonal matrix with the eigenvalues of $\Bar{\mathcal{A}}$ as its entries. Since maximum modulus of eigenvalues of iterative matrix $\Bar{\mathcal{A}}$ is no more than 1, then $\lVert D \lVert_2 \leq 1$.
Let $$\hat{X}_{t}=P \Bar{X}_{t},\ \textnormal{and} \ \ \hat{\mathcal{B}}_t=P \Bar{\mathcal{B}}_{t} P^{-1},$$ then the iterative process becomes $\hat{X}_{t+1}=(D + \hat{\mathcal{B}}_{t})\hat{X}_{t}$. 

By induction,
we have
\begin{align}
 &\hat{X}_{t}=(D + \hat{\mathcal{B}}_{t-1})\hat{X}_{t-1}
=D^t\hat{X}_{0}+\sum_{l=1}^t D^{t-l}\hat{B}_{l-1} \hat{X}_{l-1} \label{eq: induction Xt}\\
\Longrightarrow
&\hat{\mathcal{B}}_t \hat{X}_{t} =\hat{\mathcal{B}}_tD^t\hat{X}_{0}+\hat{\mathcal{B}}_t\sum_{l=1}^t D^{t-l}\hat{\mathcal{B}}_{l-1} \hat{X}_{l-1}. \notag
\end{align}
Since $\lVert D^{l} \lVert_2 \leq \lVert D \lVert_2^{l} \leq 1$ for any $l \in [t]$, taking norm on both sides, we have
\begin{align*}
\lVert \hat{\mathcal{B}}_t \hat{X}_{t} \lVert_2
\leq \lVert \hat{\mathcal{B}}_t \lVert_2 \lVert \hat{X}_{0} \lVert_2 + \lVert \hat{\mathcal{B}}_t \lVert_2 \sum_{l=1}^t \lVert \hat{\mathcal{B}}_{l-1} \hat{X}_{l-1}  \lVert_2,
\end{align*}
Now we apply  Gronwall inequality, let $u_t = \lVert \hat{\mathcal{B}}_t \hat{X}_t \lVert_2$, $p_t= \lVert \hat{\mathcal{B}}_t \lVert_2 \lVert \hat{X}_{0} \lVert_2$, $q_t= \lVert \hat{\mathcal{B}}_t \lVert_2$ and  $f_t \equiv 1$ in  Gronwall inequality, see Lemma \ref{lm:Gronwall inequality},  then we have
$$
\lVert \hat{\mathcal{B}}_t \hat{X}_{t} \lVert_2
\leq \lVert \hat{\mathcal{B}}_t \lVert_2 \lVert \hat{X}_{0}\lVert_2+  \lVert \hat{\mathcal{B}}_t \lVert_2(\sum_{l=1}^t \lVert \hat{\mathcal{B}}_l \lVert_2 \prod_{k=l-1}^{t-l}(1+\lVert \hat{\mathcal{B}}_l \lVert_2)) \lVert \hat{X}_{0} \lVert_2.
$$
Let $c_1 =\lVert  P \lVert_2  \lVert P^{-1} \lVert_2$. According to the assumption, there exists a constant $c$ such that $\sum_{t=1}^{\infty} \lVert \Bar{\mathcal{B}}_t \lVert_2 \leq c$, then 
\begin{align*}
\sum_{t=1}^{\infty} \lVert \hat{\mathcal{B}}_t \lVert_2
& = \sum_{t=1}^{\infty} \lVert  P \Bar{\mathcal{B}}_t P^{-1} \lVert_2 \\
&\leq  \sum_{t=1}^{\infty} \lVert  P \lVert_2
\lVert \Bar{\mathcal{B}}_t \lVert_2  \lVert P^{-1} \lVert_2\\
&\le  c_1 c.
\end{align*}
Note that $\hat{\mathcal{B}}_t=P \Bar{\mathcal{B}}_{t} P^{-1}$, so $\lVert \hat{\mathcal{B}}_t \lVert_2\le \lVert P\lVert_2 \lVert \Bar{\mathcal{B}}_{t}\lVert_2 \lVert P^{-1}\lVert_2 \le c_1 \lVert \Bar{\mathcal{B}}_{t}\lVert_2$. Since $e^x \ge 1+x $ for $x \in \BR$, we obtain 
\begin{align*}
    \prod_{t=1}^\infty (1+\lVert \hat{\mathcal{B}}_t \lVert_2)& \leq \prod_{t=1}^\infty e^{\lVert \hat{\mathcal{B}}_t \lVert_2} \\
    &= e^{\sum_{t=1}^{\infty} \lVert \hat{\mathcal{B}}_t \lVert_2} \\
    &\leq e^{c_1 c}. 
\end{align*}

Let $c_2=(1+c_1 c \cdot e^{c_1 c})\lVert \hat{X}_{0} \lVert_2$, then
\begin{align*}
\lVert \hat{\mathcal{B}}_t \hat{X}_{t} \lVert_2
& \leq 
\lVert \hat{\mathcal{B}}_t \lVert_2 \lVert \hat{X}_{0}\lVert_2(1+\sum_{l=1}^t \lVert \hat{\mathcal{B}}_l \lVert_2 \prod_{k=l-1}^{t-l}(1+\lVert \hat{\mathcal{B}}_l \lVert_2))\\
& \leq \lVert \hat{\mathcal{B}}_t \lVert_2 \lVert \hat{X}_{0}\lVert_2(1+ e^{c_1 c}\sum_{l=1}^t \lVert \hat{\mathcal{B}}_l \lVert_2)\\
& \leq \lVert \hat{\mathcal{B}}_t \lVert_2 \lVert \hat{X}_{0}\lVert_2(1+  c_1 c \cdot e^{c_1 c})\\
\\
& \leq c_2 \lVert \hat{\mathcal{B}}_t \lVert_2.
\end{align*}
Multiplying $(D-I)$ on the equality \eqref{eq: induction Xt}of both sides, we have
$$
(D-I)  \hat{X}_{t}
=(D-I)  D^t\hat{X}_{0}+
 \sum_{l=1}^t (D-I) D^{t-l}\hat{\mathcal{B}}_{l-1} \hat{X}_{l-1}.\\
$$
Let $c_3 = \max\{c_2,c_2\sum_{l=1}^{\frac{1}{2}t}  \lVert \hat{\mathcal{B}}_l \lVert_2 \}$, taking the norm on both sides, we have
\begin{align*}
\lVert (D-I)  \hat{X}_{t} \lVert_2
&\leq \delta^t \lVert \hat{X}_{0}\lVert_2+ \sum_{l=1}^t \delta^{t-l} \lVert \hat{\mathcal{B}}_{l-1}\hat{X}_{l-1} \lVert_2
\leq \delta^t \lVert \hat{X}_{0}\lVert_2+ \sum_{l=1}^t c_2 \delta^{t-l} \lVert \hat{\mathcal{B}}_l \lVert_2 \\
&\leq \delta^{\frac{1}{2}t}(\lVert \hat{X}_{0}\lVert_2 + c_2\sum_{l=1}^{\frac{1}{2} t}  \lVert \hat{\mathcal{B}}_l \lVert_2) + c_2 \sum_{l=\frac{1}{2} t}^t \lVert \hat{\mathcal{B}}_l \lVert_2\\
&\leq c_3 f(t).
\end{align*}

Let $\lambda = \delta^{\frac{1}{2}}$, recall that $ f(t) = \max\{ \lambda^t, \sum^{\infty}_{i =t/2} \lVert B_i \lVert_2 \}$. The last inequality is due to Lemma \ref{lm:norm bounded equivalence}, we can see that  there is constant $c_4$ such that $\sum^{\infty}_{i =t/2} \lVert \hat{\mathcal{B}}_i \lVert_2  \leq c_4 \sum^{\infty}_{i =t/2} \lVert B_i \lVert_2 $.
Recall that $ f(t) = \max\{ \lambda^t, \sum^{\infty}_{i =t/2} \lVert B_i \lVert_2 \}$. 
Then, there exists a constant $c_5$ such that
$$
\lVert (\Bar{\mathcal{A}}-I)\Bar{X}_{t} \lVert_2  \leq \lVert  P^{-1} \lVert_2  \lVert (D-I)  \hat{X}_{t} \lVert_2  \leq c_5 f(t). 
$$
 \end{proof}

\begin{lem}\label{lm:convergence equivalent}
    If $\lVert (\Bar{\mathcal{A}}- I)\Bar{X}_t \lVert_2$ converges to 0  with rate $\CO(f(t))$ as $t$ tends to infinity, then for  OGD, EG and negative momentum method, $\lVert A^{\top}x_t  \lVert_2 +\lVert A y_t \lVert_2$ converges to 0 with rate $\CO(f(t))$ when $t$ tends to infinity .
\end{lem}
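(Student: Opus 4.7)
My plan is to reduce the lemma, through the SVD change of variables, to showing that $\lVert \Sigma_A^\top \bar{x}_t\lVert_2 + \lVert \Sigma_A \bar{y}_t \lVert_2 = \mathcal{O}(f(t))$. The first observation is that since $A = U\Sigma_A V^\top$ with $U,V$ unitary, we have $\lVert A^\top x_t\lVert_2 = \lVert V\Sigma_A^\top U^\top x_t\lVert_2 = \lVert \Sigma_A^\top \bar{x}_t\lVert_2$, and similarly $\lVert Ay_t\lVert_2 = \lVert \Sigma_A \bar{y}_t\lVert_2$. Thus it suffices to extract $\Sigma_A^\top \bar{x}_t$ and $\Sigma_A \bar{y}_t$ (up to bounded linear combinations) from the coordinates of $(\bar{\mathcal{A}} - I)\bar{X}_t$, and apply the triangle inequality.

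The second step is algorithm-by-algorithm inspection of $\bar{\mathcal{A}} - I$ from \eqref{SVD-OGDA}, \eqref{SVD-EG}, \eqref{SVD-NM}. For \textbf{OGDA}, the third and fourth block rows of $(\bar{\mathcal{A}}-I)\bar{X}_t$ give $\bar{x}_t - \bar{x}_{t-1}$ and $\bar{y}_t - \bar{y}_{t-1}$, while the first two give $-2\eta\Sigma_A\bar{y}_t + \eta\Sigma_A\bar{y}_{t-1}$ and $2\eta\Sigma_A^\top\bar{x}_t - \eta\Sigma_A^\top\bar{x}_{t-1}$. Writing $\eta\Sigma_A\bar{y}_t = -(-2\eta\Sigma_A\bar{y}_t+\eta\Sigma_A\bar{y}_{t-1}) - \eta\Sigma_A(\bar{y}_t-\bar{y}_{t-1})$, and using that $\lVert\Sigma_A\lVert_2 \le \sigma$ is a fixed constant, we conclude $\lVert\Sigma_A\bar{y}_t\lVert_2 = \mathcal{O}(f(t))$; the same idea yields $\lVert\Sigma_A^\top\bar{x}_t\lVert_2 = \mathcal{O}(f(t))$. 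For \textbf{NM} (with $\beta_2 = 0$), rows three and four again give $\bar{x}_t-\bar{x}_{t-1}$ and $\bar{y}_t-\bar{y}_{t-1}$. Row one reads $\beta_1(\bar{x}_t-\bar{x}_{t-1}) - \eta\Sigma_A\bar{y}_t$, from which $\Sigma_A\bar{y}_t$ follows. Row two equals $\eta\Sigma_A^\top\bigl((1+\beta_1)\bar{x}_t - \beta_1\bar{x}_{t-1}\bigr) - \eta^2\Sigma_A^\top\Sigma_A\bar{y}_t$; writing $(1+\beta_1)\bar{x}_t-\beta_1\bar{x}_{t-1} = \bar{x}_t + \beta_1(\bar{x}_t-\bar{x}_{t-1})$ and using the already-established bound on $\Sigma_A\bar{y}_t$, we isolate $\Sigma_A^\top\bar{x}_t = \mathcal{O}(f(t))$.

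The \textbf{EG} case is the most delicate because the natural combinations directly give $\alpha\gamma\Sigma_A\Sigma_A^\top\bar{x}_t + \alpha\Sigma_A\bar{y}_t$ and $\alpha\Sigma_A^\top\bar{x}_t - \alpha\gamma\Sigma_A^\top\Sigma_A\bar{y}_t$ rather than $\Sigma_A\bar{y}_t$ or $\Sigma_A^\top\bar{x}_t$ individually. Here I would mimic the argument in Lemma \ref{lm:EG-convergence equiavalent}: multiply the second combination by $\gamma\Sigma_A$ and add to the first to obtain $(I + \alpha^2\gamma^2\Sigma_A\Sigma_A^\top)\bigl(\alpha\gamma\Sigma_A\Sigma_A^\top\bar{x}_t + \alpha\Sigma_A\bar{y}_t\bigr)$ — or equivalently invert $(I+\gamma^2\Sigma_A^\top\Sigma_A)$, which is well conditioned — to algebraically isolate $\Sigma_A\bar{y}_t$ (and symmetrically $\Sigma_A^\top\bar{x}_t$) as a bounded linear combination of the two entries of $(\bar{\mathcal{A}}-I)\bar{X}_t$. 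The main technical obstacle is just making these manipulations precise and invoking $\lVert(I+\gamma^2\Sigma_A^\top\Sigma_A)^{-1}\lVert_2 \le 1$; once done, the triangle inequality yields the desired $\mathcal{O}(f(t))$ bound, which combined with the unitary identity $\lVert A^\top x_t\lVert_2+\lVert Ay_t\lVert_2 = \lVert\Sigma_A^\top\bar{x}_t\lVert_2 + \lVert\Sigma_A\bar{y}_t\lVert_2$ completes the proof for all three dynamics.
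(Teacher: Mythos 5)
Your proposal is correct and follows essentially the same route as the paper's proof: reduce via the unitary change of variables to bounding $\lVert\Sigma_A^{\top}\bar{x}_t\rVert_2+\lVert\Sigma_A\bar{y}_t\rVert_2$, extract these from block-row linear combinations of $(\bar{\mathcal{A}}-I)\bar{X}_t$ with the triangle inequality for OGDA and NM, and for EG combine the two rows and invert $I+\gamma^2\Sigma_A^{\top}\Sigma_A$ using $\lVert(I+\gamma^2\Sigma_A^{\top}\Sigma_A)^{-1}\rVert_2\le 1$, exactly as the paper does (mirroring its Lemma for the periodic EG case). Two harmless caveats: your displayed EG identity is off — adding $\gamma\Sigma_A$ times the second row to the first row yields $-\alpha(I+\gamma^2\Sigma_A\Sigma_A^{\top})\Sigma_A\bar{y}_t$, not $(I+\alpha^2\gamma^2\Sigma_A\Sigma_A^{\top})$ times the first combination, though the inversion route you then invoke is the correct (and the paper's) one — and your NM step assumes $\beta_2=0$, which matches the parameter choice in Theorem \ref{thm:Condition-Pertub}, whereas the paper keeps general $\beta_2$ by premultiplying the second block row by $\Sigma_A$ and using the already-established bound on $\Sigma_A\bar{y}$.
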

\begin{proof}[Proof of Lemma \ref{lm:convergence equivalent}]
We break the proof into three parts. Recall that $A = U \Sigma_A V^{\top}$, and $h =\lVert \Sigma_A \lVert_2 $. 

Firstly we prove the lemma for OGDA.

\textbf{OGDA}
Writing $(\Bar{\mathcal{A}}-I)\Bar{X}_{t}$ into matrix form:
\begin{align*}
\begin{bmatrix}
0 & -2\eta \Sigma_A & 0 & \eta \Sigma_A \\
\\
2\eta \Sigma_A^{\top} & 0 & -\eta \Sigma_A^{\top} & 0\\
\\
I & 0& -I&0 \\
\\
0& I& 0&-I
\end{bmatrix} 
\begin{bmatrix}
U^{\top} x_{t-1} \\
\\
V^{\top} y_{t-1} \\
\\
U^{\top} x_{t-2} \\
\\
V^{\top} y_{t-2}
\end{bmatrix}
= \begin{bmatrix}
-2\eta \Sigma_A V^{\top} y_{t-1} + \eta \Sigma_A V^{\top} y_{t-2}\\
\\
2\eta \Sigma_A U^{\top} x_{t-1} - \eta \Sigma_A U^{\top} x_{t-2}\\
\\
U^{\top}x_{t-1} -U^{\top}x_{t-2} \\
\\
V^{\top}y_{t-1}-V^{\top}y_{t-2}
\end{bmatrix}
\end{align*}
Since there is a constant $c$ such that$\lVert(\Bar{\mathcal{A}}-I)\Bar{X}_{t} \lVert_2\leq c f(t) $, then 
\begin{align*}
\lVert 2\eta \Sigma_A U^{\top} x_{t-1} - \eta \Sigma_A U^{\top} x_{t-2} \lVert_2 \leq c f(t),\\
\\
\lVert U^{\top}x_{t-1} -U^{\top}x_{t-2} \lVert_2 \leq c f(t). 
\end{align*}
Using these two inequalities to bound $\lVert A^{\top}  x_{t}\lVert_2$, we have
 \begin{align*}
  &\lVert \eta \Sigma_A U^{\top} x_{t-1}\lVert_2\\
  \\
  =&\lVert 2\eta \Sigma_A U^{\top} x_{t-1} - \eta \Sigma_A U^{\top} x_{t-2} - \eta \Sigma_A (U^{\top} x_{t-1} - U^{\top} x_{t-2})\lVert_2\\
  \\
  \leq& \lVert 2\eta \Sigma_A U^{\top} x_{t-1} - \eta \Sigma_A U^{\top} x_{t-2}\lVert_2 + \lVert \eta \Sigma_A (U^{\top} x_{t-1} - U^{\top} x_{t-2})\lVert_2\\
  \\
  \leq & c f(t) 
  + \eta c f(t).\\
 \end{align*}
 Since $A^{\top}  x_{t}=V  \Sigma_A U^{\top} x_{t}$, then
 \begin{align*}
 \lVert A^{\top}  x_{t}\lVert_2
 &=\lVert V  \Sigma_A U^{\top} x_{t}\lVert_2 \\
 \\
 & \leq \lVert V \lVert_2 \lVert \Sigma_A U^{\top} x_{t}\lVert_2 \\
 \\
 & \leq \frac{ c h (1+\eta)f(t)}{\eta},
 \end{align*}
 where the last inequality is due to $ \lVert \Sigma_A \lVert_2 = h$ and $V$, $U$ are unitary matrices.
 Similarly, we can obtain $\lVert A y_{t}\lVert_2 \leq \frac{(1+\eta)c h f(t) }{\eta}$.

Next, we prove the lemma for extra-gradient.

\textbf{EG}
Writing $(\Bar{\mathcal{A}}-I)\Bar{X}_{t}$ into matrix form:
\begin{align*}
\begin{bmatrix}
-\alpha \gamma  \Sigma_A\Sigma_A^{\top} & -\alpha \Sigma_A  \\
\\
\alpha  \Sigma_A ^{\top} & -\alpha \gamma\Sigma_A^{\top} \Sigma_A 
\end{bmatrix}
\begin{bmatrix}
U^{\top} x_{t-1} \\
\\
V^{\top} y_{t-1} \\
\end{bmatrix}
= \begin{bmatrix}
-\alpha \gamma \Sigma_A \Sigma_A^{\top}  U^{\top} x_{t-1} - \alpha \Sigma_A V^{\top} y_{t-1}\\
\\
\alpha  \Sigma_A ^{\top} U^{\top} x_{t-1} - \alpha \gamma\Sigma_A^{\top} \Sigma_A V^{\top} y_{t-1}.
\end{bmatrix}
\end{align*}
Since there is a constant $c$ such that $\lVert(\Bar{\mathcal{A}}-I)\Bar{X}_{t}\lVert_2 \leq c f(t) $, then
\begin{align*}
\lVert - \gamma \Sigma_A \Sigma_A^{\top}  U^{\top} x_{t-1} -  \Sigma_A V^{\top} y_{t-1}\lVert_2  \leq \frac{c f(t)}{\alpha},\\
\lVert 
 \Sigma_A ^{\top} U^{\top} x_{t-1} - \gamma\Sigma_A^{\top} \Sigma_A V^{\top} y_{t-1}\lVert_2 \leq \frac{c f(t)}{\alpha}.
\end{align*}
Using these two inequalities to bound
 $\lVert A^{\top} x_{t}\lVert_2$, we have
 \begin{align*}
  &\lVert (\gamma^2  \Sigma_A^{\top}\Sigma_A+I)\Sigma_A^{\top}U^{\top} x_{t-1}\lVert_2 \\
  \\
  =&\lVert \Sigma_A ^{\top} U^{\top} x_{t-1} - \gamma\Sigma_A^{\top} \Sigma_A V^{\top} y_{t-1} -  \gamma\Sigma_A^{\top} \left(- \gamma \Sigma_A \Sigma_A^{\top}  U^{\top} x_{t-1} -  \Sigma_A V^{\top} y_{t-1}
  \right)\lVert_2\\
  \\
  \leq& \lVert \Sigma_A ^{\top} U^{\top} x_{t-1} - \gamma\Sigma_A^{\top} \Sigma_A V^{\top} y_{t-1} \lVert_2 +\gamma \lVert  \Sigma_A^{\top}  \lVert_2 \lVert - \gamma \Sigma_A \Sigma_A^{\top}  U^{\top} x_{t-1} -  \Sigma_A V^{\top} y_{t-1} \lVert_2\\
  \\
  \leq & \frac{(1+\gamma h) c f(t)}{\alpha}.
 \end{align*}
Since matrix $\gamma^2  \Sigma_A^{\top}\Sigma_A+I$ is invertible, then
\begin{align*}
\lVert \Sigma_A U^{\top} x_{t}\lVert_2
=&\lVert(\gamma^2  \Sigma_A^{\top}\Sigma_A+I)^{-1}(\gamma^2  \Sigma_A^{\top}\Sigma_A+I)\Sigma_A^{\top}U^{\top} x_{t-1}\lVert_2 \\
\\
\leq& \lVert (\gamma^2  \Sigma_A^{\top}\Sigma_A+I)^{-1} \lVert_2 
\lVert (\gamma^2  \Sigma_A^{\top}\Sigma_A+I)\Sigma_A^{\top}U^{\top} x_{t-1}\lVert_2\\
\\
\leq &\lVert (\gamma^2  \Sigma_A^{\top}\Sigma_A+I)\Sigma_A^{\top}U^{\top} x_{t-1}\lVert_2\\
\\
\leq& \frac{(1+\gamma h)c f(t)}{\alpha},
\end{align*}
where the last inequality is due to $\lVert (\gamma^2  \Sigma_A^{\top}\Sigma_A+I)^{-1} \lVert_2 \leq 1$. Since $A^{\top}  x_{t}=V  \Sigma_A U^{\top} x_{t}$, then
 \begin{align*}
 \lVert A^{\top}  x_{t}\lVert_2
 &=\lVert V  \Sigma_A U^{\top} x_{t}\lVert_2
 \leq \lVert V \lVert_2 \lVert \Sigma_A U^{\top} x_{t}\lVert_2
 \leq \frac{(1+\gamma h)c f(t)}{\alpha},
 \end{align*}
 where the last inequality is due to $V$ is  unitary matrices.
 Similarly, we can obtain 
 $$\lVert A y_{t}\lVert_2 \leq \frac{(1+\gamma h)}{\alpha}f(t) c.
 $$

Finally, we prove the lemma for negative momentum method.
 
\textbf{Negative Momentum Method}
Writing $(\Bar{\mathcal{A}}-I)\Bar{X}_{t}$ into matrix form:
\begin{align*}
&\begin{bmatrix}
\beta_1I & -\eta \Sigma_A  & -\beta_1 I & 0  \\
\\
\eta(1+\beta_1) \Sigma_A ^{\top} & - \eta^2  \Sigma_A^{\top}\Sigma_A & -\eta\beta_1\Sigma_A ^{\top} & -\beta_2 I\\
\\
I & 0& -I&0 \\
\\
0& I& 0&-I 
\end{bmatrix}
\begin{bmatrix}
U^{\top} x_{t-1} \\
\\
V^{\top} y_{t-1} \\
\\
U^{\top} x_{t-2} \\
\\
V^{\top} y_{t-2}
\end{bmatrix}
\\
&= \begin{bmatrix}
\beta_1 U^{\top} x_{t-1}-\eta \Sigma_A V^{\top} y_{t-1} -\beta_1 U^{\top}x_{t-2}\\
\\
\eta(1+\beta_1)\Sigma_A^{\top} U^{\top} x_{t-1}-\eta^2 \Sigma_A^{\top}\Sigma_A V^{\top} y_{t-1} -\eta\beta_1\Sigma_A ^{\top}U^{\top} x_{t-2}  - \beta_2 V^{\top} y_{t-2} \\
\\
U^{\top} x_{t-1}-U^{\top}x_{t-2} \\
\\
V^{\top} y_{t-1}-V^{\top}y_{t-2}
\end{bmatrix}
\end{align*}
Since there is a constant $c$ such that $\lVert(\Bar{\mathcal{A}}-I)\Bar{y}_{t} \lVert_2\leq c f(t) $, then 
\begin{align*}
\lVert \beta_1 U^{\top} x_{t-1}-\eta \Sigma_A V^{\top} y_{t-1} -\beta_1 U^{\top}x_{t-2}\lVert_2  \leq c f(t),\\
\\
\lVert 
 U^{\top} x_{t-1}-U^{\top}x_{t-2}\lVert_2 \leq c f(t).
\end{align*}
 Using these two inequalities to bound
 $\lVert A y_{t}\lVert_2$, we have
 \begin{align*}
    &\lVert \eta \Sigma_A V^{\top} y_{t-1} \lVert_2 \\
    \\
  =&\lVert \beta_1\left( U^{\top} x_{t-1}-U^{\top}x_{t-2}\right)-\left(\beta_1 U^{\top} x_{t-1}-\eta \Sigma_A V^{\top} y_{t-1} -\beta_1 U^{\top}x_{t-2}\right)\lVert_2\\
  \\
  \leq & \lVert \beta_1\left( U^{\top} x_{t-1}-U^{\top}x_{t-2}\right) \lVert_2
  + \lVert\beta_1 U^{\top} x_{t-1}-\eta \Sigma_A V^{\top} y_{t-1} -\beta_1 U^{\top}x_{t-2}\lVert_2\\
  \\
  \leq & \beta_1 c f(t) +c f(t).
 \end{align*}
  Since $A  y_{t}=U \Sigma_A V^{\top} y_{t}$, then
 \begin{align*}
 \lVert A  y_{t}\lVert_2
 &=\lVert  U \Sigma_A V^{\top}  y_{t}\lVert_2
 \leq \lVert  U^{\top}\lVert_2 \lVert \Sigma_A V^{\top}  y_{t}\lVert_2
 \leq \frac{ c(1+\beta_1)f(t)}{\eta},
 \end{align*}
 where the last inequality is due to $\lVert U \lVert_2 =1$.

 We also have 

 \begin{align*}
     \lVert \eta(1+\beta_1)\Sigma_A^{\top} U^{\top} x_{t-1}-\eta^2 \Sigma_A^{\top}\Sigma_A V^{\top} y_{t-1} -\eta\beta_1\Sigma_A ^{\top}U^{\top} x_{t-2}  - \beta_2 V^{\top} y_{t-2} \lVert_2 \leq c f(t) 
 \end{align*}
 and

  \begin{align*}
 \lVert V^{\top} y_{t-1}-V^{\top}y_{t-2} \lVert_2  \leq c f(t).
 \end{align*}
 
 Then, 
\begin{align*}
    &\lVert \eta \Sigma_A \Sigma_A^{\top} U^{\top} x_{t-1}\lVert_2\\
    \\
=&\lVert \Sigma_A \cdot
( [\eta(1+\beta_1)\Sigma_A^{\top} U^{\top} x_{t-1}-\eta^2 \Sigma_A^{\top}\Sigma_A V^{\top} y_{t-1} -\eta\beta_1\Sigma_A ^{\top}U^{\top} x_{t-2} 
  - \beta_2 V^{\top} y_{t-2} ] \\
 & - \eta \Sigma_A^{\top}\left(\beta_1 U^{\top} x_{t-1}-\eta \Sigma_A V^{\top} y_{t-1} -\beta_1 U^{\top}x_{t-2}\right) + \beta_2 \Sigma_A^{\top} V^{\top} y_{t-2} ) \lVert_2\\
  \\
\leq & \lVert \Sigma_A \lVert_2 c f(t) + \lVert \Sigma_A \lVert_2 c f(t) + \frac{\beta_2(1+\beta_1)}{\eta} c f(t)\\
\\
\leq & \left(2h + \frac{\beta_2(1+\beta_1)}{\eta}\right) c f(t)
\end{align*}
Since  $\lVert \Sigma_A \lVert_2 =h$, then
$\lVert \Sigma_A^{\top} U^{\top} x_{t}\lVert_2 \leq \frac{2 \eta h + \beta_2(1+\beta_1)}{\eta^2 h} c f(t)$.
Since $A^{\top}  x_{t}=V  \Sigma_A U^{\top} x_{t}$, then
 \begin{align*}
 \lVert A^{\top}  x_{t}\lVert_2
 &=\lVert V  \Sigma_A U^{\top} x_{t}\lVert_2 \\
 &\leq \lVert V \lVert_2 \lVert \Sigma_A U^{\top} x_{t}\lVert_2 \\
 & \leq \frac{2 \eta h + \beta_2(1+\beta_1)}{\eta^2 h} c f(t).
 \end{align*}
\end{proof}

Now we are ready to prove Theorem \ref{thm:Condition-Pertub}.

\begin{proof}[proof of Theorem \ref{thm:Condition-Pertub}]
 
 According to Lemma \ref{lm:norm bounded equivalence}, assumptions of Lemma \ref{lm:SVD convergence} have been satisfied by the difference equations associated to our learning dynamics, thus we have $\lVert  (\Bar{\mathcal{A}}- I)\Bar{X}_t \lVert_2$ converges to 0 with rate $f(t)$. Moreover, by Lemma \ref{lm:convergence equivalent} , $\Delta_t$ converges to 0 with rate $f(t)$ in OGD, EG and negative momentum method. We complete the proof. 
\end{proof}

 \section{Omitted Proofs from Theorem \ref{thm:EG-Pertub}}\label{extra_convergent}
 \EGPertub*

Recall that Extra Gradient satisfies the linear difference equation \eqref{eq:EG}, denote the iterative matrix in equation \eqref{eq:EG} 
with  payoff matrix $A_t$ as $\mathcal{A}_t$.
 According to the convergence of payoff matrix, we have $\lim_{t\rightarrow \infty}A_t=A$. Let $B_t=A_t-A$, then we have $\lim_{t\rightarrow \infty}B_t=0$. Denote $\mathcal{A}$ as the iterative matrix when payoff matrix is time invariant and equal to $A$. Let 
$$\mathcal{B}_t=\mathcal{A}_t-\mathcal{A}.$$
  To prove the theorem, we first establish several necessary lemmas.
\begin{lem}
    Given $\lim_{t\rightarrow \infty} B_t=0$, we have $\lim_{t\rightarrow \infty} \mathcal{B}_t=0$.
\end{lem}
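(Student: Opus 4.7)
The plan is straightforward: expand $\mathcal{B}_t = \mathcal{A}_t - \mathcal{A}$ block by block using $A_t = A + B_t$, and then bound the norm of each block by a linear (plus quadratic) expression in $\lVert B_t \lVert_2$, from which the conclusion follows by the hypothesis $\lim_{t \to \infty} \lVert B_t \lVert_2 = 0$.

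Concretely, I would substitute $A_t = A + B_t$ into the block form of the EG iterative matrix \eqref{eq:EG} and use the identities
\begin{align*}
A_tA_t^{\top} - AA^{\top} &= AB_t^{\top} + B_tA^{\top} + B_tB_t^{\top},\\
A_t^{\top}A_t - A^{\top}A &= A^{\top}B_t + B_t^{\top}A + B_t^{\top}B_t,
\end{align*}
to obtain the explicit expression
\begin{equation*}
\mathcal{B}_t =
\begin{bmatrix}
-\alpha\gamma(AB_t^{\top}+B_tA^{\top}+B_tB_t^{\top}) & -\alpha B_t \\
\alpha B_t^{\top} & -\alpha\gamma(A^{\top}B_t+B_t^{\top}A+B_t^{\top}B_t)
\end{bmatrix}.
\end{equation*}

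Next, I would use submultiplicativity and subadditivity of the operator norm to derive a bound of the form
\begin{equation*}
\lVert \mathcal{B}_t \lVert_2 \le c_1 \lVert B_t \lVert_2 + c_2 \lVert B_t \lVert_2^2,
\end{equation*}
where $c_1, c_2$ depend only on $\alpha$, $\gamma$, and $\lVert A \lVert_2$ (each block can be bounded separately, and the norm of the whole matrix is at most the sum of the norms of the blocks, or equivalently one can bound by $\sqrt{\sum \lVert \cdot \lVert_2^2}$; either is fine). Since $\lVert B_t \lVert_2 \to 0$ by hypothesis, both terms vanish as $t \to \infty$, which gives $\lim_{t\to\infty}\lVert \mathcal{B}_t \lVert_2 = 0$ and hence $\lim_{t\to\infty}\mathcal{B}_t = 0$.

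There is no real obstacle here; the lemma is purely a continuity statement about the map $A_t \mapsto \mathcal{A}_t$, which is a polynomial in $A_t$ of degree two and therefore locally Lipschitz. The only minor care needed is handling the quadratic term $B_tB_t^{\top}$ (and $B_t^{\top}B_t$), but since $\lVert B_t \lVert_2 \to 0$ implies $\lVert B_t \lVert_2$ is eventually bounded by any fixed constant, the quadratic term is dominated by the linear term for all sufficiently large $t$ and causes no difficulty.
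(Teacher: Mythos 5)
Your proposal is correct and follows essentially the same route as the paper: expand $\mathcal{B}_t=\mathcal{A}_t-\mathcal{A}$ blockwise via $A_tA_t^{\top}-AA^{\top}=AB_t^{\top}+B_tA^{\top}+B_tB_t^{\top}$ (and its transpose analogue), bound the blocks by subadditivity and submultiplicativity of $\lVert\cdot\lVert_2$, absorb the quadratic term using the eventual boundedness of $\lVert B_t\lVert_2$, and conclude $\lVert\mathcal{B}_t\lVert_2\le c\lVert B_t\lVert_2\to 0$. (Incidentally, your sign $-\alpha\gamma(A^{\top}B_t+B_t^{\top}A+B_t^{\top}B_t)$ in the lower-right block is the correct one; the paper's displayed expression has a harmless sign slip there that does not affect the norm bound.)
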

\begin{proof}
Recall that 
\begin{align*}
\mathcal{A}_t=
    \left[
\begin{array}{cc}
 I - \alpha\gamma A_tA_t^{\top}  & -\alpha A_t\\
 \\
\alpha A_t^{\top} & I - \gamma\alpha A_t^{\top}A_t
\end{array}
\right]
\end{align*}
and 
\begin{align*}
    \mathcal{A}=\left[
\begin{array}{cc}
 I - \alpha\gamma AA^{\top}  & -\alpha A\\
 \\
\alpha A^{\top} & I - \gamma\alpha A^{\top}A
\end{array}
\right],
\end{align*}
we can obtain that
\begin{align*}
{\mathcal{B}}_{t}&=
\begin{bmatrix}
-\alpha \gamma (A B_{t}^{\top} + B_{t} A^{\top} + B_{t} B_{t}^{\top}) & -\alpha B_{t}  \\
\\
\\
\alpha  B_{t}^{\top}  &\alpha \gamma (A^{\top} B_{t} + B_{t}^{\top} A + B_{t}^{\top} B_{t}) 
\end{bmatrix}\\
    &=\begin{bmatrix}
-\alpha \gamma (A B_{t}^{\top} + B_{t} A^{\top} + B_{t} B_{t}^{\top}) & 0  \\
\\
\\
 0 & 0
\end{bmatrix}+
\begin{bmatrix}
  0& 0 \\
\\
\\
 0 &\alpha \gamma (A^{\top} B_{t} + B_{t}^{\top} A + B_{t}^{\top} B_{t}) 
\end{bmatrix}\\&+
\begin{bmatrix}
0& -\alpha B_{t}  \\
\\
\\
0  &0
\end{bmatrix}+
\begin{bmatrix}
0& 0 \\
\\
\\
\alpha B_{t}^\top  &0
\end{bmatrix}.
\end{align*}
We separate ${\mathcal{B}}_{t}$ into four matrices and denote these matrices in right side of the equation as $H_1$, $H_2$ ,  $H_3$ and $H_4$, respectively. Then 
\begin{align*}
    \lVert \mathcal{B}_t \lVert_2\le \lVert H_1 \lVert_2 +\lVert H_2 \lVert_2+\lVert H_3 \lVert_2+\lVert H_4 \lVert_2.
\end{align*}
    Since $\lim_{t\rightarrow \infty} B_t=0$ , then $\lim_{t\rightarrow\infty} \lVert B_t\lVert_2=0$, so we can yield that there exists $c$ such that $\lVert B_t\lVert_2\le c$ for any t. We also assume that $c_1=\lVert A\lVert_2$.
    
    Then we have 
    \begin{align*}
        \lVert H_1 \lVert_2
        &=\alpha \gamma \lVert A B_{t}^{\top} + B_{t} A^{\top} + B_{t} B_{t}^{\top} \lVert_2\\
         \\
	&\leq \alpha \gamma\left(\lVert A \lVert_2 \lVert B_{t} \lVert_2 + \lVert A \lVert_2 \lVert B_{t} \lVert_2 +  \lVert B_{t} \lVert_2  \lVert B_{t} \lVert_2\right) \\
 \\
	&\leq \alpha \gamma (2c_1+c) \lVert B_{t} \lVert_2,
    \end{align*}
    Similarly, $\lVert H_2 \lVert_2 \leq \alpha\gamma(2c_1+c) \lVert B_{t} \lVert_2$.
	In addition, $\lVert H_3 \lVert_2 = \lVert H_4 \lVert_2 = \alpha \lVert B_{t} \lVert_2$.

 Then we can obtain that there exists a constant $c_2$, such that $\lVert \mathcal{B}_t \lVert_2\le c_2\lVert {B}_t \lVert_2$, which implies that $\lim_{t\rightarrow \infty} \mathcal{B}_t =0$.
\end{proof}
  With the lemma above, we directly utilize $\lim_{t\rightarrow \infty} \mathcal{B}_t =0$ in proving Theorem $\ref{thm:EG-Pertub}$.
\begin{lem}\label{extramono}
	Let  $ X_t=(x_t^\top,y_t^\top)^\top $, then there exists $t_0>0$, such that when $t>t_0$, $\norm{X_t}_2 $  is monotonically non-increasing. Moreover, $ \exists c_0 \ge 0,\ \lim_{t \to \infty}\norm{X_t}_2=c_0 $. 
\end{lem}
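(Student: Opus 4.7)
The plan is to exploit two structural features of the \ref{Extra Gradient} iterative matrix: it is normal, and under the step size assumption its spectral radius is eventually bounded by $1$. Combined, these facts force $\norm{X_t}_2$ to be eventually non-increasing, and the claimed limit $c_0$ then follows from the monotone convergence of a non-negative sequence.

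First, I would observe that for every $t$ the matrix $\mathcal{A}_t$ in \eqref{eq:EG} is normal. This is essentially the same computation as in Lemma \ref{lm:normalEG}: writing $\mathcal{A}_t$ as the block matrix with payoff $A_t$ and multiplying out $\mathcal{A}_t\mathcal{A}_t^\top$ and $\mathcal{A}_t^\top\mathcal{A}_t$, both reduce to the same block-diagonal expression in $A_tA_t^\top$ and $A_t^\top A_t$. For any normal matrix, the operator $2$-norm equals the spectral radius, so $\norm{\mathcal{A}_t}_2 = \rho(\mathcal{A}_t)$.

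Second, I would show that $\rho(\mathcal{A}_t) \le 1$ for all sufficiently large $t$. From the proof of Lemma \ref{lm:eig-static} (specifically \eqref{eq:discriminant-of-EG}), the eigenvalues of $\mathcal{A}_t$ are governed by the singular values $\sigma_i(A_t)$, and they lie in the closed unit disk as soon as $\alpha \sigma_i(A_t) \le 1$ for every $i$. Since $B_t \to 0$, we have $\sigma_{\max}(A_t) \le \sigma + \norm{B_t}_2 \to \sigma$, so there exists $t_0$ with $\sigma_{\max}(A_t) < 2\sigma$ for all $t > t_0$. Combined with the hypothesis $\alpha = \gamma < \tfrac{1}{2\sigma}$, this gives $\alpha \sigma_i(A_t) < 1$ for $t>t_0$, hence $\rho(\mathcal{A}_t)\le 1$ and therefore $\norm{\mathcal{A}_t}_2 \le 1$.

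Finally, for $t > t_0$ we obtain
\begin{equation*}
\norm{X_{t+1}}_2 \;=\; \norm{\mathcal{A}_t X_t}_2 \;\le\; \norm{\mathcal{A}_t}_2\,\norm{X_t}_2 \;\le\; \norm{X_t}_2,
\end{equation*}
so $\{\norm{X_t}_2\}_{t>t_0}$ is monotonically non-increasing. Being bounded below by $0$, it converges to some $c_0 \ge 0$ by the monotone convergence theorem. The only delicate point is the ``eventually'' quantifier in step two, namely choosing $t_0$ large enough that $\norm{B_t}_2$ is small enough to preserve the strict inequality $\alpha \sigma_{\max}(A_t) < 1$; since the step size hypothesis $\alpha < \tfrac{1}{2\sigma}$ leaves a strictly positive margin, this is straightforward.
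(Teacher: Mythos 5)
Your proposal is correct and follows essentially the same route as the paper's proof: use the normality of the EG iterative matrix so that $\norm{\mathcal{A}_t}_2$ equals its spectral radius, use $B_t \to 0$ and the margin $\alpha=\gamma<\tfrac{1}{2\sigma}$ to get $\norm{\mathcal{A}_t}_2 \le 1$ for all $t$ beyond some $t_0$, and then conclude monotone non-increase of $\norm{X_t}_2$ and convergence by the monotone convergence theorem. The only cosmetic difference is that the paper phrases the eigenvalue bound via the discriminant condition \eqref{eq:discriminant-of-EG} rather than via $\alpha\,\sigma_i(A_t)\le 1$ directly, which is an equivalent computation.
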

\begin{proof}
First we prove that there exists $t_0$ such that when $t>t_0$, $\lVert \mathcal{A}_t\lVert_2\le 1$.

    From the proof of Lemma \ref{lm:eig-static}, we know that if $\alpha=\gamma\le \frac{1}{\sigma_t}$, where $\sigma_t$ is the maximal singular value of payoff matrix $A_t$, then the discriminant in equation (\ref{eq:discriminant-of-EG}) is satisfied, i.e., $\lVert \mathcal{A}_t\lVert_2\le 1$. Here we choose $\alpha=\gamma\le \frac{1}{2\sigma}$, where $\sigma$ is the maximal singular value of payoff matrix $A$. Because $A_t$ converges to $A$, we can conclude that there exists $t_0$, such that when $t\ge t_0$, $\sigma_t<2\sigma$. This implies that $\alpha=\gamma\le \frac{1}{2\sigma}< \frac{1}{\sigma_t}$, which means $\lVert \mathcal{A}_t\lVert_2\le 1$. Therefore we prove that here exists $t_0$ such that when $t>t_0$, $\lVert X_t\lVert_2\le \lVert X_{t-1}\lVert_2$. For any $t>t_0$, we have
    \begin{align*}
        \lVert X_t\lVert_2&=\lVert \mathcal{A}_{t-1} X_{t-1}\lVert_2\\
        & \le \lVert \mathcal{A}_{t-1}\lVert_2 \lVert X_{t-1}\lVert_2\\
        & \le \lVert X_{t-1}\lVert_2.
    \end{align*}
    Therefore, we have that when $t>t_0$, $\lVert X_t\lVert_2$  is monotonically non-increasing.

    From the fact that $\lVert X_t\lVert_2$  is monotonically non-increasing and no smaller than 0, we obtain $ \exists c_0 \ge 0,\ \lim_{t \to \infty}\norm{X_t}_2=c_0 $.
\end{proof}
In fact, the property that $\lVert X_t\lVert_2$ is monotonically non-increasing is closely related to the iterative matrix of EG is normal, which causes part of the difference between EG and OGDA or negative momentum method.

\begin{lem}\label{extracompression}
	Decompose $ \BR^{n+m}=V_1\oplus V_2 $ where $ V_1 $ is the eigenspace of eigenvalue $ 1 $ of matrix $ \mathcal{A} $, $ V_1 $ and $ V_2 $ are mutually perpendicular. Define $ \lambda=\max_{s\neq 1, s \in \text{Eigenvalue} \mathcal{A}}\left\vert{s}\right\vert  $. Then if $  v\in V_2 $, $ \norm{\mathcal{A} v}_2\le \lambda \norm{v}_2 $. 
\end{lem}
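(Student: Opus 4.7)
The plan is to exploit the fact that $\mathcal{A}$ is a normal matrix: when the payoff is the fixed matrix $A$, the EG iterative matrix has exactly the block form used in Lemma \ref{lm:normalEG}, so the same direct calculation of $\mathcal{A}\mathcal{A}^\top$ and $\mathcal{A}^\top\mathcal{A}$ yields normality. A normal matrix is unitarily diagonalizable, its eigenvectors corresponding to distinct eigenvalues are orthogonal, and consequently any eigenspace together with its orthogonal complement provides an $\mathcal{A}$-invariant decomposition. This makes the stated orthogonal splitting $\mathbb{R}^{n+m}=V_1\oplus V_2$ well-posed.

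First I would verify that $V_2$ is $\mathcal{A}$-invariant. Take $v\in V_2$ and any $u\in V_1$, so $\mathcal{A} u=u$. Since $\mathcal{A}$ is normal, $u$ is also an eigenvector of $\mathcal{A}^\top$ with eigenvalue $1$, so
\[
\langle \mathcal{A} v,u\rangle=\langle v,\mathcal{A}^\top u\rangle=\langle v,u\rangle=0,
\]
which shows $\mathcal{A} v\in V_2$. Hence $\mathcal{A}|_{V_2}$ is a well-defined linear operator, and it inherits normality from $\mathcal{A}$ because $V_2$ is invariant under both $\mathcal{A}$ and $\mathcal{A}^\top$.

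Next I would choose an orthonormal basis $\{e_1,\dots,e_k\}$ of $V_2$ consisting of eigenvectors of $\mathcal{A}$ (such a basis exists by the spectral theorem for the normal operator $\mathcal{A}|_{V_2}$), with $\mathcal{A} e_i=s_i e_i$. By the definition of $V_2$, none of the $s_i$ equals $1$, so by Lemma \ref{lm:eig-static} (using $\alpha=\gamma<\tfrac{1}{2\sigma}$) every $|s_i|\le \lambda<1$. For $v=\sum_i c_i e_i\in V_2$, orthonormality gives
\[
\lVert \mathcal{A} v\rVert_2^2=\Bigl\lVert\sum_i c_i s_i e_i\Bigr\rVert_2^2=\sum_i |s_i|^2|c_i|^2\le \lambda^2\sum_i |c_i|^2=\lambda^2\lVert v\rVert_2^2,
\]
and taking square roots yields the claim.

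The only subtle point—really the reason the lemma works at all for EG but would fail for OGDA or NM without extra assumptions—is the normality of $\mathcal{A}$: without it, eigenspaces of different eigenvalues need not be orthogonal, and Jordan blocks can inflate the operator norm above the spectral radius. Once normality is in hand (and it is, via the block computation of Lemma \ref{lm:normalEG}), the rest is just the spectral theorem applied to the invariant subspace $V_2$.
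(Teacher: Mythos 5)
Your proof is correct and follows essentially the same route as the paper: both rely on the normality of the EG iterative matrix (Lemma \ref{lm:normalEG}), decompose $V_2$ into eigenspaces orthogonal to $V_1$, and bound $\lVert \mathcal{A}v\rVert_2$ eigencomponent-by-eigencomponent using $\lambda<1$ from Lemma \ref{lm:eig-static}. Your explicit check that $V_2$ is $\mathcal{A}$-invariant is a small addition the paper leaves implicit, but it does not change the argument (both write-ups share the same minor technicality that complex eigenvalues of the real normal matrix should formally be handled via complexification or real $2\times 2$ blocks, with the same bound resulting).
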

\begin{proof}
	Let $ W_s=\{v\in \BR^{n+m}\mid \mathcal{A}v=sv\} $, that is $ W_s $ is the eigenspace of eigenvalue $ s $ of $ \mathcal{A} $. Let $ V_1=W_1 $ and $ V_2= \oplus_{s\neq 1} W_s $. By $ \mathcal{A} $ is normal, we have $ \BR^{n+m}=V_1\oplus V_2 $. $ V_1 $ and $ V_2 $ are mutually perpendicular.
	
	 Then we only need to prove if $  v\in V_2 $, $ \norm{\mathcal{A} v}_2\le \lambda \norm{v}_2 $. From Lemma \ref{lm:eig-static}, we know that $ \lambda<1 $. Because 
  \begin{align*}
  V_2= \oplus_{s\neq 1, s  \in \text{Eigenvalue} \mathcal{A}} W_s ,
  \end{align*}  
  $ v $ can be decomposed as $ v=\sum_{s\neq 1 , s \in \text{Eigenvalue}\mathcal{A}} k_sw_s $, where $ w_s\in W_s$, $ k_s $ is the coefficient and for different $ s_1 $ and $ s_2 $ which are eigenvalues of $\mathcal{A}$, $ w_{s_1} $ and $ w_{s_2} $ are perpendicular. 
  
  Therefore $ \norm{v}_2^2=\sum_{s\neq 1,  s \in \text{Eigenvalue} \mathcal{A}} k_s^2\norm{w_s}_2^2 $. Then $ \mathcal{A}v=\sum_{s\neq 1,  s \in \text{Eigenvalue} \mathcal{A}} k_s\mathcal{A}w_s $, therefore we have
	\begin{align*}
	\norm{\mathcal{A}v}_2^2 
	&=\sum_{s\neq 1, s \in \text{Eigenvalue} \mathcal{A}} k_s^2\norm{\mathcal{A}w_s}_2^2\\
	&=\sum_{s\neq 1, s \in \text{Eigenvalue} \mathcal{A}} {\left\vert{s}\right\vert}^2 k_s^2\norm{w_s}_2^2\\
	&\le \lambda^2\sum_{s\neq 1, s \in \text{Eigenvalue} \mathcal{A}} k_s^2\norm{w_s}_2^2\\
	&=\lambda^2\norm{v}_2^2
	\end{align*}
	which means that $ \norm{\mathcal{A} v}_2\le \lambda \norm{v}_2 $, this complete the proof.
\end{proof}

Now we can decompose $ X_t=v_t^1+v_t^2 $ where $ v_t^1\in V_1 $ and $ v_t^2\in V_2 $. Similarly, we also decompose $\mathcal{B}_t X_t=w_t^1+w_t^2 $ where $ w_t^1\in V_1 $ and $ w_t^2\in V_2 $. 

\begin{lem}\label{lm:EG-Perturb3}
	If  $ \lim_{t \to \infty}\norm{v_t^2}_2=0 $, then $ \lim_{t \to \infty} (A^{\top}x_t,A y_t) = (\boldsymbol{0},\boldsymbol{0})$, which implies that $\lim_{t\rightarrow \infty}\Delta_t=0$. 
\end{lem}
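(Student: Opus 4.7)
The plan is to show that the linear functional $(x,y) \mapsto (A^\top x, Ay)$ annihilates the eigenspace $V_1$, so only the $V_2$-component of $X_t$ contributes to $\Delta_t$, and then the hypothesis $\|v_t^2\|_2 \to 0$ gives the result immediately by continuity.

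First I would unpack what it means for $(x,y)^\top$ to lie in $V_1 = \ker(\mathcal{A}-I)$ with
\[
\mathcal{A} \;=\; \begin{bmatrix} I-\alpha\gamma AA^\top & -\alpha A \\ \alpha A^\top & I-\alpha\gamma A^\top A \end{bmatrix}.
\]
The equation $\mathcal{A}(x,y)^\top=(x,y)^\top$ is equivalent to the two relations $\gamma A A^\top x + Ay = 0$ and $A^\top x - \gamma A^\top A y = 0$. Substituting the second into the first yields $(I+\gamma^2 AA^\top)Ay = 0$, and since $I+\gamma^2 AA^\top$ is strictly positive definite, this forces $Ay=0$; plugging back then gives $A^\top x = \gamma A^\top A y = 0$.

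Therefore every $v \in V_1$, written as $v = (v_x, v_y)^\top$, satisfies $A^\top v_x = 0$ and $Av_y = 0$. Decompose $X_t = v_t^1 + v_t^2$ along $V_1 \oplus V_2$ and write $v_t^2 = ((v_t^2)_x, (v_t^2)_y)^\top$. Then
\[
A^\top x_t = A^\top (v_t^2)_x, \qquad A y_t = A (v_t^2)_y,
\]
so
\[
\Delta_t = \lVert A^\top x_t \rVert_2 + \lVert A y_t \rVert_2 \;\le\; \lVert A \rVert_2 \bigl( \lVert (v_t^2)_x \rVert_2 + \lVert (v_t^2)_y \rVert_2 \bigr) \;\le\; \sqrt{2}\, \lVert A \rVert_2 \, \lVert v_t^2 \rVert_2.
\]
By the hypothesis $\lim_{t\to\infty}\lVert v_t^2 \rVert_2 = 0$, we conclude $\Delta_t \to 0$, and in fact $A^\top x_t \to 0$ and $Ay_t \to 0$ separately.

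I do not anticipate any real obstacle: the only nontrivial step is the algebraic verification that $V_1$ sits inside the common kernel of $x \mapsto A^\top x$ and $y \mapsto Ay$, which is a short linear-algebra computation using the invertibility of $I+\gamma^2 AA^\top$. Everything else is a direct application of operator-norm bounds and orthogonality of the $V_1 \oplus V_2$ decomposition guaranteed by the normality of $\mathcal{A}$ (already established in Lemma \ref{lm:normalEG} and used in Lemma \ref{extracompression}).
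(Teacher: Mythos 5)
Your proposal is correct and follows essentially the same route as the paper: both show that $V_1=\ker(\mathcal{A}-I)$ lies in the common kernel of $x\mapsto A^{\top}x$ and $y\mapsto Ay$ via the invertibility of $I+\gamma^{2}AA^{\top}$, so only the $V_2$-component of $X_t$ contributes to $(A^{\top}x_t,Ay_t)$, and the hypothesis $\lVert v_t^2\rVert_2\to 0$ finishes the argument. Your explicit bound $\Delta_t\le\sqrt{2}\,\lVert A\rVert_2\lVert v_t^2\rVert_2$ is a slightly more quantitative phrasing of the paper's concluding continuity step, but it is not a different approach.
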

\begin{proof}
	Let $ v_t^1=\left(\begin{array}{l}
x_t^1 \\
y_t^1
\end{array}\right)
 $, $ v_t^2=\left(\begin{array}{l}
x_t^2 \\
y_t^2
\end{array}\right) $, then by $ X_t=\left(\begin{array}{l}
x_t \\
y_t
\end{array}\right)=v_t^1+v_t^2 $, we have $ x_t=x_t^1+x_t^2 $ and $ y_t=y_t^1+y_t^2 $. 
First, we prove $ A^{\top}x_t^1 = 0$ and $A y_t^1=0 $. By $ v_t^1\in V_1 $, we have
\begin{align*}
\begin{bmatrix}
I - \alpha\gamma AA^{\top}  & -\alpha A\\
\\
\alpha A^{\top} & I - \alpha\gamma A^{\top}A
\end{bmatrix}
\begin{bmatrix}
x_t^1\\
\\
y_t^1
\end{bmatrix}
=
\begin{bmatrix}
x_t^1\\
\\
y_t^1
\end{bmatrix}
\end{align*}
that is 
\begin{align*}
&\begin{bmatrix}
x_t^1 - \alpha\gamma AA^{\top}x_t^1-\alpha Ay_t^1\\
\\
\alpha A^{\top}x_t^1 + y_t^1 - \alpha\gamma A^{\top}Ay_t^1
\end{bmatrix}
=
\begin{bmatrix}
x_t^1\\
\\
y_t^1
\end{bmatrix}\\
\\
\Longrightarrow 
&\begin{cases}
-\gamma AA^{\top}x_t^1-Ay_t^1=0,\\
A^{\top}x_t^1-\gamma A^{\top}Ay_t^1=0,
\end{cases}\\
\\
\Longrightarrow 
&\begin{cases}
A^{\top}x_t^1=0,\\ 
Ay_t^1=0, 
\end{cases}\\
\end{align*}
where the second double arrow symbols is due to $\gamma^2AA^{\top}+I$ is invertible.

According to $A^{\top}x_t^1=0$ and $A y_t^1=0$, 
we have
\begin{align*}
\begin{bmatrix}
A^{\top}x_t\\
A y_t
\end{bmatrix}
=
\begin{bmatrix}
A^{\top} & \\
&A
\end{bmatrix}
X_t & =\begin{bmatrix}
A^{\top} & \\
&A
\end{bmatrix}
(v_t^1+v_t^2) \\
&=
\begin{bmatrix}
A^{\top} & \\
&A
\end{bmatrix}
\begin{bmatrix}
x_t^1\\
y_t^1
\end{bmatrix}+
\begin{bmatrix}
A^{\top} & \\
&A
\end{bmatrix}
v_t^2\\
&=
\begin{bmatrix}
A^{\top} x_t^1\\
A y_t^1
\end{bmatrix}+
\begin{bmatrix}
A^{\top} & \\
&A
\end{bmatrix}
v_t^2 \\
&=
\begin{bmatrix}
A^{\top} & \\
&A
\end{bmatrix}
v_t^2.
\end{align*}
We can see that if $ \lim_{t \to \infty}\norm{v_t^2}_2=0 $, then $ \lim_{t \to \infty} (A^{\top}x_t,A y_t) = (0,0)$ .
\end{proof}

Now we are ready to prove Theorem \ref{thm:EG-Pertub}.

\begin{proof}[proof of Theorem \ref{thm:EG-Pertub}]
 According to Lemma \ref{lm:EG-Perturb3},
 we directly obtain $ \lim_{t \to \infty}\Delta_t =0$ if $\lim_{t \to \infty}\norm{v_t^2}_2=0 $.
In the following We prove $\lim_{t \to \infty}\norm{v_t^2}_2=0 $ by contradiction. Assuming that $\{\norm{v_t^2}_2\}_t  $ doesn't converge to 0, i.e.,   
\begin{align}\label{extraassume}
\exists \delta>0, \ \exists t_1,t_2,\cdots, \ s.t. \ \norm{v_{t_i}^2}_2>\delta
\end{align}
where  $t_i$ tends to $+\infty$ as $i \to \infty$. 	

Let $ \epsilon=\frac{1}{8}\delta (1-\lambda^2) $, then we can find such $ t_j $ that for any $ t>t_j $, $ \norm{X_t}_2^2-\norm{X_{t+1}}_2^2\le \epsilon $ by Lemma \ref{extramono} , latter we will prove that under the assumption \eqref{extraassume}, there exists $ t>t_j $ such that $  \norm{X_t}_2^2-\norm{X_{t+1}}_2^2> \epsilon $ which contradicts to Lemma $\ref{extramono}$.
Then we can also find a $ t_k\ge t_j $
such that for any $ t>t_k $, $$\norm{\mathcal{B}_{t}}_2 \le\min\{\frac{\delta (1-\lambda^2)}{8\norm{X_0}_2},\frac{\delta (1-\lambda^2)}{8\norm{X_0}_2^2}\} $$ by $ \lim_{t \to \infty}\mathcal{B}_t=0 $, and for any $ t_s\ge t_k $, $ \norm{v_{t_s}^2}_2>\delta $. We choose such a $ t_s $ and denote it as $ t $. 

Now we give the bound for $ \norm{\mathcal{B}_tX_t}_2 $, $ \norm{w_t^1}_2 $ and $ \norm{w_t^2}_2 $,
\begin{align*}
\norm{\mathcal{B}_tX_t}_2
&\le \norm{\mathcal{B}_t}_2\cdot\norm{X_t}_2\\
&\le \min\{\frac{\delta (1-\lambda^2)}{8\norm{X_0}_2},\frac{\delta (1-\lambda^2)}{8\norm{X_0}_2^2}\} \cdot \norm{X_0}_2\\
&=\min\{\frac{\delta (1-\lambda^2)}{8},\frac{\delta (1-\lambda^2)}{8\norm{X_0}_2}\}\\
&\le \frac{\delta (1-\lambda^2)}{8},
\end{align*}
where the second inequality comes from Lemma  \ref{extramono} .
Together with $ \norm{w_t^1}_2 $ and $ \norm{w_t^2}_2 $ are perpendicular, which implies $ \norm{\mathcal{B}_tX_t}_2^2=\norm{w_t^1}_2^2+ \norm{w_t^2}_2^2$, for $ i=1,2 $, we have
\begin{align*}
\norm{w_t^i}_2
\le \norm{\mathcal{B}_tX_t}_2
\le \frac{\delta (1-\lambda^2)}{8\norm{X_0}_2}.
\end{align*}
Now we try to determine the relationship between $ X_{t+1} $ and $ X_t $,
\begin{align*}
X_{t+1}
&=(\mathcal{A}+\mathcal{B}_t)X_t\\
\\
&=\mathcal{A}X_t+\mathcal{B}_tX_t\\
\\
&=\mathcal{A}(v_t^1+v_t^2)+w_t^1+w_t^2\\
\\
&=(v_t^1+w_t^1)+(\mathcal{A}v_t^2+w_t^2),
\end{align*}
where $ v_t^1+w_t^1\in V_1 $ and $ \mathcal{A}v_t^2+w_t^2\in V_2 $, so $ v_t^1+w_t^1$ and $ \mathcal{A}v_t^2+w_t^2 $ are perpendicular, and 
\begin{align*}
\norm{X_{t+1}}_2^2
&=\norm{v_t^1+w_t^1}_2^2+\norm{\mathcal{A}v_t^2+w_t^2 }_2^2\\
\\
&\le \norm{v_t^1}_2^2+\norm{w_t^1}_2^2+2\norm{v_t^1}_2\norm{w_t^1}_2+
\norm{\mathcal{A}v_t^2}_2^2+\norm{w_t^2}_2^2+2\norm{\mathcal{A}v_t^2}_2\norm{w_t^2}_2\\
\\
&\le \norm{v_t^1}_2^2+\lambda^2\norm{v_t^2}_2^2+\norm{w_t^1}_2^2+\norm{w_t^2}_2^2+2\norm{v_t^1}_2\norm{w_t^1}_2+2\lambda\norm{v_t^2}_2\norm{w_t^2}_2\\
\\
&\le \norm{v_t^1}_2^2+\lambda^2\norm{v_t^2}_2^2+\norm{\mathcal{B}_tX_t}_2^2+2\norm{X_0}_2\frac{\delta (1-\lambda^2)}{8\norm{X_0}_2}+2\lambda\norm{X_0}_2\frac{\delta (1-\lambda^2)}{8\norm{X_0}_2}\\
\\
&\le \norm{v_t^1}_2^2+\lambda^2\norm{v_t^2}_2^2+\frac{\delta (1-\lambda^2)}{8}+2\frac{\delta (1-\lambda^2)}{8}+2\lambda\frac{\delta (1-\lambda^2)}{8}\\
\\
&\le \norm{v_t^1}_2^2+\lambda^2\norm{v_t^2}_2^2+\frac{5}{8}\delta (1-\lambda^2).
\end{align*}
The second inequality comes from Lemma $ \ref{extracompression} $, while the third inequality comes from $\norm{\mathcal{B}_tX_t}_2^2=\norm{w_t^1}_2^2+\norm{w_t^2}_2^2$ and the upper bound for $\norm{w_t^1}_2^2$ and $\norm{w_t^2}_2^2$. 
Then we conclude that
\begin{align*}
\norm{X_t}_2^2-\norm{X_{t+1}}_2^2
&\ge (\norm{v_t^1}_2^2+\norm{v_t^2}_2^2)-(\norm{v_t^1}_2^2+\lambda^2\norm{v_t^2}_2^2+\frac{5}{8}\delta (1-\lambda^2))\\
&\ge \delta (1-\lambda^2)-\frac{5}{8}\delta (1-\lambda^2)\\
&= \frac{3}{8}\delta (1-\lambda^2)>\frac{1}{8}\delta (1-\lambda^2)=\epsilon,
\end{align*}
where a  contradiction appears.

This completes the proof.
\end{proof}

\section{More Experiments}\label{meme}

   We provide additional experiments to demonstrate the behaviors of the optimistic gradient and negative momentum methods in convergent perturbed games that do not satisfy the BAP assumption. The numerical results reveal cases where optimistic gradient/momentum method converge and cases where they do not converge.
     
    In the same setting as the experiments on Theorem 3.4, we find that both optimistic gradient descent ascent and negative momentum method converge as shown in Figure \eqref{0}.

    However, there are other cases in which these two algorithms do not converge. In Figure \eqref{1}, we present one such example. Here the payoff matrix is chosen as $A = [ [1,0],[0,0]], B = [[0,8],[0,0]]$ and 

\begin{align}\label{p2mg}
 A_t=
\begin{cases}
A, & t \textnormal{\ \ is \ odd} \\ 
A + (1/t^{0.1}) * B , & t\textnormal{\ \ is \ even}
\end{cases}.
\end{align}

 In Figure \eqref{1}, the numerical results show when using a step size of $0.015$, optimistic gradient and negative momentum algorithms will diverge, but extra gradient will converge. Based on these numerical results, we believe that beyond the setting that satisfies the BAP assumption, there exists a more complex  dynamical behaviors of optimistic gradient and negative momentum methods, which presents an interesting question for future exploration. 

  \begin{figure}[H]
    \centering 
    \includegraphics[width=0.4\textwidth]{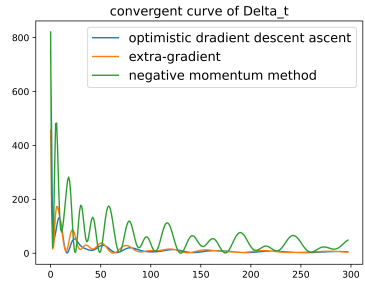}
    \caption{ Function curves of $\Delta_t$ for one game presented in  experiment of Theorem 3.4. in the paper. All these three algorithms converge. }
    \label{0}
    \end{figure}

    \begin{figure}[H]
    \centering 
    \includegraphics[width=0.4\textwidth]{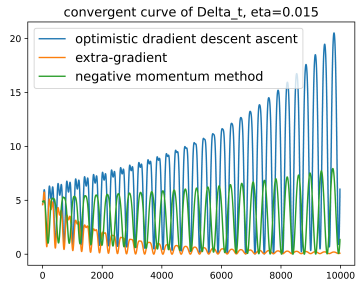}
    \caption{ Function curves of $\Delta_t$. When using step size = 0.015, extra-gradient converges, while both
optimistic gradient descent ascent and negative momentum method diverge.}
    \label{1}
    \end{figure}

\end{document}